\definecolor{mylinkcolor}{rgb}{0,0,0.4}
\newcommand{\indep}{\rotatebox[origin=c]{90}{$\models$}}
\newcommand{\cC}{{\rm C}}
\newcommand{\qQ}{{\rm Q}}
\newcommand{\gG}{{\rm G}}
\newcommand{\cE}{\mathcal{E}}
\newcommand{\cF}{\mathcal{F}}
\newcommand{\cH}{\mathcal{H}}
\newcommand{\cI}{\mathcal{I}}
\newcommand{\cM}{\mathcal{M}}
\newcommand{\cP}{\mathcal{P}}
\newcommand{\cS}{\mathcal{S}}
\newcommand{\iI}{{\rm I}}
\newcommand{\comment}[1]{}
\newtheorem{thm}{\bf Theorem}
\newtheorem{prop}[thm]{\bf Proposition}
\newtheorem{lemma}[thm]{\bf Lemma}
\newtheorem*{thm*}{Theorem}
\newtheorem*{prop*}{Proposition}
\newtheorem*{lemma*}{Lemma}
\newtheorem*{cor*}{Corollary}
\newtheorem*{conj*}{Conjecture}
\newtheorem*{idea*}{Idea}
\newtheorem*{remark*}{Remark}
\theoremstyle{definition}
\newtheorem{example}{Example}
\newtheorem{definition}[thm]{Definition}
\begin{document}

\title{Analysing causal structures with entropy}

\date{$29^{\text{th}}$ November 2017}

\author{Mirjam Weilenmann}
\email{msw518@york.ac.uk}
\address{Department of Mathematics, University of York,
  Heslington, York, YO10 5DD, UK.}
\author{Roger Colbeck}
\email{roger.colbeck@york.ac.uk}
\address{Department of Mathematics, University of York,
  Heslington, York, YO10 5DD, UK.}

\begin{abstract} 
  A central question for causal inference is to decide whether a set
  of correlations fit a given causal structure. In general, this
  decision problem is computationally infeasible and hence several
  approaches have emerged that look for certificates of
  compatibility. Here we review several such approaches based on
  entropy. We bring together the key aspects of these entropic
  techniques with unified terminology, filling several gaps and
  establishing new connections regarding their relation, all
  illustrated with examples.  We consider cases where unobserved
  causes are classical, quantum and post-quantum and discuss what
  entropic analyses tell us about the difference. This has
  applications to quantum cryptography, where it can be crucial to
  eliminate the possibility of classical causes. We discuss the
  achievements and limitations of the entropic approach in comparison
  to other techniques and point out the main open problems.
\end{abstract}

\maketitle

\section{Introduction}
Deciding whether a causal explanation is compatible with given
statistical data and exploring whether it is the most suitable
explanation for the data at hand are central scientific tasks.
Sometimes the most reasonable explanation of a set of observations
involves unobserved common causes.  In the case where the common
causes are classical, the well-developed machinery of Bayesian
networks can be used~\cite{Pearl2009, Spirtes2000}. In principle, such
networks are well-understood and it is known how to check whether
observed correlations are compatible with a given
network~\cite{Geiger1999}.  In practice, however, testing
compatibility for networks that involve unobserved systems is only
computationally tractable for small cases~\cite{Garcia2005,Lee2015}.
Furthermore, the methodology has to be adapted whenever non-classical
common causes are permitted.

Finding good heuristics to help identify correlations that are
(in)compatible with a causal structure is currently an active area of
research~\cite{Chaves2012, Chaves2013, Fritz2013, Chaves2014,
  Chaves2014b, Henson2014, Steudel2015, Chaves2015, Rosset2015,
  Chaves2015a, Pienaar2015, Chaves2016, Pienaar2016, Wolfe2016,
  Kela2017} and the use of entropy measures is common to many of
these~\cite{Chaves2012, Chaves2013, Fritz2013, Chaves2014,
  Chaves2014b, Henson2014, Steudel2015, Chaves2015, Chaves2016,
  Pienaar2016, Kela2017}.  Such methods are important in the quantum
context, where recent cryptographic protocols rely on the lack of a
classical causal explanation for certain quantum correlations in
specified causal
structures~\cite{Ekert1991,Mayers1998,Barrett2005b,Acin2006,Colbeck2009,
  Colbeck2011,Pironio2010,Vazirani2014,Miller2014}, an idea that lies
behind Bell's theorem~\cite{Bell1964} (see also~\cite{Wood2012}).

In Section~\ref{sec:entropicappr} of this article, we review the
entropic characterisation of the correlations compatible with causal
structures in classical, quantum and more general non-signalling
theories. We detail refinements of the approach based on
post-selection in Section~\ref{sec:post-selection}.  Together, these
sections show the current capabilities of entropic techniques, also
establishing and clarifying connections between different
contributions. Our review is illustrated with several examples to
assist its understanding and to make it easily accessible for
applications. In Section~\ref{sec:further_techniques} we outline and
compare further approaches to the problem, before concluding in
Section~\ref{sec:conclusion} with some open questions.

\section{Entropy vector approach}\label{sec:entropicappr}
Characterising the joint distributions of a set of random variables or
alternatively considering a multi-party quantum state in terms of its
entropy (and of those of its marginals) has a tradition in information
theory, dating back to Shannon~\cite{Shannon1948, Yeung1997,
  Pippenger2003}.  However, only recently has this approach been
extended to account for causal structure~\cite{Chaves2012, Fritz2013}.
In Sections~\ref{sec:classicalcone} and~\ref{sec:causal} respectively,
we review this approach with and without imposing causal
constraints. All our considerations are concerned with discrete random
variables, for extensions of the approach to continuous random
variables (and its limitations) we refer to~\cite{Chan2003,
  Fritz2013}.

\subsection{Classical entropy cone} \label{sec:classicalcone}
The entropy cone for a joint distribution of $n$ random variables was
introduced in~\cite{Yeung1997}. It is defined in terms of the
\emph{Shannon entropy}~\cite{Shannon1948}, which for a discrete random
variable $X$ taking values $x \in \mathcal{X}$ with probability
distribution $P_X$ is defined by
$$H(X) := - \sum_{x \in \mathcal{X}} P_X(x) \log_\mathrm{2}
P_X(x)\, ,$$
where $0 \log_\mathrm{2} (0)$ is taken to be $0$\footnote{Note 
$\lim_{p \rightarrow 0^{+}} p \log_\mathrm{2}p=0$}.

For a set of $n\geq 2$ jointly distributed random variables,
$\Omega:=\left\{X_{1},~X_{2},~\ldots~,~X_{n}\right\}$, we denote their
probability distribution as $P \in \mathcal{P}_{n}$, where $\cP_{n}$
is the set of all probability mass functions of $n$ jointly
distributed random variables. The Shannon entropy maps any subset of
$\Omega$ to a non-negative real value:
${ H: \mathscr{P}(\Omega) \rightarrow \left[0, \infty \right), \
X_S \mapsto H(X_S),}$
where $\mathscr{P}(\Omega)$ denotes the power set of $\Omega$, and $H(\{\})=0$.
The entropy of the joint distribution of the random variables $\Omega$
and of all its marginals can be expressed as components of
a vector in $\mathbbm{R}_{\geq 0}^{2^n-1}$, ordered in the following
as\footnote{Since the empty set always has zero entropy, we choose to omit it
from the entropy vector in this work.}
$$(H(X_1),~H(X_2),~\ldots~,~H(X_n), H(X_1X_2),~H(X_1X_3),~\ldots~,~H(X_1 X_2~\ldots~X_n)).$$
We use
${\bf H}(P) \in \mathbb{R}^{2^{n}-1}_{\geq 0}$ to denote the vector
corresponding to a particular distribution $P \in \mathcal{P}_{n}$.
The set of all such vectors is
$\Gamma^*_n:= \left\{v \in \mathbbm{R}_{\geq 0}^{2^{n}-1} \mid \exists
  P \in \mathcal{P}_n \text{ s.t. } v={\bf{H}}(P) \right\}.$ Its
closure $\overline{\Gamma^*_n}$ includes vectors $v$ for which there
exists a sequence of distributions $P_k\in\cP_n$ such that
${\bf H}(P_k)$ tends to $v$ as $k\rightarrow \infty$.  It is known
that the \emph{entropy cone} $\overline{\Gamma^*_n}$ is a convex cone
for any $n\in \mathbb{N}$~\cite{Zhang1997}.  As such, its boundary may
be characterised in terms of (potentially infinitely many) linear
inequalities. Because $\overline{\Gamma^*_n}$ is difficult to
characterise, we will in the following consider various
approximations.

\subsubsection{Outer approximation: the Shannon cone}
The standard outer approximation to $\overline{\Gamma^*_n}$ is the
polyhedral cone constrained by the \emph{Shannon inequalities} listed
in the following\footnote{It is a matter of convention, whether
  $H(\{\})=0$ is included as a Shannon inequality; we keep this
  implicit.}:
\begin{itemize}
\item Monotonicity: For all $X_T$, $X_S \subseteq \Omega$,
${H(X_S\setminus X_T) \leq H(X_S)}$.
\item Submodularity: For all $X_S$, $X_T \subseteq \Omega$,
${H(X_S \cap X_T) +  H(X_{S} \cup X_{T})  \leq  H(X_{S}) + H(X_{T})}$.
\end{itemize}

These inequalities are always obeyed by the entropies of a set of
jointly distributed random variables. They may be concisely rewritten
in terms of the following information measures: the \emph{conditional
  entropy} of two jointly distributed random variables $X$ and $Y$,
${H(X|Y):=H(XY)-H(Y)}$, their \emph{mutual information},
${I(X:Y):=H(X)+H(Y)-H(XY)}$, and the \emph{conditional mutual
  information} between two jointly distributed random variables $X$
and $Y$ given a third, $Z$, denoted
${I(X:Y|Z):=H(XZ)+H(YZ)-H(Z)-H(XYZ)}$. Hence, the monotonicity
constraints correspond to positivity of conditional entropy,
$H(X_S\cap X_T|X_S\setminus X_T)\geq 0$, and submodularity is
equivalent to positivity of the conditional mutual information,
${ I(X_S \setminus X_T :X_T \setminus X_S |X_S \cap X_T) \geq 0 }$.
The monotonicity and submodularity constraints can all be generated
from a minimal set of $n + n (n-1) 2^{n-3}$
inequalities~\cite{Yeung1997}: for the monotonicity constraints it is
sufficient to consider the $n$ constraints with $X_S=\Omega$ and
$X_T=X_i$ for some $X_i\in\Omega$; for the submodularity constraints
it is sufficient to consider those with $X_S \setminus X_T=X_i$ and
$X_T \setminus X_S=X_j$ with $i<j$ and where $X_U:=X_S \cap X_T$ is
any subset of $\Omega$ not containing $X_i$ or $X_j$, i.e.,
submodularity constraints of the form ${ I(X_i :X_j |X_U) \geq 0 }$.

These $n+n(n-1)2^{n-3}$ independent Shannon inequalities can be
expressed in terms of a ${(n+n(n-1)2^{n-3})\times(2^{n}-1)}$
dimensional matrix, which we call $M^{n}_\mathrm{SH}$, such that for
any $v\in\Gamma^{*}_n$, the conditions
$M^{n}_\mathrm{SH}\cdot v\geq 0$ hold\footnote{This condition is to be
  interpreted as the requirement that each component of
  $M^{n}_\mathrm{SH}\cdot v$ is non-negative.}.  More generally, for
$v\in\mathbb{R}_{\geq0}^{2^{n}-1}$, a violation of
$M^{n}_\mathrm{SH}\cdot v\geq 0$ certifies that there is no
distribution $P\in\cP_n$ such that $v={\bf H}(P)$. It follows that the
\emph{Shannon cone},
${\Gamma_n:=\left\{v\in\mathbbm{R}_{\geq 0}^{2^{n}-1}\mid
    M^n_{\mathrm{SH}}\cdot v\geq 0\right\}}$,
is an outer approximation of the set of achievable entropy vectors,
$\Gamma_n^*$~\cite{Yeung1997}.

\begin{example}\label{example:3shannon}
The three variable Shannon cone is 
$\Gamma_3=\left\{v\in\mathbbm{R}_{\geq 0}^7\mid M^3_{\mathrm{SH}}\cdot v\geq 0\right\}$, where
\begin{equation*}
M^{3}_\mathrm{SH}= \left( \begin{array}{ccccccc}
0 & 0 & 0 & 0 & 0 & -1 & 1 \\
0 & 0 & 0 & 0 & -1 & 0 & 1 \\
0 & 0 & 0 & -1 & 0 & 0 & 1 \\
1 & 1 & 0 & -1 & 0 & 0 & 0 \\
1 & 0 & 1 & 0 & -1 & 0 & 0 \\
0 & 1 & 1 & 0 & 0 & -1 & 0 \\
-1 & 0 & 0 & 1 & 1 & 0 & -1 \\
0 & -1 & 0 & 1 & 0 & 1 & -1 \\
0 & 0 & -1 & 0 & 1 & 1 & -1 
\end{array} \right) .
\end{equation*}
The first three rows are monotonicity constraints, the remaining six ensure submodularity.
\end{example}

\subsubsection{Beyond the Shannon cone}\label{sec:nonShan}
For two variables the Shannon cone coincides with the actual entropy
cone, $\Gamma_{2}= \Gamma_{2}^{*}$, while for three random variables
this holds only for the closure of the entropic cone, i.e.\
$\Gamma_{3}= \overline{\Gamma_{3}^*}$ but
$\Gamma_{3}\neq\Gamma_{3}^{*}$~\cite{Han81,Zhang1997}. For $n \geq 4$
further independent constraints on the set of entropy vectors are
needed to fully characterise $\overline{\Gamma_{n}^*}$, the first of
which was discovered in~\cite{Zhang1998}.

\begin{prop}[Zhang \& Yeung] \label{prop:zhangyeung} For any four discrete random variables $T$, $U$, $V$ and $W$ the following inequality holds:
$- H(T) - H(U) -\frac{1}{2} H(V)  + \frac{3}{2}H(TU) + \frac{3}{2}H(TV) 
+ \frac{1}{2}H(TW) + \frac{3}{2}H(UV)  + \frac{1}{2}H(UW) 
 -\frac{1}{2}H(VW) - 2 H(TUV) - \frac{1}{2}H(TUW) \geq 0$.
\end{prop}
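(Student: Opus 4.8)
The plan is to rewrite the inequality as a bound on mutual informations and then prove that bound using the auxiliary-variable (``copy'') construction that underlies all known proofs of non-Shannon inequalities. Collecting the entropies into information measures --- using $-H(T)-H(U)+H(TU)=-I(T:U)$, $H(TV)+H(UV)-H(V)-H(TUV)=I(T:U|V)$, $H(TW)+H(UW)-H(W)-H(TUW)=I(T:U|W)$, $H(V)+H(W)-H(VW)=I(V:W)$ and $H(V)+H(TU)-H(TUV)=I(V:TU)$ --- one checks that, after multiplying through by $2$, the claimed inequality is equivalent to
\begin{equation*}
2\,I(T:U)\ \leq\ I(V:W)+I(V:TU)+3\,I(T:U|V)+I(T:U|W).
\end{equation*}
No nonnegative combination of the monotonicity and submodularity constraints of Section~\ref{sec:classicalcone} can yield this --- it is a genuinely non-Shannon inequality --- so the proof has to step outside the cone $\Gamma_4$ by adjoining auxiliary random variables and then eliminating them again.

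The key tool is the \emph{copy lemma}. Given the distribution $P_{TUVW}$, adjoin two new random variables $T',U'$ by
\begin{equation*}
P_{TUVWT'U'}(t,u,v,w,t',u')\ :=\ P_{TUVW}(t,u,v,w)\,P_{TU|V}(t',u'|v),
\end{equation*}
so that $(T',U')$ is an independent ``resampling'' of $(T,U)$ from the conditional given $V$. By construction $(T',U',V)$ has exactly the distribution of $(T,U,V)$, giving the equalities $H(T')=H(T)$, $H(U')=H(U)$, $H(T'V)=H(TV)$, $H(U'V)=H(UV)$, $H(T'U')=H(TU)$ and $H(T'U'V)=H(TUV)$; and $T'U'$ is conditionally independent of $TUW$ given $V$, so $I(T'U':TUW|V)=0$, which in particular forces the Markov relation $T'U'\!-\!V\!-\!W$ (hence $H(T'U'VW)=H(TUV)+H(VW)-H(V)$) and $I(T'U':TU|VW)=0$.

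It remains to combine Shannon inequalities among the six variables $\{T,U,V,W,T',U'\}$ with these equalities so that every entropy term containing $T'$ or $U'$ cancels, leaving exactly the target. Concretely, one intersects the $(2^6-1)$-dimensional Shannon cone $\Gamma_6$ with the hyperplanes expressing $(T',U',V)\sim(T,U,V)$ and $I(T'U':TUW|V)=0$, and performs a Fourier--Motzkin elimination of the coordinates containing $T'$ or $U'$; the displayed inequality is one facet of the projection. Equivalently --- and this is the clean way to write the proof --- there is a short explicit sum of submodularity constraints on $\{T,U,V,W,T',U'\}$, together with a few monotonicities, that telescopes modulo the copy-lemma equalities to the inequality above. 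I expect \emph{finding} that sum to be the main obstacle: the chain that first suggests itself --- bounding $I(T:U)\leq I(T:V)+I(T:U|V)$ and using $I(T':U')=I(T:U)$ --- yields only $2\,I(T:U)\leq 2\,I(V:TU)+2\,I(T:U|V)$, which is too weak, so the combination must genuinely exploit the Markov relation $T'U'\!-\!V\!-\!W$ to produce the $I(V:W)$ and $I(T:U|W)$ terms. Once a candidate combination is in hand its verification is routine --- check that all coefficients are nonnegative and that the entropy bookkeeping closes --- and, since the construction assumed nothing about $P_{TUVW}$ beyond its being a distribution, the inequality then holds for all four discrete random variables, hence on all of $\overline{\Gamma_4^*}$.
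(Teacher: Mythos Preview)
The paper does not actually prove Proposition~\ref{prop:zhangyeung}; it is stated as a cited result from~\cite{Zhang1998} and used as background for the discussion of non-Shannon inequalities. So there is no ``paper's own proof'' to compare against, and the relevant question is simply whether your proposal is a proof.

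Your strategy is the right one: the rewriting in terms of $I(V:W)+I(V:TU)+3I(T:U|V)+I(T:U|W)\geq 2I(T:U)$ is correct, and the copy construction you describe --- adjoining $(T',U')$ with $P_{TUVWT'U'}=P_{TUVW}\,P_{TU|V}(\cdot,\cdot|V)$ so that $(T',U',V)\sim(T,U,V)$ and $I(T'U':TUW|V)=0$ --- is exactly the mechanism behind Zhang and Yeung's original proof and the later ``copy lemma'' systematisations~\cite{Zhang1998,Dougherty2011,Kaced2013}. The Markov relation $T'U'\!-\!V\!-\!W$ is precisely what is needed to generate the $I(V:W)$ and $I(T:U|W)$ terms.

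However, the proposal is not a proof, and you say so yourself: you identify ``finding that sum'' of Shannon inequalities on six variables as ``the main obstacle'' and do not supply it. Everything hinges on exhibiting an explicit nonnegative combination of submodularity constraints on $\{T,U,V,W,T',U'\}$ that, after substituting the copy-lemma equalities, collapses to the target. Saying that such a combination exists and that its verification would be routine is not the same as producing it; without that step the argument is a sketch of where the proof lives, not the proof. If you want to close the gap, either carry out the Fourier--Motzkin elimination you mention (on the Shannon cone $\Gamma_6$ intersected with the copy constraints), or write out the explicit chain --- Zhang and Yeung's original paper gives one, and~\cite{Dougherty2011} gives streamlined versions.
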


For $n\geq4$ the convex cone
$\overline{\Gamma_{n}^*}\subsetneq\Gamma_{n}$ is not polyhedral, i.e.,
it cannot be characterised by finitely many linear
inequalities~\cite{Matus2007}. Nonetheless, many linear entropic
inequalities have been discovered~\cite{Zhang1997, Zhang1998,
  Makarychev2002, Dougherty2006}. Recently, systematic searches for
new entropic inequalities for $n=4$ have been conducted~\cite{Xu2008,
  Dougherty2011}, which recover most of the previously known
inequalities; in particular the inequality of
Proposition~\ref{prop:zhangyeung} is re-derived and shown to be
implied by tighter ones~\cite{Dougherty2011}. The systematic search
in~\cite{Dougherty2011} is based on considering additional random
variables that obey certain constraints and then deriving four
variable inequalities from the known identities for five or more
random variables (see also~\cite{Zhang1998, Matus2007}), an idea that
is captured by a so-called copy lemma~\cite{Zhang1998, Dougherty2011,
  Kaced2013}.  In the same article, several rules to generate families
of inequalities have been suggested, in the style of techniques
introduced by Mat\'{u}\v{s}~\cite{Matus2007}.

For more than four variables, a few additional inequalities are
known~\cite{Zhang1998, Makarychev2002, Zhang2003}.  Curiously, to our
knowledge, in the case of four variables, all known relevant non
Shannon inequalities (i.e., the ones found in~\cite{Zhang1998,
  Makarychev2002, Dougherty2006, Matus2007, Xu2008, Dougherty2011}
that are not yet superseded by tighter ones) can be written as a
positive linear combination of the \emph{Ingleton quantity},
${I(T:U|V) + I(T:U|W) + I(V:W) - I(T:U)}$, and conditional mutual
information terms (see also~\cite{Dougherty2011}).

\subsubsection{Inner approximations}
For the four variable entropy cone, $\overline{\Gamma_\mathrm{4}^*}$,
an inner approximation is defined as the region constrained by the
Shannon inequalities and the six permutations of the \emph{Ingleton
  inequality}~\cite{Ingleton},
${I(T:U|V) + I(T:U|W) + I(V:W) - I(T:U)\geq 0}$, for random variables
$T$, $U$, $V$ and $W$. These inequalities can be concisely written as
a matrix $M_\mathrm{I} \in \mathbb{R}^{6}\times \mathbb{R}^{15}$. The
constrained region is called the \emph{Ingleton cone},
$\Gamma^{\iI}:= \left\{v \in \mathbbm{R}_{\geq 0}^{15} \mid
  M^{4}_\mathrm{SH} \cdot v \geq 0 \text{ and } M_\mathrm{I} \cdot v
  \geq 0 \right\},$ and it has the property that $v \in \Gamma^{\iI}$
implies $v \in \Gamma^*_n$~\cite{Hammer2000}. In contrast, there are
entropy vectors that violate the Ingleton inequalities, as the
following example shows.

\begin{example}
  Let $T$, $U$, $V$ and $W$ be four jointly distributed random
  variables. Let $V$ and $W$ be uniform random bits and let
  $T=\operatorname{AND}(\neg V, \neg W)$ and
  $U=\operatorname{AND}(V,W)$.  This distribution~\cite{Matus2007}
  leads to the entropy vector
  ${ v \! \approx \!
    (0.81,0.81,1,1,1.50,1.50,1.50,1.50,1.50,2,2,2,2,2,2),}$ for which
  ${ I(T:U|V) + I(T:U|W) + I(V:W) - I(T:U) \approx -0.12 }$ in
  violation of the Ingleton inequality.
\end{example}
For five random variables an inner approximation in terms of Shannon,
Ingleton and $24$ additional inequalities and their permutations is
known (including partial extensions to more
variables)~\cite{Dougherty2009, Dougherty2014}.

\subsection{Entropy vectors for causal structures} \label{sec:causal}
Causal relations among a set of variables impose constraints on their
possible joint distributions, which can be conveniently represented
with a causal structure.

\begin{definition}
  A \emph{causal structure} is a set of variables arranged in a
  directed acyclic graph~(DAG), in which a subset of the nodes is
  assigned as observed.
\end{definition}

The directed edges of the graph are intended to represent causation,
perhaps by propagation of some influence, and cycles are excluded to
avoid the well-known paradoxes associated with causal loops.  We will
interpret causal structures in different ways depending on the
supposed physics of whatever is mediating the causal influence.

One of the simplest causal structures that leads to interesting
insights and one of the most thoroughly analysed ones is Pearl's
instrumental causal structure, $IC$~\cite{Pearl1995}. It is displayed
in Figure~\ref{fig:instrumental}(a) and will be used as an example
throughout this review.
\begin{figure}
\centering 
\resizebox{0.85\columnwidth}{!}{
\begin{tikzpicture}
\node (A) at (-2.5,1.5) {$(a)$};
\node[draw=black,circle,scale=0.75] (1) at (-2,-0.5) {$X$};
\node[draw=black,circle,scale=0.75] (2) at (-0,-0.5) {$Z$};
\node[draw=black,circle,scale=0.75] (3) at (2,-0.5) {$Y$};
\node (4) at (1,1) {$A$};

\draw [->,>=stealth] (1)--(2);
\draw [->,>=stealth] (2)--(3);
\draw [->,>=stealth] (4)--(2) node [above,pos=0.8,yshift=+1ex] {$\scriptstyle A_\mathrm{Z}$};
\draw [->,>=stealth] (4)--(3) node [above,pos=0.8,yshift=+1ex] {$\scriptstyle A_\mathrm{Y}$};

\node (B) at (3.5,1.5) {$(b)$};
\node[draw=black,circle,scale=0.75] (B1) at (4,-0.5) {$A$};
\node(B2) at (6,-0.5) {$C$};
\node[draw=black,circle,scale=0.75] (B3) at (8,-0.5) {$B$};
\node[draw=black,circle,scale=0.75] (B4) at (5,1) {$X$};
\node[draw=black,circle,scale=0.75] (B5) at (7,1) {$Y$};

\draw [->,>=stealth] (B1)--(B4);
\draw [->,>=stealth] (B2)--(B4) node [below,pos=0.8,yshift=-1.5ex] {$\scriptstyle C_\mathrm{X}$};
\draw [->,>=stealth] (B2)--(B5) node [below,pos=0.8,yshift=-1.5ex] {$\scriptstyle C_\mathrm{Y}$};
\draw [->,>=stealth] (B3)--(B5);  

\node (C0) at (9.5,1.5) {$(c)$};
\node[draw=black,circle,scale=0.75] (X) at (10.5,1) {$X$};
\node[draw=black,circle,scale=0.75] (Y) at (12.5,1) {$Y$};
\node[draw=black,circle,scale=0.75] (Z) at (11.5,-0.5) {$Z$};
\node (A) at (12,0.28) {$A$};
\node (B) at (11,0.28) {$B$};
\node (C) at (11.5,1) {$C$};

\draw [->,>=stealth] (A)--(Y);
\draw [->,>=stealth] (A)--(Z);
\draw [->,>=stealth] (B)--(X);
\draw [->,>=stealth] (B)--(Z);
\draw [->,>=stealth] (C)--(X);
\draw [->,>=stealth] (C)--(Y);
\end{tikzpicture}
}
\caption{(a) Pearl's instrumental scenario. The nodes $X$, $Y$ and $Z$
  are observed, $A$ is unobserved.  In the classical case this can be
  understood in the following way: A random variable $X$ and an
  unobserved $A$ are used to generate another random variable
  $Z$. Then $Y$ is generated from $A$ and the observed output of node
  $Z$. In particular, note that no other information can be forwarded
  from $X$ through the node $Z$ to $Y$.  In the quantum case, the
  source $A$ shares a quantum system $\rho_A\in\cS(\cH_A)$, where
  $\cH_A\cong\cH_{A_{\mathrm{Z}}}\otimes\cH_{A_{\mathrm{Y}}}$. The
  subsystem $A_{\mathrm{Z}}$ is measured to produce $Z$ and likewise
  for $Y$. The subsystems $A_{\mathrm{Z}}$ and $A_{\mathrm{Y}}$ are
  both considered to be parents of $Z$ (and $Y$). (b) Bell
  scenario. The observed variables $A$ and $B$ together with an
  unobserved system $C$ are used to generate outputs $X$ and $Y$
  respectively. In the classical case, $C$ is modelled as a random
  variable, in the quantum case it is a quantum state on a Hilbert
  space $\cH_C\cong\cH_{C_{\mathrm{X}}}\otimes\cH_{C_{\mathrm{Y}}}$.
  (c) Triangle causal structure. Three observed random variables $X$,
  $Y$ and $Z$ share pairwise common causes $A$, $B$ and $C$, which in
  the classical case are modelled by random variables. Some of the
  valid inequalities such as
  $2 I(X:Y \mid Z)+I(X:Z \mid Y)+I(Y:Z \mid X)-I(X:Y) \geq 0$ can only
  be recovered using non-Shannon entropic
  inequalities~\cite{non_shan}.}
\label{fig:instrumental}
\end{figure}
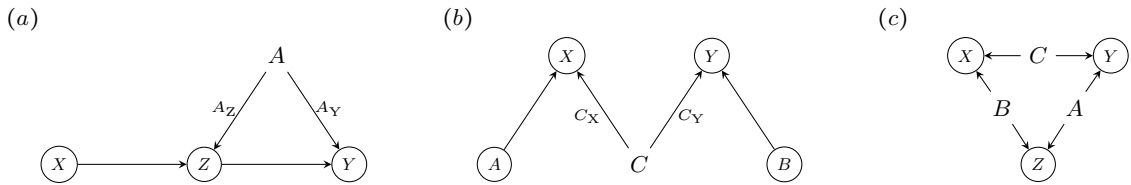

\subsubsection{Classical causal structures}
In the classical case, the causal relations among a set of random
variables can be explored by means of the theory of Bayesian networks
(see for instance~\cite{Spirtes2000,Pearl2009} for a complete
presentation of this theory).
\begin{definition}
  A \emph{classical causal structure}, $C^{\cC}$, is a causal
  structure in which each node of the DAG has an associated random
  variable.
\end{definition}
It is common to use the same label for the node and its associated
random variable. The DAG encodes which joint distributions of the
involved variables are allowed in a causal structure $C^{\cC}$. To
explain this we need a little more terminology.

\begin{definition}
  Let $X_S$, $X_T$, $X_U$ be three disjoint sets of jointly
  distributed random variables. Then $X_S$ and $X_T$ are said to be
  \emph{conditionally independent} given $X_U$ if and only if their
  joint distribution $P_\mathrm{X_S X_T X_U}$ can be written as
  $P_\mathrm{X_S X_T X_U}= P_\mathrm{X_S|X_U} P_\mathrm{X_T|X_U}
  P_\mathrm{X_U}$.  Conditional independence of $X_S$ and $X_T$ given
  $X_U$ is denoted as $X_S \indep X_T | X_U$.
\end{definition}
Two variables $X_S$ and $X_T$ are (unconditionally) independent if
$P_\mathrm{X_S X_T}= P_\mathrm{X_S} P_\mathrm{X_T}$, concisely written
$X_S \indep X_T$.  With reference to a DAG with a subset of nodes,
$X$, we will use $X^{\downarrow}$ to denote the ancestors of $X$ and
$X^{\uparrow}$ to denote the descendants of $X$. The parents of $X$
are represented by $X^{\downarrow_{1}}$ and the non-descendants are
$X^{\nuparrow}$.
\begin{definition}\label{def:compat}
  Let $C^{\cC}$ be a classical causal structure with nodes
  $\left\{X_1,~X_2,~\ldots~,~X_n \right\}$. A probability distribution
  $P_\mathrm{X_1 X_2 \ldots X_n} \in \mathcal{P}_n$ is (Markov)
  \emph{compatible} with $C^{\cC}$ if it can be decomposed as
  $P_\mathrm{X_1 X_2 \ldots X_n}= \prod_{i}
  P_\mathrm{X_{i}|X^{\downarrow_{1}}_{i}}$.
\end{definition}

The compatibility constraint encodes all conditional independences of
the random variables in the causal structure $C^{\cC}$. Nonetheless,
whether a particular set of variables is conditionally independent of
another is more easily read from the DAG, as explained in the
following.

\begin{definition}
  Let $X$, $Y$ and $Z$ be three pairwise disjoint sets of nodes in a
  DAG $G$. The sets $X$ and $Y$ are said to be
  \emph{d-separated}\footnote{The \emph{d} in d-separation stands for
    \emph{directional}.} by $Z$, if $Z$ blocks any path from any node
  in $X$ to any node in $Y$.  A path is \emph{blocked} by $Z$, if the
  path contains one of the following: $i \rightarrow z \rightarrow j$
  or $i \leftarrow z \rightarrow j$ for some nodes $i$, $j$ and a node
  $z \in Z$ in that path, or if the path contains
  $i \rightarrow k \leftarrow j$, where $k \notin Z$.
\end{definition}

The d-separation of the nodes in a causal structure is directly
related to the conditional independence of its variables. The
following proposition corresponds to Theorem~1.2.5
from~\cite{Pearl2009}, previously introduced in~\cite{Verma1988,
  Meek1995}. It justifies the application of d-separation as a means
to identify independent variables.

\begin{prop}[Verma \& Pearl] \label{prop:dseparation} Let $C^{\cC}$ be
  a classical causal structure and let $X$, $Y$ and $Z$ be pairwise
  disjoint subsets of nodes in $C^{\cC}$.  If a probability
  distribution $P$ is compatible with $C^{\cC}$, then the d-separation
  of $X$ and $Y$ by $Z$ implies the conditional independence
  $X \indep Y | Z$.  Conversely, if for every distribution $P$
  compatible with $C^{\cC}$ the conditional independence
  $X \indep Y | Z$ holds, then $X$ is d-separated from $Y$ by $Z$.
\end{prop}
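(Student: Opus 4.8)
The plan is to prove the two directions of Proposition~\ref{prop:dseparation} separately, and for both directions to reduce the global statement about the full causal structure to a local statement about the DAG that can be attacked by a topological-ordering induction. For the first (soundness) direction, I would start from the Markov factorisation $P = \prod_i P_{X_i \mid X_i^{\downarrow_1}}$ guaranteed by Definition~\ref{def:compat}, and aim to show that this factorisation alone forces every d-separation statement to be a genuine conditional independence. The cleanest route is to use the equivalence between d-separation and the notion that $X$ and $Y$ lie in different ``connected components'' once one conditions on $Z$ in the moralised ancestral graph: given disjoint $X,Y,Z$ with $X$ d-separated from $Y$ by $Z$, pass to the subgraph induced on $X^{\downarrow}\cup Y^{\downarrow}\cup Z^{\downarrow}$, moralise it (marry co-parents, drop orientations), and delete $Z$; d-separation is equivalent to $X$ and $Y$ being disconnected in this graph. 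Then I would argue that the joint distribution of the ancestral set factorises according to this moral graph, so that $P_{XYZ}$, obtained by marginalising, splits as $P_{X\mid Z}\,P_{Y\mid Z}\,P_Z$, which is exactly $X \indep Y \mid Z$. Each of these reductions (restriction to an ancestral set preserves the Markov property; moralisation preserves it; a distribution Markov w.r.t.\ an undirected graph with a vertex cut factorises across the cut) is a short induction on a topological order or on the number of vertices.

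For the second (completeness/faithfulness) direction I would argue by contraposition: assume $X$ is \emph{not} d-separated from $Y$ by $Z$, and construct a single distribution $P$ compatible with $C^{\cC}$ for which $X \indep Y \mid Z$ fails. Since d-separation fails there is an ``active'' (unblocked) path $\pi$ from some node in $X$ to some node in $Y$ relative to $Z$. The construction I have in mind makes every variable a copy of its parents' values plus, at carefully chosen nodes, an independent fair coin, so that information can be made to propagate exactly along $\pi$: at chain and fork vertices on $\pi$ that are outside $Z$ one forwards the relevant bit, at a collider $i\to k\leftarrow j$ on $\pi$ one arranges that $k$ (or one of its descendants in $Z$) equals the XOR of its two colliding parents so that conditioning on that descendant ``opens'' the collider, and all other randomness is chosen independent so no spurious dependence is created. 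One then checks that under this $P$ the endpoints of $\pi$ are correlated given $Z$, witnessing the failure of $X \indep Y \mid Z$; it suffices to exhibit one such $P$ because the statement quantifies over all compatible $P$.

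The main obstacle, and where most of the care is needed, is the collider case in the completeness direction: making precise the gadget that ``opens'' a collider when one conditions on a descendant of it (rather than on the collider itself), while simultaneously guaranteeing that the same conditioning does not accidentally block the path elsewhere or introduce unwanted independences, requires a bookkeeping argument over the whole active path at once. A convenient way to tame this is to choose the variables to take values in a suitable product of $\mathbb{F}_2$-spaces and to make each conditional $P_{X_i \mid X_i^{\downarrow_1}}$ a deterministic affine function of the parents plus fresh uniform bits; then the joint distribution is uniform on an affine subspace, conditional independence becomes a purely linear-algebraic statement about that subspace, and ``the active path carries a bit past $Z$'' becomes the statement that a certain linear functional is not annihilated on the conditioned subspace, which one reads off directly from the structure of $\pi$. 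On the soundness side the only subtlety is getting the moralisation/ancestral-restriction lemmas stated and proved in the right order, but these are standard and I would either cite them from~\cite{Pearl2009, Verma1988, Meek1995} or relegate them to a short appendix rather than reprove them in the main text.
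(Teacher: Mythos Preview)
The paper does not prove this proposition at all: it is stated as a classical result and attributed to the literature (Theorem~1.2.5 of~\cite{Pearl2009}, with origins in~\cite{Verma1988, Meek1995}), so there is no ``paper's own proof'' to compare against. Your proposal is therefore not competing with anything in the text; the paper's approach is simply to cite.

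That said, your sketch is essentially correct and follows the standard routes. The soundness direction via restriction to the ancestral set, moralisation, and the undirected separation/Markov argument is exactly Lauritzen's proof, and you are right that each of the three reductions is routine. For completeness, your plan to build, for a given active path $\pi$, a single compatible distribution in which dependence is carried along $\pi$ is the right contrapositive, and the $\mathbb{F}_2$-affine trick is a clean way to handle colliders opened by descendants in $Z$. One caveat worth flagging: the collider gadget must be set up so that conditioning on $Z$ does not inadvertently create \emph{new} active paths that cancel the dependence you are trying to exhibit (XOR constructions can produce exact cancellations over $\mathbb{F}_2$); the usual remedy is either to make the path minimal and set all off-path variables to constants, or to work over a larger field/alphabet so that accidental linear relations are generically avoided. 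This is the only place your sketch is a bit hand-wavy, and it is precisely where the original proofs (Geiger--Pearl, Meek) spend their effort; citing those references, as the paper does, would be entirely appropriate here.
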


The compatibility of probability distributions with a classical causal
structure is conveniently determined with the following proposition,
which has also been called the parental or local Markov condition
before (Theorem~1.2.7 in~\cite{Pearl2009}).
\begin{prop}[Pearl] \label{prop:localmark}
Let $C^{\cC}$ be a classical causal structure. A probability distribution $P$ is compatible with $C^{\cC}$ if and only if every variable in $C^{\cC}$ is independent of its non-descendants, conditioned on its parents.
\end{prop}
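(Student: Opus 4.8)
The plan is to prove the two implications separately, in both cases exploiting a \emph{topological ordering} of the DAG, i.e.\ a labelling $X_1, X_2, \ldots, X_n$ of the nodes such that every node's parents precede it.

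\emph{Compatibility $\Rightarrow$ local Markov condition.} Fix a node $X_j$. First I would record the graph-theoretic observation that $X_j^{\nuparrow}$ together with $X_j$ is ancestrally closed: an ancestor of a non-descendant of $X_j$ cannot be a descendant of $X_j$, since otherwise $X_j$ would have a descendant that is also an ancestor of one of its non-descendants, forcing that non-descendant to be a descendant of $X_j$. Consequently there is a topological ordering in which the non-descendants of $X_j$ occupy the initial segment $X_1, \ldots, X_{k-1}$ and $X_j = X_k$, with the descendants of $X_j$ filling the remaining positions. Starting from the assumed factorisation $P = \prod_{i} P_{X_i \mid X^{\downarrow_1}_i}$ and summing out $x_{k+1}, \ldots, x_n$ one at a time from the last index downwards --- each such sum equals $1$ because in a topological order the last remaining variable is a parent of no earlier one --- yields $P_{X_1 \cdots X_k} = \prod_{i=1}^{k} P_{X_i \mid X^{\downarrow_1}_i}$. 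The same argument applied to the first $k-1$ factors gives $P_{X_1 \cdots X_{k-1}} = \prod_{i=1}^{k-1} P_{X_i \mid X^{\downarrow_1}_i}$, so conditioning shows $P_{X_k \mid X_1 \cdots X_{k-1}} = P_{X_k \mid X^{\downarrow_1}_k}$, i.e.\ $X_j$ is independent of $X_j^{\nuparrow} \setminus X^{\downarrow_1}_j$ given $X^{\downarrow_1}_j$. (Alternatively, this direction follows from Proposition~\ref{prop:dseparation} once one checks the elementary fact that a node is d-separated from its non-descendants by its parents.)

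\emph{Local Markov condition $\Rightarrow$ compatibility.} Fix any topological ordering $X_1, \ldots, X_n$. By the chain rule, $P = \prod_{i=1}^{n} P_{X_i \mid X_1 \cdots X_{i-1}}$. In a topological order all predecessors of $X_i$ are non-descendants, so $\{X_1, \ldots, X_{i-1}\} \subseteq X_i^{\nuparrow}$, while $X^{\downarrow_1}_i \subseteq \{X_1, \ldots, X_{i-1}\}$. The local Markov condition $X_i \indep X_i^{\nuparrow} \setminus X^{\downarrow_1}_i \mid X^{\downarrow_1}_i$, combined with the decomposition property of conditional independence (a variable independent of a set given $X^{\downarrow_1}_i$ is independent of every subset of it given $X^{\downarrow_1}_i$), gives $P_{X_i \mid X_1 \cdots X_{i-1}} = P_{X_i \mid X^{\downarrow_1}_i}$ for every $i$, hence $P = \prod_i P_{X_i \mid X^{\downarrow_1}_i}$ as required.

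\emph{Main obstacle.} Neither implication is deep; the only place needing care is the forward direction, where one must justify that the non-descendants of a fixed node can be taken to form an initial segment of a topological order (the ancestral-closure remark above) and that the iterated marginalisation collapses the tail of the product to $1$. One should also verify along the way that the conditional distributions appearing in Definition~\ref{def:compat} are genuinely the conditionals of $P$ --- which is precisely what the marginalisation argument establishes.
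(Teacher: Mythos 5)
The paper does not actually prove this proposition: it is quoted as Pearl's Theorem~1.2.7 and imported from~\cite{Pearl2009} without argument, so there is no in-paper proof to compare against. Your proposal is a correct, self-contained proof and is essentially the standard textbook argument for the equivalence of the recursive factorisation and the local (parental) Markov condition. Both directions are sound: the ancestral-closure observation does guarantee a topological order in which $X_j^{\nuparrow}$ forms an initial segment followed by $X_j$ (no edge can run from a descendant of $X_j$ into that segment, and $X_j$ has no children among its non-descendants), and the tail-collapsing marginalisation is valid because the last remaining variable appears only in its own conditional factor, which sums to one. The converse via the chain rule along a topological order plus the decomposition property of conditional independence is likewise the standard route, and your alternative remark for the forward direction (d-separation of a node from its non-descendants by its parents, then Proposition~\ref{prop:dseparation}) also works. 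The only points worth flagging are routine: conditionals such as $P_{X_k\mid X_1\cdots X_{k-1}}$ are only defined on conditioning events of positive probability, so the identities should be read as holding wherever both sides are defined (harmless in the discrete setting, and consistent with Definition~\ref{def:compat}); and ``independent of its non-descendants'' is to be read with the parents handled trivially, since conditioning on $X^{\downarrow_1}$ makes the independence from the parents themselves automatic, exactly as your statement $X_j \indep X_j^{\nuparrow}\setminus X_j^{\downarrow_1}\mid X_j^{\downarrow_1}$ reflects.
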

Hence, to establish whether a probability distribution is compatible
with a certain classical causal structure, it is enough to check that
every variable $X$ is independent of its non-descendants
$X^{\nuparrow}$ given its parents $X^{\downarrow_{1}}$, concisely
written as $X \indep X^{\nuparrow} | X^{\downarrow_{1}}$, i.e., to
check one constraint for each variable.  In particular, it is not
necessary to explicitly check for all possible sets of nodes whether
they obey the independence relations implied by d-separation. Each
such constraint can be conveniently expressed as\footnote{This follows
  because the relative entropy
  $D(P \| Q) := \sum_x P_X(x)\log\left(P_X(x)/Q_X(x)\right)$ satisfies
  $D(P\| Q)=0\ \Leftrightarrow \ P = Q$, and because
  $I(X:X^{\nuparrow} | X^{\downarrow_{1}})=D(P_\mathrm{X X^{\nuparrow}
    X^{\downarrow_{1}}} \| P_\mathrm{X | X^{\downarrow_{1}}}
  P_\mathrm{X^{\nuparrow} | X^{\downarrow_{1}}} P_\mathrm{
    X^{\downarrow_{1}}})$.}
\begin{equation} \label{eq:indepentr}
I(X:X^{\nuparrow}| X^{\downarrow_{1}})=0.
\end{equation}

While the conditional independence relations capture some features of
the causal structure, they are insufficient to completely capture the
causal relations between variables, as illustrated in
Figure~\ref{fig:causex}.
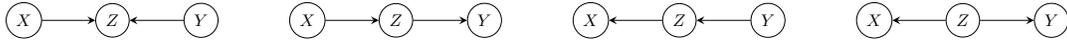
\begin{figure}
\centering 
\resizebox{0.8\columnwidth}{!}{
\begin{tikzpicture}
\node[draw=black,circle,scale=0.75] (1) at (-3.25,0.5) {$X$};
\node[draw=black,circle,scale=0.75] (2) at (-2,0.5) {$Z$};
\node[draw=black,circle,scale=0.75] (3) at (-0.75,0.5) {$Y$};

\node[draw=black,circle,scale=0.75] (4) at (0.75,0.5) {$X$};
\node[draw=black,circle,scale=0.75] (5) at (2,0.5) {$Z$};
\node[draw=black,circle,scale=0.75] (6) at (3.25,0.5) {$Y$};

\node[draw=black,circle,scale=0.75] (7) at (4.75,0.5) {$X$};
\node[draw=black,circle,scale=0.75] (8) at (6,0.5) {$Z$};
\node[draw=black,circle,scale=0.75] (9) at (7.25,0.5) {$Y$};

\node[draw=black,circle,scale=0.75] (10) at (8.75,0.5) {$X$};
\node[draw=black,circle,scale=0.75] (11) at (10,0.5) {$Z$};
\node[draw=black,circle,scale=0.75] (12) at (11.25,0.5) {$Y$};
\draw [->,>=stealth] (1)--(2);
\draw [->,>=stealth] (3)--(2);
\draw [->,>=stealth] (4)--(5);
\draw [->,>=stealth] (5)--(6);
\draw [->,>=stealth] (9)--(8);
\draw [->,>=stealth] (8)--(7);
\draw [->,>=stealth] (11)--(10);
\draw [->,>=stealth] (11)--(12);
\end{tikzpicture}
}
\caption{While in the left causal structure $X \protect\indep Y$, the
  other three networks share the conditional independence relation
  $X \protect\indep Y | Z$. This illustrates that the conditional
  independences are not sufficient to characterise the causal links
  among a set of random variables.}
\label{fig:causex}
\end{figure}
In this case, the probability distributions themselves are unable to
capture the difference between these causal structures: correlations
are insufficient to determine causal links between random variables.
External interventions allow for the exploration of causal links
beyond the conditional independences~\cite{Pearl2009}. However, we do
not consider these here.

Let $C^{\cC}$ be a classical causal structure involving $n$ random
variables $\left\{X_1,~X_2,~\ldots~,~X_n \right\}$.  The restricted
set of distributions that are compatible with the causal structure
$C^{\cC}$ is
${\mathcal{P}\left(C^{\cC}\right):= \left\{P \in \mathcal{P}_n \mid
    P=\prod_{i=1}^{n}P_{\mathrm{X_i | X_{i}^{\downarrow_{1}}}}
  \right\}}.$

\begin{example}[Allowed distributions in the instrumental scenario]
  The classical instrumental scenario of Figure~\ref{fig:instrumental}
  allows for any four variable distribution in the set
  $\mathcal{P}\left(IC^{\cC}\right)={\left\{ P_{\mathrm{AXYZ}} \in
      \mathcal{P}_4 \mid P_{\mathrm{AXYZ}}= P_{\mathrm{Y | AZ}}
      P_{\mathrm{Z | AX}} P_{\mathrm{X}} P_{\mathrm{A}} \right\}.}$
\end{example}

The restrictions on the allowed distributions also restrict the
corresponding entropy cones. Due to Proposition~\ref{prop:localmark}
there are at most $n$ independent conditional independence equalities
\eqref{eq:indepentr} in a causal structure $C^{\cC}$. Their
coefficients can be concisely written in terms of a matrix
$M_\mathrm{CI}\left(C^{\cC}\right)$, where CI stands for conditional
independence.  For a causal structure $C^{\cC}$, we define the two
sets
$\Gamma^{*}\left(C^{\cC}\right):= \left\{ v \in \Gamma^{*}_n \mid
  M_\mathrm{CI}\left(C^{\cC}\right) \cdot v = 0 \right\}$ and
$\Gamma\left(C^{\cC}\right):= \left\{ v \in \Gamma_n \mid
  M_\mathrm{CI}\left(C^{\cC}\right) \cdot v = 0 \right\}$, where
$\Gamma^{*}\left(C^{\cC}\right) \subseteq \Gamma\left(C^{\cC}\right)$.
The following lemma justifies the notation we use for
$\Gamma^{*}\left(C^{\cC}\right)$; it is the set of achievable entropy
vectors in $C^{\cC}$.

\begin{lemma} \label{lemma:convexity} For a causal structure
  $C^{\cC}$,
  $ \Gamma^{*} \left(C^{\cC}\right)\!= \! \left\{v \in
    \mathbb{R}_{\geq 0}^{2^{n}-1} \mid \exists P \!= \!
    \prod_{i=1}^{n} P_{\mathrm{X_i |
        X_{i}^{\downarrow_{1}}}}\in\cP_n\text{ s.t.\
    }v\!=\!{\bf{H}}(P) \!\right\}.$ Furthermore, its topological
  closure,
  ${ \overline{\Gamma^*}\left(C^{\cC}\right)=\left\{ v \in
      \overline{\Gamma^*_n}\mid M_\mathrm{CI}\left(C^{\cC}\right)
      \cdot v = 0 \right\}, }$ is a convex cone.
\end{lemma}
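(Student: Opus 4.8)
The plan is to prove the three assertions in turn: (i) the set-equality describing $\Gamma^{*}(C^{\cC})$, (ii) the formula for its closure $\overline{\Gamma^*}(C^{\cC})$, and (iii) convexity of that closure. For (i), write $S$ for the right-hand set. The inclusion $S\subseteq\Gamma^{*}(C^{\cC})$ is the easy direction: if $v={\bf H}(P)$ for some $P=\prod_i P_{\mathrm{X_i|X_i^{\downarrow_1}}}$, then certainly $v\in\Gamma^*_n$, and by Proposition~\ref{prop:localmark} each variable $X$ satisfies $X\indep X^{\nuparrow}\mid X^{\downarrow_1}$, which by \eqref{eq:indepentr} says exactly $I(X:X^{\nuparrow}\mid X^{\downarrow_1})=0$; collecting these over all $n$ nodes gives $M_\mathrm{CI}(C^{\cC})\cdot v=0$, so $v\in\Gamma^{*}(C^{\cC})$. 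For the reverse inclusion, take $v\in\Gamma^{*}(C^{\cC})$. By definition of $\Gamma^*_n$ there is some $P\in\cP_n$ with $v={\bf H}(P)$, and by hypothesis $M_\mathrm{CI}(C^{\cC})\cdot v=0$, i.e.\ $I(X:X^{\nuparrow}\mid X^{\downarrow_1})=0$ for every node $X$. Using the footnoted identity $I(X:X^{\nuparrow}\mid X^{\downarrow_1})=D(P_{XX^{\nuparrow}X^{\downarrow_1}}\,\|\,P_{X|X^{\downarrow_1}}P_{X^{\nuparrow}|X^{\downarrow_1}}P_{X^{\downarrow_1}})$ together with the fact that relative entropy vanishes only on equal distributions, each such equality forces $X\indep X^{\nuparrow}\mid X^{\downarrow_1}$ for $P$ itself. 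Hence $P$ satisfies the local Markov condition, so by Proposition~\ref{prop:localmark} it is compatible with $C^{\cC}$, i.e.\ $P=\prod_i P_{\mathrm{X_i|X_i^{\downarrow_1}}}$; therefore $v\in S$. This proves (i).

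For (ii), again let $T$ denote the right-hand set $\{v\in\overline{\Gamma^*_n}\mid M_\mathrm{CI}(C^{\cC})\cdot v=0\}$. One inclusion is immediate: $\overline{\Gamma^*}(C^{\cC})$ is by definition the closure of $\Gamma^{*}(C^{\cC})$, which sits inside $\overline{\Gamma^*_n}$, and the linear condition $M_\mathrm{CI}(C^{\cC})\cdot v=0$ defines a closed set, so the closure of $\Gamma^{*}(C^{\cC})$ is contained in $T$. The other direction is the delicate one. Given $v\in T$, there is a sequence $P_k\in\cP_n$ with ${\bf H}(P_k)\to v$, but the $P_k$ need not be compatible with $C^{\cC}$, and $M_\mathrm{CI}(C^{\cC})\cdot v=0$ only tells us the mutual-information terms $I(X:X^{\nuparrow}\mid X^{\downarrow_1})$ evaluated at $P_k$ tend to $0$, not that they vanish. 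The natural fix is to modify each $P_k$ to a genuinely compatible distribution $Q_k$ by replacing it with the product of its own conditional marginals, $Q_k:=\prod_i (P_k)_{\mathrm{X_i|X_i^{\downarrow_1}}}$, which lies in $\Gamma^{*}(C^{\cC})$ by part (i); one then argues that ${\bf H}(Q_k)$ still converges to $v$. This uses that the components of ${\bf H}(Q_k)$ are built from the same conditional marginals as those of ${\bf H}(P_k)$, that the vanishing-in-the-limit of the conditional mutual informations controls how far $P_k$ is from its Markov factorisation, and continuity of entropy on the (compact) probability simplex of each fixed alphabet — with some care, since the alphabet sizes along the sequence may grow. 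An alternative, cleaner route avoiding alphabet-size issues is to invoke that $\overline{\Gamma^*_n}$ is a closed convex cone (stated in the excerpt, from~\cite{Zhang1997}) and that $\overline{\Gamma^*}(C^{\cC})$ is the intersection of this cone with the linear subspace $\{M_\mathrm{CI}(C^{\cC})\cdot v=0\}$ provided one can show every point of that intersection is a limit of points of $\Gamma^{*}(C^{\cC})$; I expect this approximation argument — showing the linear-section of the closure equals the closure of the linear-section — to be the main obstacle, and the $Q_k$ construction above is the key technical device for it.

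Assertion (iii) is then a short corollary. Intersection of convex sets is convex, and the intersection of cones (sets closed under non-negative scaling) is again a cone; since $\overline{\Gamma^*_n}$ is a convex cone by~\cite{Zhang1997} and $\{v\in\mathbb{R}^{2^n-1}_{\geq0}\mid M_\mathrm{CI}(C^{\cC})\cdot v=0\}$ is the intersection of the nonnegative orthant with a linear subspace — hence a convex cone — their intersection $\overline{\Gamma^*}(C^{\cC})$ is a convex cone, and it is closed as an intersection of closed sets. This completes the proof.
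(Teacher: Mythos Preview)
Your treatment of parts (i) and (iii) is correct and essentially identical to the paper's. The divergence is in part (ii), where you correctly isolate the real issue --- that the closure of a linear section need not equal the linear section of the closure --- but leave your proposed fix via $Q_k:=\prod_i (P_k)_{X_i\mid X_i^{\downarrow_1}}$ as a sketch; you yourself flag the alphabet-size obstruction, and indeed Fannes-type continuity bounds degrade with dimension, so without a uniform bound on the support sizes along the sequence this route does not obviously close. As written, then, (ii) has a genuine gap: the convergence ${\bf H}(Q_k)\to v$ is asserted but not established.

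The paper sidesteps the $Q_k$ construction entirely and takes what you call the ``alternative, cleaner route'', supplying the device you could not identify. It sets $F(C^{\cC}):=\overline{\Gamma^*_n}\cap\{M_{\mathrm{CI}}(C^{\cC})\cdot v=0\}$, notes this is closed and convex, and then uses that every point of a closed convex set is a limit of points in its interior: for $w\in F(C^{\cC})$ one picks $w_k$ in the interior of $F(C^{\cC})$ with $w_k\to w$, and asserts that such interior points already lie in $\Gamma^*(C^{\cC})$, whence $w\in\overline{\Gamma^*}(C^{\cC})$. So rather than modifying an approximating sequence of \emph{distributions}, the paper approximates within the entropy cone itself by interior points of the slice. (The paper also does not spell out why interior points of $F(C^{\cC})$ lie in $\Gamma^*(C^{\cC})$; implicitly this leans on the fact that the interior of $\overline{\Gamma^*_n}$ is contained in $\Gamma^*_n$.) This relative-interior argument is the key step missing from your proposal.
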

\begin{proof}
  For the causal structure $C^{\cC}$, let
  ${ E \left(C^{\cC}\right)\!:= \! \left\{v \in \mathbb{R}_{\geq
        0}^{2^{n}-1} \mid \exists P \!= \! \prod_{i=1}^{n}
      P_{\mathrm{X_i | X_{i}^{\downarrow_{1}}}}\in\cP_n\text{ s.t.\
      }v\!=\!{\bf{H}}(P) \!\right\}}$ be the set of all entropy
  vectors.  Since~\eqref{eq:indepentr} holds for each variable $X_i$
  if and only if
  $P= \prod_{i=1}^{n} P_{\mathrm{X_i | X_{i}^{\downarrow_{1}}}}$ (cf.\
  Proposition~\ref{prop:localmark}),
  ${ E \left(C^{\cC}\right)\!= \!\left\{v \in \Gamma^{*}_n \mid
      \forall i \in \left\{1,~\ldots~,n\right\},~ I(X_i :
      X_{i}^{\nuparrow} |X_{i}^{\downarrow_{1}})=0 \! \right\}}$.
  Applying the definition of $M_\mathrm{CI}\left(C^{\cC}\right)$
  yields $E \left(C^{\cC}\right)=\Gamma^{*}\left(C^{\cC}\right)$.
  Now, let us consider the set
  ${ F(C^{\cC}):=\left\{ v \in \overline{\Gamma^*_n}\mid
      M_\mathrm{CI}\left(C^{\cC}\right) \cdot v = 0 \right\} \subseteq
    \mathbb{R}^{2^{n}-1} }$.  This is a closed convex set, since
  $\overline{\Gamma^*_n}$ is known to be closed and convex and since
  restricting the closed convex cone $\overline{\Gamma^*_n}$ with
  linear equality constraints retains these properties. More
  precisely, the set of solutions to the matrix equality
  $M_\mathrm{CI}\left(C^{\cC}\right) \cdot v = 0$ is also closed and
  convex. Being the intersection of two closed convex sets, the set
  $F(C^{\cC})$ is also closed and convex.  From this we conclude that
  $\overline{\Gamma^*}\left(C^{\cC}\right)$ is convex because
  ${\overline{\Gamma^*}\left(C^{\cC}\right)=\overline{\left\{ v \in
        \Gamma^{*}_n \mid M_\mathrm{CI}\left(C^{\cC}\right) \cdot v =
        0 \right\}} }$ equals $F(C^{\cC})$. (Because $F(C^{\cC})$ is
  closed, any element $w \in F(C^{\cC})$, in particular any element on
  its boundary, is the limit of a sequence of elements
  $\left\{w_k \right\}_k$ for $k \rightarrow \infty$, where the $w_k$
  lie in the interior of $F(C^{\cC})$ for all $k$. Hence
  $w \in \overline{\Gamma^*}\left(C^{\cC}\right)$.)
\end{proof}
The convexity of $\overline{\Gamma^*}\left(C^{\cC}\right)$ is crucial
for the considerations of the following sections.  Note that in spite
of the convexity of $\overline{\Gamma^*}\left(C^{\cC}\right)$, the set
$\mathcal{P}\left(C^{\cC}\right)$ is generally not convex. This
alludes to the fact that significant information about the achievable
correlations among the random variables is lost via the mapping from
$\mathcal{P}\left(C^{\cC}\right)$ to the corresponding entropic cone
$\overline{\Gamma^*}\left(C^{\cC}\right)$.

\begin{example}[Entropic outer approximation for the instrumental
  scenario]
The instrumental scenario has at most $4$ independent conditional independence equalities~\eqref{eq:indepentr}. We find that there are only two,
${I(A:X)=0}$ and ${I(Y:X\mid AZ)=0}$.
This yields
$\overline{\Gamma^{*}}\left(IC^{\cC}\right)= \left\{ v \in \overline{\Gamma^{*}_4} \mid
  M_\mathrm{CI}\left(IC^{\cC}\right) \cdot v = 0 \right\}$ with
$$ M_\mathrm{CI}\left(IC^{\cC}\right)=\left( \begin{array}{ccccccccccccccc}
-1 & -1 & 0 & 0 & 1 & 0 & 0 & 0 & 0 & 0 & 0 & 0 & 0 & 0 & 0 \\
0 & 0 & 0 & 0 & 0 & 0 & 1 & 0 & 0 & 0 & 0 & -1 & -1 & 0 & 1
\end{array} \right), $$
where the coordinates are ordered as
$(H(A),H(X),H(Y),H(Z),H(AX),H(AY),H(AZ),H(XY),H(XZ),H(YZ),\\ H(AXY),H(AXZ),H(AYZ),H(XYZ),H(AXYZ)).$  
An outer approximation is given by $\Gamma\left(IC^{\cC}\right)=\left\{ v \in \Gamma_4 \mid
  M_\mathrm{CI}\left(IC^{\cC}\right) \cdot v = 0 \right\}.$
\end{example}

In general, the outer approximation to
$\overline{\Gamma^{*}}\left(C^{\cC}\right)$ can be further tightened
by taking non-Shannon inequalities into account. These have lead to
the derivation of numerous new entropic inequalities for various
causal structures~\cite{non_shan} (see e.g.\ the triangle causal
structure of Figure~\ref{fig:instrumental}(c)). For the instrumental
scenario, however, such additional inequalities are irrelevant. This
can for instance be seen by constructing the following inner
approximation to the cone.
\begin{example}[Entropic inner approximation for the instrumental scenario~\cite{non_shan}]
  For the instrumental scenario an inner approximation is given in
  terms of the Ingleton cone and the conditional independence
  constraints from the previous example,
  ${\Gamma^{\iI}\left(IC^{\cC}\right)}= {\left\{v \in \Gamma^{\iI}
      \mid M_\mathrm{CI}\left(IC^{\cC}\right) \cdot v = 0 \right\} }$.
  For this causal structure the Ingleton constraints are implied by
  the Shannon inequalities and the conditional independence
  constraints and hence, inner and outer approximation
  coincide. Consequently, they also coincide with the actual entropy
  cone, i.e.,
  $\Gamma^{\iI}\left(IC^{\cC}\right)= \Gamma\left(IC^{\cC}\right)=
  \overline{\Gamma^{*}}\left(IC^{\cC}\right)$. In particular,
  non-Shannon entropic inequalities cannot improve the outer
  approximation in this example.
\end{example}
Inner approximations have been considered in~\cite{non_shan}. They are
particularly useful in cases where identical inner and outer
approximations are found, where they identify the actual boundary of
the entropy cone. In other cases they can allow parts of the actual
boundary to be identified or give clues on how to find better outer
approximations.

Arguably all interesting scenarios (such as the previous example)
involve unobserved variables that are suspected to cause some of the
correlations between the variables we observe.  These unobserved
variables may yield constraints on the possible joint distributions of
the observed variables, a well-known example being a Bell
inequality~\cite{Bell1964}\footnote{For a detailed discussion of the
significance of Bell inequality violation on classical causal
structures see~\cite{Wood2012}.}.  More generally we would like to
infer constraints on the observed variables that follow from the
presence of unobserved variables.

For a causal structure on $n$ random variables
$\left\{X_1,~X_2,~\ldots~,~X_n \right\}$, the restriction to the set
of observed variables is called its \emph{marginal scenario}, denoted
$\mathcal{M}$.  Here, we assume w.l.o.g.\ that the first $k\leq n$
variables are observed and the remaining $n-k$ are not. We are thus
interested in the correlations among the first $k$ variables that can
be obtained as the marginal of some distribution over all $n$
variables.  Without any causal restrictions the set of all probability
distributions of the $k$ observed variables is
${\mathcal{P}_{\mathcal{M}} := \left\{P \in \mathcal{P}_k \mid
    P=\sum_{X_{k+1},\ldots,X_n} P_\mathrm{X_1 X_2 \ldots X_n} \right\}
},$ i.e., $\mathcal{P}_{\mathcal{M}} = \mathcal{P}_k$.  For a
classical causal structure $C^{\cC}$ on the set of variables
$\left\{X_1,~ X_2,~\ldots~,~X_n \right\}$, marginalising all
distributions $P \in \mathcal{P}\left(C^{\cC}\right)$ over the $n-k$
unobserved variables leads to the set
$\mathcal{P}_\mathcal{M}\left(C^{\cC}\right) := \left\{ P \in
  \mathcal{P}_k \mid P=\sum_{X_{k+1},\ldots,X_n} \prod_{i=1}^{n}
  P_\mathrm{X_i | X_{i}^{\downarrow_{1}}} \right\}$.  In contrast to
the unrestricted case, this set of distributions is in general not
recovered by considering a causal structure that involves only $k$
observed random variables, as can be seen in the following example.
\begin{example}[Observed distributions in the instrumental scenario]
For the instrumental scenario the observed variables are $X$, $Y$ and $Z$ and their joint distribution is of the form
${\mathcal{P}_\mathcal{M}\left(IC^{\cC}\right)= \left\{ P_{\mathrm{XYZ}} \in \mathcal{P}_3 \mid P_{\mathrm{XYZ}}= \sum_A P_{\mathrm{Y | AZ}} P_{\mathrm{Z | AX}} P_{\mathrm{X}} P_{\mathrm{A}} \right\} }$.
\end{example}

The first entropic inequalities for a marginal scenario were
derived in~\cite{Steudel2015}, where certificates for the existence of
common ancestors of a subset of the observed random variables of at
least a certain size were given. One such scenario is the triangle
causal structure of Figure~\ref{fig:instrumental}(c).  The systematic
entropy vector approach was devised for classical causal structures
in~\cite{Chaves2012, Fritz2013, Chaves2014b}. An outer approximation
to the entropic cones of a variety of causal structures was given
in~\cite{Henson2014}. In the following we give the details of this
approach.

In the entropic picture, marginalisation is performed by eliminating
the coordinates that represent entropies of sets of variables
containing at least one unobserved variable from the vectors.  This
corresponds to a projection of a cone in $\mathbb{R}^{2^{n}-1}$ to its
marginal cone in $\mathbb{R}^{2^{k}-1}$~\cite{Chaves2014}. We will
denote this projection
$\pi_\mathcal{M}:\mathbb{R}^{2^{n}-1}\rightarrow\mathbb{R}^{2^{k}-1}$.
It gives all entropy vectors $w$ of the observed sets of variables,
i.e., of the marginal scenario $\mathcal{M}$, for which there exists
at least one entropy vector $v$ in the original scenario with matching
entropies on the observed variables.

Starting from the set of all entropy vectors,
$\Gamma^{*}\left(C^{\cC}\right)$, those relevant for the marginal
scenario can be obtained by discarding the appropriate components.
For a finitely generated cone such as $\Gamma\left(C^{\cC}\right)$,
its projection can be more efficiently determined from the projection
of its extremal rays.  In the dual description of the entropic cone in
terms of its facets (i.e., its inequality description) the transition
to the marginal scenario can be made computationally by eliminating
all entropies of sets of variables not contained in $\mathcal{M}$ from
the system of inequalities. The standard algorithm that achieves this
is Fourier-Motzkin elimination~\cite{Williams1986}, which has been
used in this context in~\cite{Chaves2012, Fritz2013, Chaves2014}.

Without any causal restrictions, the entropy cone
$\overline{\Gamma^{*}_n}$ is projected to the marginal cone
$\overline{\Gamma^{*}_{\mathcal{M}}}:=\left\{w \in
  \mathbb{R}^{2^k-1}_{\geq 0} \mid \exists v \in
  \overline{\Gamma^{*}_n} \text{ s.t.\ } w= \pi_\mathcal{M}(v)
\right\}$. Note that if we marginalise $\overline{\Gamma^{*}_n}$ over
$n-k$ variables we recover the entropy cone for $k$ random variables,
i.e.,
$ \overline{\Gamma^{*}_{\mathcal{M}}} = \overline{\Gamma^{*}_{k}} $.
The same applies to the outer approximations: The $n$ variable Shannon
cone $\Gamma_{n}$ is projected to the $k$ variable Shannon cone with
the mapping $\pi_\mathcal{M}$, i.e.,
$\Gamma_{\mathcal{M}} := \left\{ w \in \mathbb{R}^{2^k-1}_{\geq 0}
  \mid \exists v \in \Gamma_n \text{ s.t.\ } w= \pi_\mathcal{M}(v)
\right\} = \Gamma_k$. This follows because the $n$ variable Shannon
constraints contain the corresponding $k$ variable constraints as a
subset, and since any vector in $\Gamma_k$ can be extended to a vector
in $\Gamma_n$, for instance by taking
$H(X_{k+1})=H(X_{k+2})= \cdots = H(X_n)=0$ and
$H(X_S \cup X_T)=H(X_S)$ for any
$X_T \subseteq \left\{ X_{k+1},~X_{k+2},~\ldots~,~X_n \right\}$.

For a classical causal structure $C^{\cC}$, we will be interested in
the set
$\overline{\Gamma^{*} _{\mathcal{M}}}\left(C^{\cC}\right):=\left\{w
  \in \mathbb{R}^{2^k-1}_{\geq 0} \mid \exists v \in
  \overline{\Gamma^{*}}\left(C^{\cC}\right) \mathrm{ s.t. }  w=
  \pi_\mathcal{M}(v) \right\}$, which is by construction a convex
cone, since projection preserves convexity.  The following lemma
confirms that this is the entropy cone of the marginal scenario
$\mathcal{M}$ and thus also formally justifies the method of
projecting the sets directly.

\begin{lemma}
  $\overline{\Gamma^{*} _{\mathcal{M}}}\left(C^{\cC}\right)$ is equal
  to the set of entropy vectors compatible with the marginal scenario
  of the classical causal structure $C^{\cC}$, i.e.,
  ${
    \overline{\Gamma^{*}_{\mathcal{M}}}\left(C^{\cC}\right)=\overline{\left\{
        w \in \mathbb{R}^{2^{k}-1}_{\geq 0} \mid \exists P \in
        \mathcal{P}_\mathcal{M}\left(C^{\cC}\right) \mathrm{ s.t. }
        w={\bf{H}}(P) \right\} }} $.
\end{lemma}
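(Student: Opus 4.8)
The plan is to peel off all the definitional content first, so that the statement reduces to a single topological fact about the projection $\pi_\mathcal{M}$, and then to prove that fact by a two‑sided inclusion. Write $S:=\Gamma^{*}(C^{\cC})=\{\,{\bf H}(P)\mid P\in\cP(C^{\cC})\,\}$, the second equality being exactly the first assertion of Lemma~\ref{lemma:convexity}. The first thing I would check is that the set appearing inside the closure on the right‑hand side of the lemma is precisely $\pi_\mathcal{M}(S)$. By definition every $P\in\cP_\mathcal{M}(C^{\cC})$ is the marginal over the $n-k$ unobserved variables of some $\tilde P=\prod_i P_{X_i\mid X_i^{\downarrow_{1}}}\in\cP(C^{\cC})$; since $\pi_\mathcal{M}$ deletes exactly the coordinates $H(X_T)$ for which $X_T$ contains an unobserved variable, while the surviving coordinates $H(X_T)$ with $X_T\subseteq\{X_1,\dots,X_k\}$ are unchanged under marginalisation, one has ${\bf H}(P)=\pi_\mathcal{M}({\bf H}(\tilde P))$; conversely $\pi_\mathcal{M}({\bf H}(\tilde P))={\bf H}(\tilde P|_{\{X_1,\dots,X_k\}})$ with $\tilde P|_{\{X_1,\dots,X_k\}}\in\cP_\mathcal{M}(C^{\cC})$ for every $\tilde P\in\cP(C^{\cC})$. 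Hence that set equals $\pi_\mathcal{M}(S)$, and by the closure statement of Lemma~\ref{lemma:convexity} we also have $\overline{\Gamma^{*}}(C^{\cC})=\overline{S}$, so that $\overline{\Gamma^{*}_{\mathcal{M}}}(C^{\cC})=\pi_\mathcal{M}(\overline{S})$. Thus the lemma is equivalent to $\pi_\mathcal{M}(\overline{S})=\overline{\pi_\mathcal{M}(S)}$.

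For the inclusion $\pi_\mathcal{M}(\overline{S})\subseteq\overline{\pi_\mathcal{M}(S)}$ I would simply invoke continuity of the linear map $\pi_\mathcal{M}$: given $v\in\overline{S}$, pick $v_j\in S$ with $v_j\to v$; then $\pi_\mathcal{M}(v_j)\in\pi_\mathcal{M}(S)$ and $\pi_\mathcal{M}(v_j)\to\pi_\mathcal{M}(v)$, so $\pi_\mathcal{M}(v)\in\overline{\pi_\mathcal{M}(S)}$. This is the ``outer approximation'' half and is routine.

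The reverse inclusion $\overline{\pi_\mathcal{M}(S)}\subseteq\pi_\mathcal{M}(\overline{S})$ is the substantive one, and it reduces to showing that the convex cone $\pi_\mathcal{M}(\overline{S})=\pi_\mathcal{M}\!\big(\overline{\Gamma^{*}}(C^{\cC})\big)$ is \emph{closed}: once that is established, $\pi_\mathcal{M}(S)\subseteq\pi_\mathcal{M}(\overline{S})$ together with closedness gives $\overline{\pi_\mathcal{M}(S)}\subseteq\pi_\mathcal{M}(\overline{S})$ at once. This closedness is the main obstacle. It is not automatic: projections of closed convex cones are not closed in general, and here the trouble is real, since $\overline{\Gamma^{*}}(C^{\cC})$ has non‑trivial recession directions lying in $\ker\pi_\mathcal{M}$ — for instance entropy vectors in which all observed variables are deterministic while an unobserved common cause carries arbitrarily large entropy — so the standard recession‑cone criterion for closedness of a linear image does not apply verbatim.

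To overcome this I would argue that every $w\in\overline{\pi_\mathcal{M}(S)}$ admits a preimage under $\pi_\mathcal{M}$ inside a fixed bounded set. Starting from $w=\lim_j{\bf H}(Q_j)$ with $Q_j\in\cP_\mathcal{M}(C^{\cC})$, and an extension $\tilde Q_j\in\cP(C^{\cC})$ of each $Q_j$, one replaces $\tilde Q_j$ by an extension realising the unobserved variables with controlled entropy: a Carath\'eodory‑type cardinality bound for latent nodes of a DAG model (the latent's alphabet may be taken bounded in terms of its children) lets one keep the entropies of the unobserved variables under control, and a diagonal/limiting argument deals with the case where the observed alphabets themselves grow. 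This confines the vectors ${\bf H}(\tilde Q_j)$ to a bounded, hence relatively compact, subset of $\overline{\Gamma^{*}_n}$; passing to a convergent subsequence yields $v\in\overline{\Gamma^{*}}(C^{\cC})$ with $\pi_\mathcal{M}(v)=\lim_j\pi_\mathcal{M}({\bf H}(\tilde Q_j))=\lim_j{\bf H}(Q_j)=w$. Making the ``controlled entropy'' step precise, uniformly along the sequence, is the delicate part of the proof; everything else is bookkeeping with the definitions and Lemma~\ref{lemma:convexity}.
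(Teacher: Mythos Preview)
Your first paragraph \emph{is} the paper's proof. The paper establishes precisely the pre-closure equality
\[
\Gamma^{*}_{\mathcal{M}}(C^{\cC})\;=\;\pi_\mathcal{M}\big(\Gamma^{*}(C^{\cC})\big)\;=\;\bigl\{\,{\bf H}(P)\;:\;P\in\cP_\mathcal{M}(C^{\cC})\,\bigr\}
\]
by the same two-sided inclusion you give (marginalise $P$ to get $P'$, and conversely extend $P'$ to a compatible $P$), invoking Lemma~\ref{lemma:convexity} for the identification $\Gamma^{*}(C^{\cC})=\{{\bf H}(P):P\in\cP(C^{\cC})\}$. It then concludes with the single sentence ``Taking the topological closure of both sets concludes the proof.''

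Everything you write from the second paragraph onward goes \emph{beyond} the paper. You are right that the paper's last sentence silently identifies $\overline{\Gamma^{*}_{\mathcal{M}}}(C^{\cC}):=\pi_\mathcal{M}\!\big(\overline{\Gamma^{*}}(C^{\cC})\big)$ with $\overline{\pi_\mathcal{M}\big(\Gamma^{*}(C^{\cC})\big)}$, i.e.\ tacitly assumes $\pi_\mathcal{M}(\overline{S})=\overline{\pi_\mathcal{M}(S)}$; the paper neither states nor proves this. Your observation that the recession cone of $\overline{S}$ meets $\ker\pi_\mathcal{M}$ nontrivially (latent entropy can be arbitrarily large with deterministic observed variables), so that closedness of the linear image is not automatic, is correct and is a genuine subtlety the paper does not address. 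The Carath\'eodory-type strategy you sketch for bounding latent cardinalities is a sensible line of attack, though---as you yourself note---making it uniform along a sequence where the observed alphabets may themselves grow is not carried out, so the gap is identified but not closed.

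In short: your argument contains the paper's proof verbatim and then attempts to justify a step the paper takes for granted; the attempted justification remains a sketch.
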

\begin{proof}
  Let $\cF$ denote the set on the rhs of the statement in the lemma.
  Note that $w\in\Gamma^{*}_{\mathcal{M}}\left(C^{\cC}\right)$ implies
  that there exists $v\in\Gamma^{*}\left(C^{\cC}\right)$ s.t.\
  $w=\pi_\mathcal{M}(v)$.  Using Lemma~\ref{lemma:convexity}, we have
  $v={\bf H}(P)$ for some
  $P=\prod_iP_\mathrm{X_i|X_i^{\downarrow_{1}}}$.  If we take
  $P'=\sum_{X_{k+1},\ldots,X_n}P$ then $w={\bf H}(P')$ and hence
  $w\in\cF$.

  Conversely, $w\in\cF$ implies that there exists
  $P'\in\mathcal{P}_\mathcal{M}\left(C^{\cC}\right)$ s.t.\
  $w={\bf H}(P')$ and hence there exists
  $P\in\mathcal{P}\left(C^\cC \right)$ such that
  $P'=\sum_{X_{k+1},\ldots,X_n}P$.  If we take $v={\bf H}(P)$, then
  $w=\pi_\mathcal{M}(v)$ and hence
  $w\in\Gamma^{*} _{\mathcal{M}}\left(C^{\cC}\right)$. Taking the
  topological closure of both sets concludes the proof.
\end{proof}
An outer approximation to
$\overline{\Gamma^{*}_{\mathcal{M}}}\left(C^{\cC}\right)$ is
$\Gamma_{\mathcal{M}}\left(C^{\cC}\right):=\left\{w \in
  \mathbb{R}^{2^k-1}_{\geq 0} \mid \exists v \in
  \Gamma\left(C^{\cC}\right) \mathrm{ s.t. }  w= \pi_\mathcal{M}(v)
\right\}$, which can be written as
$\Gamma_{\mathcal{M}}\left(C^{\cC}\right)=\left\{w \in
  \Gamma_\mathcal{M} \mid M_\mathcal{M}\left(C^{\cC}\right) \cdot w
  \geq 0 \right\}$, where $M_\mathcal{M}\left(C^{\cC}\right)$ is the
matrix obtained via Fourier-Motzkin elimination and encodes the set of
inequalities on $\mathbb{R}^{2^k-1}_{\geq 0}$ that are implied by the
fact that $M_\mathrm{CI}\left(C^{\cC}\right)\cdot v=0$ and
$M_\mathrm{SH}^n\cdot v\geq 0$ hold on $\mathbb{R}^{2^n-1}_{\geq 0}$
(except for the $k$-variable Shannon constraints which are already
included in $\Gamma_\mathcal{M}$).

\begin{example}[Entropic outer approximation for the marginal cone of
  the instrumental
  scenario~\cite{Henson2014,Chaves2014b}] \label{example:IC_class_obs}
  For the instrumental scenario the outer approximation to its
  marginal cone is found by projecting $\Gamma\left(IC^{\cC}\right)$
  to its three variable scenario and yields
  $\Gamma_{\mathcal{M}}\left(IC^{\cC}\right) =\left\{w \in
    \Gamma_\mathcal{M} \mid M_\mathcal{M}\left(IC^{\cC}\right) \cdot w
    \geq 0 \right\}$, where
  $M_\mathcal{M}\left(IC^{\cC}\right)=\left( \begin{array}{ccccccc} -1
      & 0 & 1 & 0 & 0 & -1 & 1 \end{array} \right)$ corresponds to the
  inequality $I(X:YZ) \leq H(Z)$ from~\cite{Henson2014,Chaves2014b}.
  As
  $\Gamma\left(IC^{\cC}\right)=\overline{\Gamma^{*}}\left(IC^{\cC}\right)$
  holds, we also have
  $\Gamma_{\mathcal{M}}\left(IC^{\cC}\right)=\overline{\Gamma^{*}_{\mathcal{M}}}\left(IC^{\cC}\right)$.
\end{example}
As mentioned previously, non-Shannon inequalities cannot give any new
entropic constraints for $IC$, as the Shannon approximation is already
tight. However, in many causal structures they do. For instance in the
triangle scenario of Figure~\ref{fig:instrumental}(c), non-Shannon
inequalities still lead to new entropic constraints, even after
marginalisation to the three observed variables~\cite{non_shan}.

\subsubsection{Causal structures with unobserved quantum
  systems} \label{sec:quantumlatent}
A quantum causal structure differs from its classical counterpart in
that unobserved systems correspond to shared quantum states.

\begin{definition}
  A \emph{quantum causal structure}, $C^{\qQ}$, is a causal structure
  where each observed node has a corresponding random variable, and
  each unobserved node has an associated quantum system.
\end{definition}

In a classical causal structure the edges of the DAG represent the
propagation of classical information, and, at a node with incoming
edges, the random variable there can be generated by applying an
arbitrary function to its parents.  We are hence implicitly assuming
that all the information about the parents is transmitted to its
children (otherwise the set of allowed functions would be restricted).
This does not pose a problem since classical information can be
copied.  In the quantum case, on the other hand, the no-cloning
theorem means that the children of a node cannot (in general) all have
access to the same information as is present at that node.
Furthermore, the analogue of performing arbitrary functions in the
classical case is replaced by arbitrary quantum operations.  Such a
quantum framework that allows for an analysis with entropy vectors was
introduced in~\cite{Chaves2015}. In the following we outline this
approach. However, for unity of description, our account of quantum
causal structures is based upon the viewpoint that is taken for
generalised causal structures in~\cite{Henson2014}, which we review in
the next section\footnote{The difference is as follows: In~\cite{Chaves2015}
nodes correspond to quantum systems. All outgoing edges of a node
together define a completely positive trace preserving (CPTP) map with
output states corresponding to the joint state associated with its
child nodes. Similarly, the CPTP map associated to the input edges of
a node must map the states of the parent nodes to the node in
question. In~\cite{Henson2014}, on the other hand, edges correspond to
states whereas the transformations occur at the nodes.}.

Let $C^{\qQ}$ be a quantum causal structure. Nodes without input edges
correspond to the preparation of a quantum state described by a
density operator on a Hilbert space, e.g., $\rho_A \in \cS(\cH_A)$ for
a node $A$, where for observed nodes this state is required to be
classical\footnote{$\cS(\cH)$ denotes the set of all density operators
  on a Hilbert space $\cH$.}.  For each directed edge in the graph
there is a corresponding subsystem with Hilbert space labelled by the
edge's input and output nodes. For instance, if $Y$ and $Z$ are the
only children of $A$ then there are associated spaces $\cH_{A_Y}$ and
$\cH_{A_Z}$ such that $\cH_A=\cH_{A_Y}\otimes\cH_{A_Z}$\footnote{Note
  that in the classical case these subsystems may all be taken to be
  copies of the system itself.}. At an unobserved node, a CPTP map from
the joint state of all its input edges to the joint state of its
output edges is performed. A node is labelled by its output state. For
an observed node the latter is classical. Hence, it corresponds to a
random variable that represents the output statistics obtained in a
measurement by applying a positive operator valued measure (POVM) to
the input states\footnote{Note that preparation and measurement can
  also be seen as CPTP maps with classical input and output systems
  respectively, thus allowing for a unified formulation.}. If all
input edges are classical this can be interpreted as a stochastic map
between random variables.

A distribution, $P$, over the observed nodes of a causal structure
$C^{\qQ}$ is compatible with $C^{\qQ}$ if there exists a quantum state
labelling each unobserved node (with subsystems for each unobserved
edge) and transformations, i.e., preparations and CPTP maps for each
unobserved node as well as POVMs for each observed node, that allow
for the generation of $P$ by means of the Born rule. We denote the set
of all compatible distributions
$\mathcal{P}_{\mathcal{M}}\left( C^{\qQ} \right)$.

\begin{example}[Compatible distributions in the quantum instrumental scenario]
  For the quantum instrumental scenario
  (Figure~\ref{fig:instrumental}(a)),
  ${\mathcal{P}_{\mathcal{M}}\left( IC^{\qQ} \right)=\left\{ P_{XYZ}
      \in \cP_3 \mid P_{XYZ}=\operatorname{tr}((E_X^{Z} \otimes
      F_Z^{Y}) \rho_{A}) P_X \right\} }$ is the set of compatible
  distributions.  A state
  $\rho_A \in \mathcal{S}(\cH_{A_Z}\otimes \cH_{A_Y})$ is
  prepared. Depending on the random variable $X$, a POVM
  $\left\{E_X^{Z} \right\}_Z$ on $\cH_{A_Z}$ is applied to generate
  the output distribution of the observed variable $Z$. Depending on
  the latter, another POVM $\left\{ F_Z^{Y}\right\}_Y$ is applied to
  generate the distribution of $Y$.
\end{example}

The set of entropy vectors of compatible probability distributions
over the observed nodes,
$\mathcal{P}_{\mathcal{M}}\left(C^{\qQ}\right)$, is
$\Gamma^{*}_{\mathcal{M}}\left( C^{\qQ} \right) := \left\{w \in
  \mathbb{R}^{2^{k}-1}_{\geq 0} \mid \exists P \in
  \mathcal{P}_{\mathcal{M}}\left( C^{\qQ} \right) \mathrm{ s.t. }  w=
  {\bf{H}}(P) \right\}$.  Outer approximations
$\Gamma_\mathcal{M} \left(C^{\qQ} \right)$ were first derived
in~\cite{Chaves2015}, a procedure that we outline in the following.
For their construction, an entropy is associated to each random
variable and to each subsystem of a quantum state (equivalently each
edge originating at a quantum node), corresponding to the von Neumann
entropy of the respective system. For convenience of exposition, edges
and their associated systems share the same label.  The \emph{von
  Neumann entropy} of a density operator $\rho_{X} \in \cS(\cH_X)$ is
defined as
$$H(X):=-\operatorname{tr}(\rho_X \log_2 ( \rho_X))\, .$$
The quantum conditional entropy, mutual- and conditional mutual
information are defined as in the classical case and with the von
Neumann entropy replacing the Shannon entropy.

Because of the impossibility of cloning, the outcomes and the quantum
systems that led to them do not exist simultaneously. Therefore there
is in general no joint multi-party quantum state for all subsystems
and it does not make sense to talk about the joint entropy of the
states and outcomes.  More concretely, if a system $A$ is measured to
produce $Z$, then $\rho_{AZ}$ is not defined and hence neither is
$H(AZ)$\footnote{Attempts to circumvent this have been made, see for
  example~\cite{Leifer2013}.}.
\begin{definition}
  Two subsystems in a quantum causal structure $C^{\qQ}$
  \emph{coexist} if neither of them is a quantum ancestor of the
  other. A set of subsystems that mutually coexist is termed
  \emph{coexisting}.
\end{definition}\label{def:coex}
A quantum causal structure may have several maximal coexisting subsets.  Only within such subsets is there a well defined joint quantum state and joint entropy.

\begin{example}[Coexisting sets in the quantum instrumental scenario]\label{example:coexsets}
  Consider the quantum version of the instrumental scenario, as
  illustrated in Figure~\ref{fig:instrumental}(a). There are three
  observed variables as well as two edges originating at unobserved
  (quantum) nodes, hence $5$ variables to consider. More precisely,
  the quantum node $A$ has two associated subsystems $A_{\mathrm{Z}}$
  and $A_{\mathrm{Y}}$. The correlations seen at the two observed
  nodes $Z$ and $Y$ are formed by measurement on the respective
  subsystems $A_{\mathrm{Z}}$ and $A_{\mathrm{Y}}$. The coexisting
  sets in this causal structure are
  $\left\{ A_{\mathrm{Y}},~A_{\mathrm{Z}},~X \right\}$,
  $\left\{ A_{\mathrm{Y}},~X,~Z \right\}$ and
  $\left\{ X,~Y,~Z \right\}$ and their (non-empty) proper subsets.
\end{example}

Note that without loss of generality we can assume that any initial,
i.e., parentless quantum states such as $\rho_A$ above, are pure. This
is because any mixed state can be purified, and if the transformations
and measurement operators are then taken to act trivially on the
purifying systems the same statistics are observed.  In the causal
structure of Example~\ref{example:coexsets}, this implies that
$\rho_A$ can be considered to be pure and thus
$H(A_{\mathrm{Y}} A_{\mathrm{Z}})=0$. The Schmidt decomposition then
implies that $H(A_{\mathrm{Y}})=H(A_{\mathrm{Z}})$. This is
computationally useful as it reduces the number of free parameters in
the entropic description of the scenario.  Furthermore, by
Stinespring's theorem~\cite{Stinespring}, whenever a CPTP map is
applied at a node that has at least one quantum child, then one can
instead consider an isometry to a larger output system. The additional
system that is required for this can be taken to be part of the
unobserved quantum output (or one of them in case of several quantum
output nodes). Each such case allows for the reduction of the number
of variables by one, since the joint entropy of all inputs to such a
node must be equal to that of all its outputs.

Quantum states are known to obey submodularity~\cite{Lieb1973} and
also obey the condition
\begin{itemize}
\item Weak monotonicity~\cite{Lieb1973}:
  $H(X_S\setminus X_T) + H(X_T\setminus X_S) \leq H(X_S)+H(X_T)$, for
  all $X_S$, $X_T\subseteq \Omega$ (recall $H(\left\{ \right\})=0$).
\end{itemize} 
This the dual of submodularity in the sense that the two inequalities
can be derived from each other by considering purifications of the
corresponding quantum states~\cite{Araki1970}.

Within the context of causal structures, these relations can always be
applied between variables in the same coexisting set. In addition,
whenever it is impossible for there to be entanglement between the
subsystems $X_S\cap X_T$ and $X_S\setminus X_T$ --- for instance if
these subsystems are in a cq-state --- the monotonicity constraint
$H(X_S\setminus X_T)\leq H(X_S)$ holds.  If it is also impossible for
there to be entanglement between $X_S\cap X_T$ and $X_T\setminus X_S$,
then the monotonicity relation $H(X_T\setminus X_S)\leq H(X_T)$ holds
rendering the weak monotonicity relation stated above redundant.

Altogether, these considerations lead to a set of basic inequalities
containing some Shannon and some weak-monotonicity inequalities, which
are conveniently expressed in a matrix
$M_\mathrm{B}\left(C^{\qQ}\right)$. This way of approximating the
entropic cone in the quantum case is inspired by work on the entropic
cone for multi-party quantum states~\cite{Pippenger2003}.  Note also
that there are no further inequalities for the von Neumann entropy
known to date (contrary to the classical case where a variety of non
Shannon inequalities is known), except under additional
constraints~\cite{Linden2005,Cadney2012,Linden2013,Gross2013,Bao1,Bao2}.

The conditional independence constraints in $C^{\qQ}$ cannot be
identified by Proposition~\ref{prop:localmark}, because variables do
not coexist with any quantum parents and hence conditioning a variable
on a quantum parent is not meaningful.  Nonetheless, among the
variables in a coexisting set the conditional independences that are
valid for $C^{\cC}$ also hold in $C^{\qQ}$.  This can be seen as
follows. First, the validity of any constraints that involve only
observed variables (which are always part of a coexisting set) hold by
Proposition~\ref{prop:dseparationgDAG} below.  Secondly, for
unobserved systems only their classical ancestors and none of their
descendants can be part of the same coexisting set. An unobserved
system is hence independent of any subset of the same coexisting set
with which it shares no ancestors. Note that each of the subsystems
associated with a quantum node is considered to be a parent of all of
the node's children (see Figure~\ref{fig:instrumental} for an
example).

In addition, suppose $X_S$ and $X_T$ are disjoint subsets of a
coexisting set, $\Xi$, and that the unobserved system $A$ is also in
$\Xi$.  Then $I(A:X_S|X_T)=0$ if $X_T$ d-separates $A$ from $X_S$
(in the full graph including quantum nodes)\footnote{This follows because any
quantum states generated from the classical separating variables may
be obtained by first producing random variables from the latter (for
which the usual d-separation rules hold) and then using these to
generate the quantum states in question (potentially after generating
other variables in the network), hence retaining conditional
independence.}.  The same considerations can be made for sets of
unobserved systems.  These independence constraints may be assembled
in a matrix $M_\mathrm{QCI}\left(C^{\qQ}\right)$.

Among the variables that do not coexist, some are obtained from others
by means of quantum operations. These variables are thus related by
data processing inequalities~(DPIs)~\cite{BookNielsenChuang2000}.
\begin{prop}[DPI] \label{prop:DPI} Let
  $\rho_\mathrm{X_S X_T} \in \mathcal{S}(\mathcal{H}_{X_\mathrm{S}}
  \otimes \mathcal{H}_{X_{\mathrm{T}}})$
  and $\mathcal{E}$ be a completely positive trace preserving (CPTP)
  map\footnote{Note that the map from a quantum state to the diagonal
    state with entries equal to the outcome probabilities of a
    measurement is a CPTP map and hence also obeys the DPI.} on
  $\mathcal{S}(\mathcal{H}_{X_{\mathrm{T}}})$\footnote{In general
    $\mathcal{E}$ can be a map between operators on different Hilbert
    spaces, i.e.,\
    $\mathcal{E}:\mathcal{S}(\mathcal{H}'_{X_{\mathrm{T}}})
    \rightarrow \mathcal{S}(\mathcal{H}''_{X_{\mathrm{T}}})$.
    However, as we can consider these operators to act on the same
    larger Hilbert space we can w.l.o.g.\ take $\mathcal{E}$ to be a
    map on this larger space, which we call
    $\mathcal{S}(\mathcal{H}_{X_{\mathrm{T}}})$.} leading to a state
  $\rho'_\mathrm{X_{\mathrm{S}} X_{\mathrm{T}}}$.  Then
  ${I(X_{\mathrm{S}} : X_{\mathrm{T}})_{\rho'_\mathrm{X_S X_T}} \leq
    I(X_{\mathrm{S}} : X_{\mathrm{T}})_{\rho_\mathrm{X_S X_T}}}$.
\end{prop}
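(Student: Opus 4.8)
The plan is to reduce the statement to the submodularity (equivalently strong subadditivity) of the von Neumann entropy, which we already have available~\cite{Lieb1973}. First I would invoke the Stinespring dilation theorem~\cite{Stinespring}: using the convention of the footnote (so that $\cE$ maps $\cS(\cH_{X_{\mathrm{T}}})$ to itself), write $\cE(\sigma)=\operatorname{tr}_{E}\!\left(V\sigma V^{\dagger}\right)$ for some isometry $V:\cH_{X_{\mathrm{T}}}\rightarrow\cH_{X_{\mathrm{T}}}\otimes\cH_{E}$ with $\cH_{E}$ an auxiliary environment space. Then I would define the tripartite state $\sigma_{X_{\mathrm{S}}X_{\mathrm{T}}E}:=(\operatorname{id}_{X_{\mathrm{S}}}\otimes V)\,\rho_{X_{\mathrm{S}}X_{\mathrm{T}}}\,(\operatorname{id}_{X_{\mathrm{S}}}\otimes V)^{\dagger}$, which is a genuine joint state of the three coexisting subsystems $X_{\mathrm{S}}$, $X_{\mathrm{T}}$ and $E$, so all entropies written below are meaningful.

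Next I would record two elementary facts about $\sigma$. First, since $\operatorname{id}_{X_{\mathrm{S}}}\otimes V$ is an isometry, $\sigma_{X_{\mathrm{S}}X_{\mathrm{T}}E}$ has the same spectrum as $\rho_{X_{\mathrm{S}}X_{\mathrm{T}}}$; since $V$ acts only on $X_{\mathrm{T}}$ and is an isometry, $\sigma_{X_{\mathrm{T}}E}=V\rho_{X_{\mathrm{T}}}V^{\dagger}$ has the same nonzero spectrum as $\rho_{X_{\mathrm{T}}}$; and since $\cE$ is trace preserving, $\sigma_{X_{\mathrm{S}}}=\rho_{X_{\mathrm{S}}}$. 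Hence $H(X_{\mathrm{S}})_{\sigma}=H(X_{\mathrm{S}})_{\rho}$, $H(X_{\mathrm{T}}E)_{\sigma}=H(X_{\mathrm{T}})_{\rho}$ and $H(X_{\mathrm{S}}X_{\mathrm{T}}E)_{\sigma}=H(X_{\mathrm{S}}X_{\mathrm{T}})_{\rho}$, so that $I(X_{\mathrm{S}}:X_{\mathrm{T}}E)_{\sigma}=I(X_{\mathrm{S}}:X_{\mathrm{T}})_{\rho}$. Second, tracing out $E$ gives $\operatorname{tr}_{E}(\sigma_{X_{\mathrm{S}}X_{\mathrm{T}}E})=(\operatorname{id}_{X_{\mathrm{S}}}\otimes\cE)(\rho_{X_{\mathrm{S}}X_{\mathrm{T}}})=\rho'_{X_{\mathrm{S}}X_{\mathrm{T}}}$, so $I(X_{\mathrm{S}}:X_{\mathrm{T}})_{\rho'}=I(X_{\mathrm{S}}:X_{\mathrm{T}})_{\sigma}$.

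The core step is then one application of submodularity to the state $\sigma$ with the two subsets of subsystems $\{X_{\mathrm{S}},X_{\mathrm{T}}\}$ and $\{X_{\mathrm{T}},E\}$, whose intersection is $\{X_{\mathrm{T}}\}$ and whose union is $\{X_{\mathrm{S}},X_{\mathrm{T}},E\}$:
$$H(X_{\mathrm{T}})_{\sigma}+H(X_{\mathrm{S}}X_{\mathrm{T}}E)_{\sigma}\leq H(X_{\mathrm{S}}X_{\mathrm{T}})_{\sigma}+H(X_{\mathrm{T}}E)_{\sigma}.$$
Adding $H(X_{\mathrm{S}})_{\sigma}$ to both sides and rearranging, this is exactly $I(X_{\mathrm{S}}:X_{\mathrm{T}})_{\sigma}\leq I(X_{\mathrm{S}}:X_{\mathrm{T}}E)_{\sigma}$. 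Chaining this with the two facts of the previous paragraph gives $I(X_{\mathrm{S}}:X_{\mathrm{T}})_{\rho'}=I(X_{\mathrm{S}}:X_{\mathrm{T}})_{\sigma}\leq I(X_{\mathrm{S}}:X_{\mathrm{T}}E)_{\sigma}=I(X_{\mathrm{S}}:X_{\mathrm{T}})_{\rho}$, which is the claim. (An alternative route would replace this step by monotonicity of quantum relative entropy under the CPTP map $\operatorname{id}_{X_{\mathrm{S}}}\otimes\cE$ applied to $\rho_{X_{\mathrm{S}}X_{\mathrm{T}}}$ and $\rho_{X_{\mathrm{S}}}\otimes\rho_{X_{\mathrm{T}}}$, but that result is itself essentially equivalent to submodularity.)

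I do not expect a genuine obstacle here: the only substantive input is the submodularity of the von Neumann entropy (the Lieb--Ruskai theorem), which we are treating as given. The points that need care are purely bookkeeping — verifying that the dilation can indeed be taken with output space $\cH_{X_{\mathrm{T}}}\otimes\cH_{E}$ as in the footnote, that $\operatorname{id}_{X_{\mathrm{S}}}\otimes V$ being an isometry really does preserve the spectra (hence entropies) used above, and that $X_{\mathrm{S}}$, $X_{\mathrm{T}}$ and $E$ genuinely coexist so that submodularity may be applied to $\sigma$.
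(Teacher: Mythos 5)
Your proof is correct: the reduction via a Stinespring dilation $V:\cH_{X_\mathrm{T}}\to\cH_{X_\mathrm{T}}\otimes\cH_E$ to the tripartite state $\sigma$, the bookkeeping identities $I(X_\mathrm{S}:X_\mathrm{T}E)_\sigma=I(X_\mathrm{S}:X_\mathrm{T})_\rho$ and $I(X_\mathrm{S}:X_\mathrm{T})_\sigma=I(X_\mathrm{S}:X_\mathrm{T})_{\rho'}$, and the single application of strong subadditivity all check out. Note that the paper does not prove Proposition~\ref{prop:DPI} at all — it is quoted as a standard result with a citation to~\cite{BookNielsenChuang2000} — and your argument is precisely the textbook derivation (dilation plus submodularity~\cite{Lieb1973}), so there is nothing to reconcile with the paper beyond that.
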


The data processing inequalities provide an additional set of entropic
constraints, which can be expressed in terms of a matrix inequality
$M_\mathrm{DPI}\left(C^{\qQ}\right) \cdot v \geq 0$\footnote{There are
  also DPIs for conditional mutual information, e.g.,
  $I(A:B|C)_{\rho'_{ABC}}\leq I(A:B|C)_{\rho_{ABC}}$ for
  $\rho'_{ABC}=(\cI\otimes\cE\otimes\cI)(\rho_{ABC})$, but these are
  implied by Proposition~\ref{prop:DPI}, so they need not be treated
  separately here.}.  In general, there are a large number of
variables for which data processing inequalities hold. It is thus
beneficial to derive rules that specify which of the inequalities are
needed.  First, note that whenever a concatenation of two CPTP maps
$\mathcal{E}_1$ and $\mathcal{E}_2$,
$\mathcal{E}=\mathcal{E}_2 \circ \mathcal{E}_1$, is applied to a
state, then any DPIs for inputs and outputs of $\mathcal{E}$ are
implied by the DPIs for $\mathcal{E}_1$ and $\mathcal{E}_2$\footnote{This
follows by deriving the DPIs for input and output states of
$\mathcal{E}_1$ and $\mathcal{E}_2$ respectively and combining the
two.}. Hence, the DPIs for composed maps $\mathcal{E}$ never have to be
considered as separate constraints.

Secondly, whenever a state
$\rho_\mathrm{X_S X_T X_R} \in \mathcal{S}(\mathcal{H}_{X_\mathrm{S}}
\otimes \mathcal{H}_{X_{\mathrm{T}}}\otimes
\mathcal{H}_{X_{\mathrm{R}}})$
can be decomposed as
$\rho_\mathrm{X_S X_T X_R}=\rho_\mathrm{X_S X_T}\otimes
\rho_\mathrm{X_R}$
and a CPTP map $\mathcal{E}$ transforms the state on
$\mathcal{S}(\mathcal{H}_{X_{\mathrm{S}}})$. Then any DPIs for
$\rho_\mathrm{X_S X_T X_R}$ are implied by the DPIs for
$\rho_\mathrm{X_S X_T}$\footnote{This follows from
  $I(X_{\mathrm{S}}:X_{\mathrm{T}}X_{\mathrm{R}})=I(X_{\mathrm{S}}:X_{\mathrm{T}})$,
  $I(X_{\mathrm{S}}X_{\mathrm{R}}:X_{\mathrm{T}})=I(X_{\mathrm{S}}:X_{\mathrm{T}})$
  and $I(X_{\mathrm{S}}X_{\mathrm{T}}:X_{\mathrm{R}})=0$.}.

Furthermore, whenever a node has classical and quantum inputs, there
is not only a CPTP map generating its output state but this map can be
extended to a CPTP map that simultaneously retains the classical
inputs, as is the content of the following lemma, which also shows
that retaining a copy of the classical inputs leads to tighter
entropic inequalities.
\begin{lemma}\label{lemma:simplify_DPI_classicalquantum}
Let $Y$ be a node with classical and quantum inputs
$X_\mathrm{C}$ and $X_\mathrm{Q}$ and $\cE$ be a CPTP map that acts at
this node, i.e., $\cE$ is a map from
$\cS(\cH_{X_\mathrm{C}}\otimes\cH_{X_\mathrm{Q}})$ to $\cS(\cH_Y)$. Then
$\cE$ can be extended to a map
$\cE':\cS(\cH_{X_\mathrm{C}}\otimes\cH_{X_\mathrm{Q}})\to\cS(\cH_{X_\mathrm{C}}\otimes\cH_Y)$
such that
$\cE':\rho_{X_\mathrm{C}X_\mathrm{Q}}\mapsto\rho'_{X_\mathrm{C}Y}$
with the property that $\rho'_{X_\mathrm{C}Y}$ is classical on $\cH_{X_\mathrm{C}}$ and $\rho'_{X_\mathrm{C}}=\rho_{X_\mathrm{C}}$.
Furthermore, the DPIs for $\cE'$ imply those for $\cE$.
\end{lemma}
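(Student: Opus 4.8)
The plan is to construct $\cE'$ explicitly by composing $\cE$ with a "copy" operation on the classical register, and then to verify the two claimed properties by tracking the reduced states. First I would write the classical input in a fixed orthonormal basis, $\rho_{X_\mathrm{C}}=\sum_c p_c\,\ketbra{c}{c}$, so that the joint input state is of the cq-form $\rho_{X_\mathrm{C}X_\mathrm{Q}}=\sum_c p_c\,\ketbra{c}{c}\otimes\sigma_c$ for some states $\sigma_c\in\cS(\cH_{X_\mathrm{Q}})$. (If the true input is not already cq, the action of a node with a classical input means the map may be taken to first dephase $X_\mathrm{C}$ in its classical basis without changing the observed statistics, so this is w.l.o.g.) Define the CPTP copy map $\Delta:\cS(\cH_{X_\mathrm{C}})\to\cS(\cH_{X_\mathrm{C}}\otimes\cH_{X_\mathrm{C}'})$ by $\Delta(\ketbra{c}{c'})=\delta_{cc'}\ketbra{c}{c}\otimes\ketbra{c}{c}$, which is legitimate on diagonal inputs, and set
\begin{equation*}
\cE':=(\mathrm{id}_{X_\mathrm{C}}\otimes\cE)\circ(\Delta\otimes\mathrm{id}_{X_\mathrm{Q}}),
\end{equation*}
relabelling $X_\mathrm{C}'$ as the retained $X_\mathrm{C}$ in the output. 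Concretely $\cE':\rho_{X_\mathrm{C}X_\mathrm{Q}}\mapsto\rho'_{X_\mathrm{C}Y}=\sum_c p_c\,\ketbra{c}{c}\otimes\cE(\ketbra{c}{c}\otimes\sigma_c)$.

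Next I would check the stated properties. By construction $\rho'_{X_\mathrm{C}Y}$ is block-diagonal in the classical basis of $X_\mathrm{C}$, hence classical on $\cH_{X_\mathrm{C}}$; and tracing out $Y$ gives $\rho'_{X_\mathrm{C}}=\sum_c p_c\,\ketbra{c}{c}\operatorname{tr}(\cE(\ketbra{c}{c}\otimes\sigma_c))=\sum_c p_c\ketbra{c}{c}=\rho_{X_\mathrm{C}}$, since $\cE$ is trace preserving. Also tracing out the retained register recovers $\sum_c p_c\,\cE(\ketbra{c}{c}\otimes\sigma_c)=\cE(\rho_{X_\mathrm{C}X_\mathrm{Q}})$, so $\cE'$ genuinely extends $\cE$ in the sense that the $Y$-marginal of its output equals the output of $\cE$. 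For the final claim, that the DPIs for $\cE'$ imply those for $\cE$: any DPI for $\cE$ has the form $I(X_\mathrm{R}:Y)_{(\mathrm{id}\otimes\cE)(\tau)}\leq I(X_\mathrm{R}:X_\mathrm{C}X_\mathrm{Q})_{\tau}$ for some state $\tau$ on $\cH_{X_\mathrm{R}}\otimes\cH_{X_\mathrm{C}}\otimes\cH_{X_\mathrm{Q}}$ with an ambient system $X_\mathrm{R}$. The corresponding DPI for $\cE'$ reads $I(X_\mathrm{R}:X_\mathrm{C}Y)_{(\mathrm{id}\otimes\cE')(\tau)}\leq I(X_\mathrm{R}:X_\mathrm{C}X_\mathrm{Q})_{\tau}$, and since $I(X_\mathrm{R}:X_\mathrm{C}Y)\geq I(X_\mathrm{R}:Y)$ by monotonicity of mutual information under discarding $X_\mathrm{C}$ (equivalently, strong subadditivity), the $\cE'$-DPI is the stronger statement. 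This is exactly the assertion that retaining the classical input yields tighter inequalities, and it also shows the $\cE$-DPIs need not be listed separately.

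The main obstacle is not any hard inequality — strong subadditivity does all the work — but rather making precise what "the DPIs for $\cE$" and "for $\cE'$" mean as families of linear constraints on the entropy vector of the causal structure, and checking that the $X_\mathrm{C}$-register one copies is the same object that already appears elsewhere in the entropy vector (so that introducing $\cE'$ does not spuriously enlarge the variable set). The resolution is that $\cE'$ has the same $Y$-output as $\cE$ and its only extra output is a copy of an already-present classical variable, so by the earlier observations on composed maps and on product-form inputs (the two footnoted reductions preceding the lemma) the bookkeeping is consistent: one replaces the node's map by $\cE'$ throughout, and every DPI previously attributed to $\cE$ is recovered as a consequence of a DPI of $\cE'$ together with monotonicity. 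A minor point to handle carefully is the w.l.o.g.\ dephasing step at the start; I would justify it by noting that a node "with classical input $X_\mathrm{C}$" means precisely that the state arriving on that wire is diagonal in a fixed basis, or equivalently that the node's map factors through the dephasing channel, which is the standing convention for classical nodes in this framework.
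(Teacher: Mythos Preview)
Your proof is correct and follows essentially the same route as the paper: build $\cE'$ by copying the classical register and then applying $\cE$, and deduce that the $\cE'$-DPIs imply the $\cE$-DPIs via monotonicity of mutual information (i.e., submodularity). The paper's version additionally carries an explicit spectator system $X_S$ on the output side of the DPI (writing $I(X_\mathrm{C}X_\mathrm{Q}X_S:X_T)\geq I(YX_S:X_T)$ rather than just $I(X_R:X_\mathrm{C}X_\mathrm{Q})\geq I(X_R:Y)$), but your argument extends to that case verbatim.
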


\begin{proof}
  The first part of the lemma follows because classical information
  can be copied, and hence $\cE'$ can be decomposed into first copying
  $X_\mathrm{C}$, and then performing $\cE$\footnote{Alternatively, we can
  think of $\cE$ as the concatenation of $\cE'$ with a partial trace;
  this allows us to use the same output state $\rho'$ for both maps in
  the argument below.}.

  Suppose
  $\cE:\rho_{X_\mathrm{C}X_\mathrm{Q}}\mapsto\rho'_Y$. The
  second part follows because if
  ${I( X_\mathrm{C} X_\mathrm{Q} X_\mathrm{S} : X_\mathrm{T})_{\rho}}
  \geq {I(Y X_\mathrm{S} : X_\mathrm{T})_{\rho'}}$ is a valid DPI for
  $\cE$ then
  ${I( X_\mathrm{C} X_\mathrm{Q} X_\mathrm{S} : X_\mathrm{T})_{\rho}
    \geq I( X_\mathrm{C} Y X_\mathrm{S} : X_\mathrm{T})_{\rho'}}$ is
  valid for $\cE'$.  The second of these implies the first by the
  submodularity relation
  ${I( X_\mathrm{C} Y X_\mathrm{S} : X_\mathrm{T})_{\rho'}} \geq {I(Y
    X_\mathrm{S} : X_\mathrm{T})_{\rho'}}$.
\end{proof}

All the above (in)equalities are necessary conditions for a vector to
be an entropy vector compatible with the causal structure
$C^{\qQ}$. They constrain a polyhedral cone in
$\mathbb{R}_{\geq 0}^{m}$, where $m$ is the total number of coexisting
sets of $C^{\qQ}$,
$$\Gamma\left(C^{\qQ}\right):=\left\{v \in \mathbb{R}_{\geq 0}^{m} \mid
  M_\mathrm{B}\left(C^{\qQ}\right) \cdot v \geq 0,
  M_\mathrm{QCI}\left(C^{\qQ}\right) \cdot v=0,\text{ and }
  M_\mathrm{DPI}\left(C^{\qQ}\right) \cdot v \geq 0 \right\}.$$

\begin{example}[Entropic constraints for the quantum instrumental scenario]
  The cone
  ${\Gamma\left(IC^{\qQ}\right)=\left\{v \in \mathbb{R}_{\geq 0}^{15}
      \mid M_\mathrm{B}\left(IC^{\qQ}\right) \cdot v \geq 0,
      M_\mathrm{QCI}\left(IC^{\qQ}\right) \cdot v=0 \text{ and }
      M_\mathrm{DPI}\left(IC^{\qQ}\right) \cdot v \geq 0 \right\}}$
  involves the matrix $M_\mathrm{B}\left(IC^{\qQ}\right)$ that
  features $29$ (independent) inequalities\footnote{Note that the only weak
  monotonicity relations that are not made redundant by other basic
  inequalities are $H(A_Y|A_ZX)+H(A_Y)\geq 0$,
  $H(A_Z|A_YX)+H(A_Z)\geq 0$, $H(A_Y \mid A_Z)+H(A_Y \mid X) \geq 0$
  and $H(A_Z \mid A_Y)+H(A_Z \mid X) \geq 0$.}.  In this case a single
  independence constraint encodes that $X$ is independent of
  $A_YA_Z$:
  $$M_\mathrm{QCI}\left(IC^{\qQ}\right)=
  \left( \begin{array}{ccccccccccccccc} 0 & 0 & -1 & 0 & 0 & -1 & 0 &
      0 & 0 & 0 & 0 & 0 & 1 & 0 & 0\end{array} \right)\, .$$ Two data
  processing inequalities are required (cf.\
  Lemma~\ref{lemma:simplify_DPI_classicalquantum}),
  $I(A_Z X:A_Y) \geq I(XZ:A_Y)$ and $I(A_YZ:X) \geq {I(YZ:X)}$, which
  yield a matrix
  $$M_\mathrm{DPI}\left(IC^{\qQ}\right)=\left( \begin{array}{ccccccccccccccc}
  0 & 0 & 0 & 0 & 0 & 0 & 0 & 0 & 1 & 0 & -1 & 0 & -1 & 1 & 0 \\
  0 & 0 & 0 & 0 & 0 & 0 & 0 & 1 & 0 & 0 & 0 & -1 & 0 & -1 & 1 
  \end{array} \right).$$
The above matrices are all expressed in terms of coefficients of
$(H(A_Y), H(A_Z), H(X), H(Y), H(Z), H(A_YA_Z), \\ H(A_YX), H(A_YZ),
H(A_ZX),H(XY), H(XZ), H(YZ), H(A_YA_ZX), H(A_YXZ), H(XYZ))$.
Although the notation suppresses the different states there is no
ambiguity because e.g. the entropy of $X$ is the same for all states
with subsystem $X$. The full list of inequalities is provided in the
appendix.
\end{example}

From $\Gamma\left(C^{\qQ}\right)$, an outer approximation to the set
of compatible entropy vectors
$\Gamma^{*}_{\mathcal{M}}\left( C^{\qQ}\right)$ of the observed
scenario $\mathcal{M}$ of $C^{\qQ}$ can be obtained using
Fourier-Motzkin elimination. This leads to
$\Gamma_{\mathcal{M}}\left(C^{\qQ}\right):=\left\{w \in
  \mathbb{R}^{2^k-1}_{\geq 0} \mid \exists v \in
  \Gamma\left(C^{\qQ}\right) \mathrm{ s.t. }  w= \pi_\mathcal{M}(v)
\right\}$, which can be written as
$\Gamma_{\mathcal{M}}\left(C^{\qQ}\right)=\left\{w \in
  \Gamma_\mathcal{M} \mid M_\mathcal{M}\left(C^{\qQ}\right) \cdot w
  \geq 0 \right\}$.  The matrix $M_\mathcal{M}\left(C^{\qQ}\right)$
encodes all (in)equalities on $\mathbb{R}^{2^k-1}_{\geq 0}$ implied by
$M_\mathrm{B}\left(C^{\qQ}\right) \cdot v \geq 0$,
$M_\mathrm{QCI}\left(C^{\qQ}\right) \cdot v=0$ and
$M_\mathrm{DPI}\left(C^{\qQ}\right) \cdot v \geq 0$ (except for the
Shannon inequalities, which are already included in
$\Gamma_\mathcal{M}$). Note that
$\Gamma_{\mathcal{M}}\left(C^{\cC}\right)
\subseteq\Gamma_{\mathcal{M}}\left(C^{\qQ}\right) \subseteq
\Gamma_{\mathcal{M}}$, where the first relation holds because all
inequalities relevant for quantum states hold in the classical case as
well\footnote{This can be seen by thinking of a classical source as made up
of two or more (perfectly correlated) random variables as its
subsystems, which are sent to its children and processed there. The
Shannon inequalities hold among all of these variables (and also imply
any weak monotonicity constraints). The classical independence
relations include the quantum ones but may add constraints that
involve conditioning on any of the variables' ancestors. These
(in)equalities are tighter than the DPIs, which are hence not
explicitly considered in the classical case.}.

\begin{example}[Entropic outer approximation for the quantum instrumental scenario]
  The projection of $\Gamma\left(IC^{\qQ}\right)$ leads to the
  entropic cone
  $\Gamma_{\cM}\left(IC^{\qQ}\right)=\left\{w \in \Gamma_3 \mid
    M_\mathcal{M}\left(IC^{\qQ}\right) \cdot w \geq 0 \right\}$, for
  which $M_\mathcal{M}\left(IC^{\qQ}\right)$ equals
  $M_\mathcal{M}\left(IC^{\cC}\right)$ from
  Example~\ref{example:IC_class_obs}, thus corresponding to the
  constraint ${I(X:YZ)} \leq H(Z)$. Hence,
  $\overline{\Gamma^{*}_\cM}\left(IC^{\qQ}\right)$ coincides with
  $\overline{\Gamma^{*}_\cM}\left(IC^{\cC}\right)$~\cite{non_shan}.
\end{example}

This method has been applied to find an outer approximation to the
entropy cone of the triangle causal structure in the quantum case
(cf.\ Figure~\ref{fig:instrumental}(c))~\cite{Chaves2015}. This
approximation did not coincide with the outer approximation to the
classical triangle scenario obtained from Shannon inequalities and
independence constraints. Whether there are more as yet unknown
inequalities in the quantum case remains an open question (as opposed
to the classical case where even better outer approximations have
already been found~\cite{non_shan}). In~\cite{Chaves2015}, the method
was furthermore combined with the approach reviewed in
Section~\ref{sec:quantum_conditioning}, where
it was applied to a scenario related to $IC$ (cf.\
Example~\ref{example:information_causality} below).

\subsubsection{Causal structures with unobserved systems in other non-signalling theories} \label{sec:gDAGs}
The concept of a generalised causal structure was introduced
in~\cite{Henson2014}, the idea being to have one framework in which
classical, quantum and even more general systems, for instance
non-local boxes~\cite{Tsirelson1993,Popescu1994FP}, can be shared by
unobserved nodes and where theory independent features of networks and
corresponding bounds on our observations may be identified.
\begin{definition}
  A \emph{generalised causal structure} $C^{\gG}$ is a causal
  structure which for each observed node has an associated random
  variable and for each unobserved node has a corresponding
  non-signalling resource allowed by a generalised probabilistic
  theory.
\end{definition}
Classical and quantum causal structures can be viewed as special cases
of generalised causal structures~\cite{Henson2014, Fritz2015}.
Generalised probabilistic theories may be conveniently described in
the operational-probabilistic framework
of~\cite{Chiribella2010}. Circuit elements correspond to so-called
tests that are connected by wires, which represent propagating
systems. In general, such a test has an input system, and two outputs:
an output system and an outcome. In the case of a system with trivial
input this corresponds to a preparation test, and in case of trivial output
this is an observation-test.  In the causal structure framework, a
test is associated to each node. However, each such test has only one
output: for unobserved nodes this is a general resource state; for
observed nodes it is a random variable.  Furthermore, resource states
do not allow for signalling from the future to the past, i.e., we are
considering so-called causal operational-probabilistic theories. This
is important for the interpretation of generalised causal structures.

A distribution $P$ over the observed nodes of a generalised causal
structure $C^{\gG}$ is compatible with $C^{\gG}$ if there exists a
causal operational-probabilistic theory, a resource for each
unobserved edge in that theory and transformations for each node that
allow for the generation of $P$. We denote the set of all compatible
distributions $\mathcal{P}_{\mathcal{M}}\left( C^{\gG} \right)$.  As
in the quantum case, there is no notion of a joint state of all nodes
in the causal structure and of conditioning on an unobserved
system. Even more, there is no consensus on the representation of
states and their dynamics in general non-signalling theories. To
circumvent this, the classical notion of d-separation has been
reformulated~\cite{Henson2014}, which enables the following
proposition.
\begin{prop}[Henson, Lal \& Pusey] \label{prop:dseparationgDAG} Let
  $C^{\gG}$ be a generalised causal structure and let $X$, $Y$ and $Z$
  be pairwise disjoint subsets of observed nodes in $C^{\gG}$.  If a
  probability distribution $P$ is compatible with $C^{\gG}$, then the
  d-separation of $X$ and $Y$ by $Z$ implies the conditional
  independence $X \indep Y | Z$.  Conversely, if for every
  distribution $P$ compatible with $C^{\gG}$ the conditional
  independence $X \indep Y | Z$ holds, then $X$ is d-separated from
  $Y$ by $Z$ in $C^{\gG}$.
\end{prop}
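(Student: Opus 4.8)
The plan is to reduce this to Proposition~\ref{prop:dseparation} (Verma \& Pearl) for classical causal structures. Let $C^{\gG}$ be a generalised causal structure, and let $C^{\cC}$ denote the classical causal structure obtained from it by replacing every unobserved non-signalling resource with a classical random variable (i.e., keeping the same DAG but reinterpreting all nodes as random variables). The key observation is that every distribution $P$ compatible with $C^{\cC}$ is also compatible with $C^{\gG}$: a classical random variable is a special case of a non-signalling resource allowed by a generalised probabilistic theory (classical theory sits inside any reasonable generalised framework~\cite{Henson2014, Fritz2015}), and reading out a classical system is a valid observation test. Hence $\mathcal{P}_\mathcal{M}(C^{\cC}) \subseteq \mathcal{P}_\mathcal{M}(C^{\gG})$.

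For the forward direction, suppose $X$ and $Y$ are d-separated by $Z$ in $C^{\gG}$ and let $P$ be compatible with $C^{\gG}$. Here I would invoke the reformulation of d-separation for generalised causal structures established in~\cite{Henson2014}: the whole point of that reformulation is that d-separation of observed nodes continues to imply the conditional independence $X \indep Y \mid Z$ for any distribution arising from a causal operational-probabilistic theory. So this direction is essentially quoting~\cite{Henson2014}; the content is that the soundness half of d-separation is theory-independent because it only uses the graph structure together with the no-signalling-from-the-future property of the resources, not any specific representation of states. One way to see it concretely: any compatible $P$ can be written as a marginal of a distribution over the observed nodes together with the classical outcomes of local operations, and the graph of conditional dependencies of these observed data is controlled by the DAG in the same way as in the classical case.

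For the converse, suppose that for \emph{every} $P$ compatible with $C^{\gG}$ the conditional independence $X \indep Y \mid Z$ holds. Since $\mathcal{P}_\mathcal{M}(C^{\cC}) \subseteq \mathcal{P}_\mathcal{M}(C^{\gG})$, in particular $X \indep Y \mid Z$ holds for every $P$ compatible with the classical structure $C^{\cC}$. By the converse part of Proposition~\ref{prop:dseparation}, $X$ is d-separated from $Y$ by $Z$ in the DAG of $C^{\cC}$, which is the same DAG as that of $C^{\gG}$; hence $X$ is d-separated from $Y$ by $Z$ in $C^{\gG}$.

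The main obstacle is the forward direction: one must be sure that the ``soundness'' of d-separation really does survive the passage to arbitrary causal operational-probabilistic theories, i.e.\ that a collider $i \to k \leftarrow j$ with $k \notin Z$ genuinely blocks influence even when the resources at $i$ and $j$ are post-quantum. This is exactly what the d-separation reformulation and its accompanying compatibility notion in~\cite{Henson2014} are designed to guarantee, and since the excerpt permits us to assume results stated earlier, we may take Proposition~\ref{prop:dseparationgDAG}'s forward direction as provided by that framework; the genuinely new content packaged here is the converse, which as shown follows cleanly from the classical case by the inclusion of classical models into generalised ones.
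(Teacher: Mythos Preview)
The paper does not contain its own proof of this proposition: it is stated as a result of Henson, Lal and Pusey and attributed to~\cite{Henson2014}, with the surrounding text merely noting that ``the classical notion of d-separation has been reformulated~\cite{Henson2014}, which enables the following proposition.'' So there is no in-paper argument to compare against.

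That said, your treatment is sensible. The converse direction you give is a clean and correct argument: the inclusion $\mathcal{P}_\mathcal{M}(C^{\cC}) \subseteq \mathcal{P}_\mathcal{M}(C^{\gG})$ holds because classical probability theory is an instance of a causal operational-probabilistic theory, and then the completeness half of Proposition~\ref{prop:dseparation} (Verma \& Pearl) does the work, since the underlying DAG is the same. This is indeed the standard way to obtain that direction and is essentially how it is argued in~\cite{Henson2014} as well.

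For the forward (soundness) direction you are candid that you are not proving it but deferring to~\cite{Henson2014}. That is honest, but be aware that the sketch you offer (``any compatible $P$ can be written as a marginal of a distribution over the observed nodes together with the classical outcomes of local operations'') is not quite the mechanism used there and would need substantial work to make rigorous in a theory-independent way; the actual argument in~\cite{Henson2014} proceeds via an operational reformulation of d-separation and a careful induction over the DAG using the no-signalling property of the tests. Since the present paper also simply cites that result rather than reproving it, your proposal is in line with the paper's own level of detail.
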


This allows for the derivation of conditional independence relations
among observed variables that hold in any generalised probabilistic
theory, which hence restrict a general entropic cone.  Furthermore, it
rigorously justifies retaining the independence constraints among the
(observed) variables in coexisting sets in quantum causal structures
(cf.\ Section~\ref{sec:quantumlatent}), which can be seen as special
cases of generalised causal structures.

In~\cite{Henson2014}, sufficient conditions for identifying causal
structures $C$ for which in the classical case, $C^{\cC}$, there are
no restrictions on the distribution over observed variables other than
those that follow from the d-separation of these variables were
derived.  Since, by Proposition~\ref{prop:dseparationgDAG}, these
conditions also hold in $C^\qQ$ and $C^\gG$, this implies
$\cP_{\cM}(C^\cC)=\cP_{\cM}(C^\qQ)=\cP_{\cM}(C^\gG)$.  For causal
structures with up to six nodes, there are 21 cases (and some that can
be reduced to the these 21) where such equivalence does not hold and
where further relations among the observed variables have to be taken
into account~\cite{Henson2014, Pienaar2016}.

Outer approximations to the entropic cones for causal structures,
$C^{\gG}$, based on the observed variables and their independences
only were derived in~\cite{Henson2014}.  Moreover, a few additional
constraints for certain generalised causal structures were derived
there.  For example, the entropic constraint $I(X:Y)+I(X:Z) \leq H(X)$
for the triangle causal structure of Figure~\ref{fig:instrumental}(c)
(which had previously been established in the classical
case~\cite{Fritz2012}) was found. This constraint does not follow from
the observed independences, but nonetheless holds for the triangle
causal structure in generalised probabilistic theories.

In spite of this, a systematic entropic procedure, in which the
unobserved variables are explicitly modelled and then eliminated from
the description, is not available for generalised causal
structures. The issue is that we are lacking a generalisation of the
Shannon and von Neumann entropy to generalised probabilistic theories
that obeys submodularity and for which the conditional entropy can be
written as the difference of unconditional entropies~\cite{Short2010,
  Barnum2012}.

One possible generalised entropy is the \emph{measurement entropy},
which is positive and obeys some of the submodularity constraints
(those with $X_S\cap X_T=\{\}$) but not all~\cite{Short2010,
  Barnum2012}.  Using this, Ref.~\cite{Cadney2012a} considered the set
of possible entropy vectors for a bipartite state in \emph{box world},
a generalised probabilistic theory that permits all bipartite
correlations that are non-signalling~\cite{Barrett07}. They found no
further constraints on the set of possible entropy vectors in this
setting (hence, contrary to the quantum case, measurement entropy
vectors of separable states in box world can violate
monotonicity). Other generalised probabilistic theories and
multi-party states have, to our knowledge, not been similarly
analysed.

\subsubsection{Other directions for exploring quantum and generalised
  causal structures} \label{sec:other_stuff}
The approaches to quantum and generalised causal structures above are
based on adaptations of the theory of Bayesian networks to the
respective settings and on retaining the features that remain valid,
for instance the relation between d-separation and independence for
observed variables~\cite{Henson2014} (cf.\
Section~\ref{sec:gDAGs}). Other
approaches to generalise classical networks to the quantum realm have
been pursued~\cite{Leifer2013}, where a definition of conditional
quantum states, analogous to conditional probability distributions was
formulated.

Recent articles have proposed generalisations of Reichenbach's
principle~\cite{Reichenbach1956} to the quantum
realm~\cite{Pienaar2015, Costa2015, Allen2016}. In
Ref.~\cite{Pienaar2015} a graph separation rule, q-separation, was
introduced, whereas~\cite{Costa2015, Allen2016} rely on a formulation
of quantum networks in terms of quantum channels and their Choi
states.

An active area of research is the exploration of frameworks that allow
for indefinite causal structures~\cite{Hardy2005, Hardy2007,
  Hardy2009}. There are several approaches achieving this, such as the
process matrix formalism~\cite{Oreshkov2012}, which has lead to the
derivation of so called causal inequalities and the identification of
signalling correlations that are achievable in this framework, however
not with any predefined causal
structure~\cite{Oreshkov2012,Baumeler2014}.  Another framework that is
able to describe such scenarios is the theory of quantum
combs~\cite{Chiribella2013}, illustrated by a quantum switch, a
quantum bit controlling the circuit structure in a quantum
computation.  A recent framework with the aim to model cryptographic
protocols is also available~\cite{Portmann2017}.  Some initial results
on the analysis of indefinite causal structures with entropy have
recently appeared~\cite{Miklin17}.

In the classical, quantum and generalised causal structures considered
above only the observed classical information can be transmitted via a
link between two observed variables and, in particular, no additional
unobserved system. This understanding of the causal links encodes a
Markov condition.  In other situations, it can be convenient for the
links in the graph to represent a notion of future instead of direct
causation, see e.g.~\cite{Colbeck2013,Colbeck2016}.

\section{Entropy vector approach with
  post-selection} \label{sec:post-selection}
A technique that leads to additional, more fine-grained inequalities
is based on post-selecting on the values of parentless classical
variables.  This technique was pioneered by Braunstein and
Caves~\cite{Braunstein1988} and has been used to systematically derive
numerous entropic inequalities~\cite{Chaves2012, Fritz2013,
  Chaves2013, Pienaar2016, Chaves2016}.

\subsection{Post-selection in classical causal structures}\label{sec:classical_conditioning}
In the following we denote a random variable $X$ post-selected on the
event of another random variable, $Y$, taking a particular value,
$Y=y$, as $X_{\mid Y=y}$. The same notation is used for a set of
random variables $S=\left\{X_1,~X_2,~\ldots~,~X_n \right\}$, whose
joint distribution is conditioned on ${Y=y}$,
$S_{\rm \mid Y=y}=\left\{X_{1 {\rm \mid Y=y}},~X_{2 {\rm \mid
      Y=y}},~\ldots~,~X_{ n {\rm \mid Y=y}} \right\}$. The following
lemma can be understood as a generalisation of (a part of) Fine's
theorem~\cite{Fine1982, Fine1982a}.

\begin{lemma}\label{lemma:justify_conditioning}
  Let $C^{\cC}$ be a classical causal structure with a parentless
  observed node $X$ that takes values $X=1,2,\ldots,n$ and let $P$ be
  a joint distribution over all random variables
  $\Omega=X \cup X^{\uparrow} \cup X^{\nuparrow}$ in $C^{\cC}$ (with $P$
  compatible with $C^{\cC}$).  Then there exists a joint distribution
  $Q$ over the
  $n \cdot \left|X^{\uparrow}\right| + \left|X^{\nuparrow}\right|$
  random variables
  $\Omega_{\rm \mid X} := X^{\uparrow}_{\rm \mid X=1} \cup
  X^{\uparrow}_{\rm \mid X=2} \cup \cdots \cup X^{\uparrow}_{\rm \mid
    X=n} \cup X^{\nuparrow}$ such that
  $Q( X^{\uparrow}_{\rm \mid X=x} X^{\nuparrow})=P( X^{\uparrow}
  X^{\nuparrow} \mid X=x)$ for all $x\in\{1,\ldots,n\}$.
\end{lemma}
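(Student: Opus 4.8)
The plan is to construct the joint distribution $Q$ explicitly from the conditional distributions supplied by $P$, and to exploit the structure implied by $X$ being a parentless observed node to argue that the pieces $X^{\uparrow}_{\rm \mid X=x}$ for different $x$ can be ``glued together'' consistently over the common variables $X^{\nuparrow}$. The key observation is that, since $X$ is parentless, Proposition~\ref{prop:localmark} gives $X \indep X^{\nuparrow} \mid X^{\downarrow_{1}}$, and because $X^{\downarrow_{1}}=\{\}$ this reads $X\indep X^{\nuparrow}$; in particular $P(X^{\nuparrow}\mid X=x)=P(X^{\nuparrow})$ is the same for every value $x$. This is precisely what is needed for the copies $X^{\uparrow}_{\rm \mid X=x}$ to share a single marginal on $X^{\nuparrow}$, so that a joint $Q$ of the required form can exist at all.

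Concretely, I would define
$$Q\bigl(X^{\uparrow}_{\rm \mid X=1}\,X^{\uparrow}_{\rm \mid X=2}\cdots X^{\uparrow}_{\rm \mid X=n}\,X^{\nuparrow}\bigr):= P(X^{\nuparrow})\prod_{x=1}^{n} P\bigl(X^{\uparrow}\mid X^{\nuparrow},\,X=x\bigr),$$
i.e.\ I make the $n$ blocks of descendant-copies conditionally independent given $X^{\nuparrow}$, with each block distributed as the $X=x$-conditional of the descendants given the non-descendants. One then checks the two things that matter: first, that $Q$ is a genuine probability distribution (it is a product of conditional distributions against a marginal, hence normalised); and second, that marginalising $Q$ down to the block $X^{\uparrow}_{\rm \mid X=x}\cup X^{\nuparrow}$ reproduces $P(X^{\uparrow}X^{\nuparrow}\mid X=x)$. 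For the latter, summing out all blocks $x'\neq x$ kills each factor $P(X^{\uparrow}\mid X^{\nuparrow},X=x')$ (each sums to $1$), leaving $P(X^{\nuparrow})\,P(X^{\uparrow}\mid X^{\nuparrow},X=x)$. Using $X\indep X^{\nuparrow}$ again, $P(X^{\nuparrow})=P(X^{\nuparrow}\mid X=x)$, so this product equals $P(X^{\nuparrow}\mid X=x)\,P(X^{\uparrow}\mid X^{\nuparrow},X=x)=P(X^{\uparrow}X^{\nuparrow}\mid X=x)$, as required.

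The step I expect to be the main obstacle — or at least the one that needs care rather than being purely formal — is justifying that the variable $X$ itself can be cleanly omitted from $Q$ while the construction still captures all the correlations of $P$ that survive post-selection. The point is that $P$ is a distribution over $X\cup X^{\uparrow}\cup X^{\nuparrow}$, so in decomposing $P$ I should first write $P(X^{\uparrow}X^{\nuparrow}\mid X=x)=P(X^{\uparrow}\mid X^{\nuparrow},X=x)\,P(X^{\nuparrow}\mid X=x)$ and then invoke parentlessness of $X$ to replace $P(X^{\nuparrow}\mid X=x)$ by the $x$-independent $P(X^{\nuparrow})$; without the parentless hypothesis the shared marginal would not exist and no such $Q$ could be built. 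A secondary point worth spelling out is that the $n\cdot|X^{\uparrow}|+|X^{\nuparrow}|$ variables of $\Omega_{\rm \mid X}$ are exactly accounted for: $|X^{\nuparrow}|$ non-descendant variables (shared) plus, for each of the $n$ values of $X$, a fresh copy of the $|X^{\uparrow}|$ descendant variables, and the product form above assigns a value to precisely this collection. Everything else is routine verification of normalisation and of the marginal identity.
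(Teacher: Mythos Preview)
Your proposal is correct and is essentially identical to the paper's proof: both define $Q$ as $P(X^{\nuparrow})\prod_{x=1}^n P(X^{\uparrow}\mid X^{\nuparrow},X=x)$ and then check the marginals. If anything, your write-up is more careful in explicitly invoking $X\indep X^{\nuparrow}$ from the parentless hypothesis, which the paper uses implicitly.
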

\begin{proof}
  The joint distribution over the random variables
  $X^{\uparrow} \cup X^{\nuparrow}$ in $C^{\cC}$ can be written as
  ${P(X^{\uparrow} X^{\nuparrow})= \sum_{x=1}^{n} P( X^{\uparrow} \mid
    X^{\nuparrow} X=x) P( X^{\nuparrow}) P(X=x).}$ Now take
  $Q( X^{\uparrow}_{\rm \mid X=1}~\cdots~ X^{\uparrow}_{\rm \mid X=n}
  X^{\nuparrow})= \prod_{x=1}^{n} {P( X^{\uparrow}\mid
    X^{\nuparrow}X=x)}P( X^{\nuparrow})$.  As required, this
  distribution has marginals
  $Q( X^{\uparrow}_{\rm \mid X=x} X^{\nuparrow})={P( X^{\uparrow} \mid
    X^{\nuparrow} X=x)} P( X^{\nuparrow})$.
\end{proof}

It is perhaps easiest to think about this lemma in terms of a new
causal structure $C_X^{\cC}$ on $\Omega_{\rm \mid X}$ that is related
to the original.  Roughly speaking the new causal structure is formed
by removing $X$ and replacing the descendants of $X$ with several
copies each of which have the same causal relations as in the original
causal structure (with no mixing between copies).  More precisely, if
$X$ is a parentless node in $C^{\cC}$ we can form a
\emph{post-selected causal structure} on $\Omega_{\rm \mid X}$
(post-selecting on $X$) as follows: {\bf (1)} For each pair of nodes
$A$, $B \in X^{\nuparrow}$ in $C^{\cC}$, make $A$ a parent of $B$ in
$C_X^{\cC}$ iff $A$ is a parent of $B$ in $C^{\cC}$. {\bf (2)} For
each node $B \in X^{\nuparrow}$ in $C^{\cC}$ and for each node
$A_{\mid X=x}$, make $B$ a parent of $A_{\mid X=x}$ in $C_X^{\cC}$ iff
$B$ is a parent of $A$ in $C^{\cC}$. {\bf (3)} For each pair of nodes,
$A_{\mid X=x}$ and $B_{\mid X=x}$, make $B_{\mid X=x}$ a parent of
$A_{\mid X=x}$ in $C_X^{\cC}$ iff $B$ is a parent of $A$ in
$C^{\cC}$. (Note that there is no mixing between different values of
$X=x$.) See Figures~\ref{fig:instrumental_conditioned}
and~\ref{fig:Pienaar_examples} and Example~\ref{example:conditional}
for illustrations.  This view gives us the following corollary of
Lemma~\ref{lemma:justify_conditioning}, which is an alternative
generalisation of Fine's theorem
\begin{lemma}\label{lemma:altfine}
  Let $C^{\cC}$ be a classical causal structure with a parentless
  observed node $X$ that takes values $X=1,2,\ldots,n$ and let $P$ be
  a joint distribution over all random variables
  $X \cup X^{\uparrow} \cup X^{\nuparrow}$ in $C^{\cC}$ (with $P$
  compatible with $C^{\cC}$). Then there exists a joint distribution
  $Q$ compatible with the post-selected causal structure $C^{\cC}_X$
  such that
  $Q( X^{\uparrow}_{\rm \mid X=x} X^{\nuparrow})={P( X^{\uparrow}
  X^{\nuparrow} \mid X=x)}$ for all $x\in\{1,\ldots,n\}$.
\end{lemma}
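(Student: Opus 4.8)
The plan is to reuse the distribution $Q$ constructed in the proof of Lemma~\ref{lemma:justify_conditioning},
$$Q( X^{\uparrow}_{\rm \mid X=1}~\cdots~ X^{\uparrow}_{\rm \mid X=n}~ X^{\nuparrow})= \prod_{x=1}^{n} P( X^{\uparrow}\mid X^{\nuparrow}X=x)\, P( X^{\nuparrow}),$$
and to check that it is (Markov) compatible with the post-selected causal structure $C^{\cC}_X$ obtained from rules~\textbf{(1)}--\textbf{(3)}. The required marginal property $Q( X^{\uparrow}_{\rm \mid X=x} X^{\nuparrow})=P( X^{\uparrow} X^{\nuparrow} \mid X=x)$ is literally the conclusion of Lemma~\ref{lemma:justify_conditioning}, so the only substantive point is compatibility, which I would establish by exhibiting the factorisation of $Q$ demanded by Definition~\ref{def:compat}.

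First I would unpack $P$. Since $X$ is parentless and the parents of any non-descendant of $X$ are again non-descendants, the node set of $C^{\cC}$ partitions as $\{X\}\cup X^{\uparrow}\cup X^{\nuparrow}$ with $X^{\nuparrow}$ ancestrally closed, and
$$P= P_X\cdot\prod_{B\in X^{\nuparrow}}P_{B\mid B^{\downarrow_{1}}}\cdot\prod_{A\in X^{\uparrow}}P_{A\mid A^{\downarrow_{1}}}.$$
Summing out the $X^{\uparrow}$ variables in reverse topological order (each factor summing to $1$) gives $P(X,X^{\nuparrow})=P_X\cdot\prod_{B\in X^{\nuparrow}}P_{B\mid B^{\downarrow_{1}}}$, hence $P(X^{\nuparrow})=\prod_{B\in X^{\nuparrow}}P_{B\mid B^{\downarrow_{1}}}$ and $P( X^{\uparrow}\mid X^{\nuparrow}X=x)=\prod_{A\in X^{\uparrow}}P_{A\mid A^{\downarrow_{1}}}\big|_{X=x}$, where $\big|_{X=x}$ indicates that wherever $X$ occurs as a parent its value is frozen to $x$. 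Substituting into the formula for $Q$ yields
$$Q=\Big(\prod_{x=1}^{n}\prod_{A\in X^{\uparrow}}P_{A\mid A^{\downarrow_{1}}}\big|_{X=x}\Big)\cdot\prod_{B\in X^{\nuparrow}}P_{B\mid B^{\downarrow_{1}}},$$
with the convention that in the $x$-th inner product each $A\in X^{\uparrow}$ is read as its copy $A_{\mid X=x}$.

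Finally I would match this factorisation against the DAG of $C^{\cC}_X$. By rule~\textbf{(1)} the $X^{\nuparrow}$-nodes keep their parent sets, and since $X^{\nuparrow}$ with the induced subgraph is itself a Bayesian network whose $Q$-marginal is $P(X^{\nuparrow})$, one reads off $Q_{B\mid B^{\downarrow_{1}}}=P_{B\mid B^{\downarrow_{1}}}$. By rules~\textbf{(2)} and~\textbf{(3)} the parents of $A_{\mid X=x}$ in $C^{\cC}_X$ are exactly the copies $A'_{\mid X=x}$ of the $X^{\uparrow}$-parents of $A$ together with the $X^{\nuparrow}$-parents of $A$ in $C^{\cC}$ --- precisely the variables on which the (still normalised) kernel $P_{A\mid A^{\downarrow_{1}}}\big|_{X=x}$ depends, the deleted parent $X$ having been absorbed as the constant $x$ inside block $x$. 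Hence that kernel equals $Q_{A_{\mid X=x}\mid (A_{\mid X=x})^{\downarrow_{1}}}$, so $Q=\prod_{N}Q_{N\mid N^{\downarrow_{1}}}$ over the nodes $N$ of $C^{\cC}_X$, i.e.\ $Q$ is compatible with $C^{\cC}_X$. I expect the main obstacle to be purely bookkeeping: checking that the factor structure of the explicit $Q$ lines up exactly with the edges prescribed by~\textbf{(1)}--\textbf{(3)}, in particular that deleting $X$ loses nothing because $X$ is constant within each block and that no dependence is created between different $X=x$ blocks, in line with the absence of mixing between copies in $C^{\cC}_X$.
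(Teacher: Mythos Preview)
Your proposal is correct and follows exactly the route the paper intends: the paper does not give a separate proof of this lemma but presents it as an immediate corollary of Lemma~\ref{lemma:justify_conditioning}, leaving implicit the verification that the explicit $Q$ constructed there is Markov compatible with $C^{\cC}_X$. Your argument spells out precisely this verification --- factorising $P$ via its compatibility with $C^{\cC}$, reading off the induced factorisation of $Q$, and matching the resulting kernels against the parent sets prescribed by rules \textbf{(1)}--\textbf{(3)} --- and the bookkeeping you flag as the only obstacle is handled correctly.
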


The distributions that are of interest in this new causal structure
are the marginals $Q(X^{\uparrow}_{\rm \mid X=x} X^{\nuparrow})$ for
all $x$ (and their interrelations), as they correspond to
distributions in the original scenario.  Any constraints on these
distributions derived in the post-selected scenario are by
construction valid for the (post-selected) distributions compatible
with the original causal structure.
\begin{example}[Post-selection in the instrumental
  scenario] \label{example:conditional} Consider the causal structure
  $IC$ where the parentless variable $X$ takes values $0$ or $1$. 
  For any $P$ compatible with $IC^{\cC}$, there exists a distribution
  $Q$ compatible with the post-selected causal structure
  (Figure~\ref{fig:instrumental_conditioned}(a)) such that
  ${Q(Z_{\mid X=0} Y_{\mid X=0} A)}={P(Z Y \mid A X=0)}P(A)$ and
  ${Q(Z_{\mid X=1} Y_{\mid X=1} A)}={P(Z Y \mid A X=1)}P(A)$. These
  marginals and their relations are of interest for the original
  scenario.
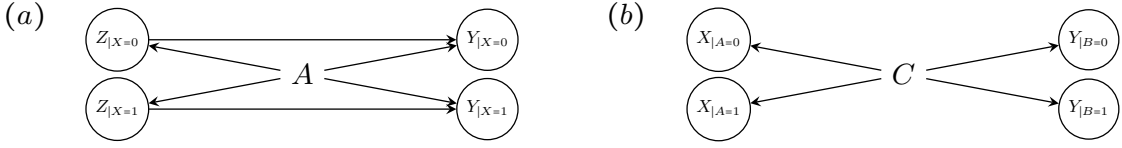
\begin{figure}
\centering 
\resizebox{0.85\columnwidth}{!}{
\begin{tikzpicture} [scale=0.5]
\node (1) at (-8,0.25){$(a)$};
\node[draw=black,circle,scale=0.6] (2) at (-6,-0.25) {$Z_{\mid X=0}$};
\node[draw=black,circle,scale=0.6] (2b) at (-6,-1.75) {$Z_{\mid X=1}$};
\node[draw=black,circle,scale=0.6] (3) at (2,-0.25) {$Y_{\mid X=0}$};
\node[draw=black,circle,scale=0.6] (3b) at (2,-1.75) {$Y_{\mid X=1}$};
\node (4) at (-2,-1.0) {$A$};

\draw [->,>=stealth] (2)--(3);
\draw [->,>=stealth] (4)--(2);
\draw [->,>=stealth] (4)--(3);
\draw [->,>=stealth] (2b)--(3b);
\draw [->,>=stealth] (4)--(2b);
\draw [->,>=stealth] (4)--(3b);

\node (M1) at (5,0.25){$(b)$};
\node[draw=black,circle,scale=0.6] (M2) at (7,-0.25) {$X_{\mid A=0}$};
\node[draw=black,circle,scale=0.6] (M2b) at (7,-1.75) {$X_{\mid A=1}$};
\node[draw=black,circle,scale=0.6] (M3) at (15,-0.25) {$Y_{\mid B=0}$};
\node[draw=black,circle,scale=0.6] (M3b) at (15,-1.75) {$Y_{\mid B=1}$};
\node (M4) at (11,-1.0) {$C$};

\draw [->,>=stealth] (M4)--(M2);
\draw [->,>=stealth] (M4)--(M3);
\draw [->,>=stealth] (M4)--(M2b);
\draw [->,>=stealth] (M4)--(M3b);
\end{tikzpicture}
}
\caption{(a) Pearl's instrumental scenario post-selected on binary
  $X$. The causal structure is obtained from the $IC$ by removing $X$
  and replacing $Y$ and $Z$ with copies, each of which has
  the same causal relations as in the original causal structure. (b)
  post-selected Bell scenario with binary inputs $A$ and $B$.}
\label{fig:instrumental_conditioned}
\end{figure}
\end{example}

Note that the above reasoning may be applied recursively. Indeed, the
causal structure with variables $\Omega_{\rm \mid X}$ may be
post-selected on the values of one of its parentless nodes.  The joint
distributions of the nodes $\Omega_{\mid X}$ and the associated causal
structure may be analysed in terms of entropies, as illustrated with
the following example.

\begin{example}[Entropic constraints for the post-selected Bell
  scenario~\cite{Braunstein1988}]\label{example:bell}
  In the Bell scenario with binary inputs $A$ and $B$
  (Figure~\ref{fig:instrumental}(b)), Lemma~\ref{lemma:altfine} may be
  applied first to post-select on the values of $A$ and then of
  $B$. This leads to a distribution $Q$ compatible with the
  post-selected causal structure (on $A$ and $B$) shown in
  Figure~\ref{fig:instrumental_conditioned}(b), for which
  $Q(X_{\rm \mid A=a} Y_{\rm \mid B=b})=P(X Y \mid A=a, B=b)$ for
  $a, b \in \left\{0,1\right\}$\footnote{In this case the joint
    distribution is already known to exist by Fine's
    theorem~\cite{Fine1982, Fine1982a}.}.  Applying the entropy vector
  method to the post-selected causal structure and marginalising to
  vectors of form
  $(H(X_{\rm \mid A=0}),H(X_{\rm \mid A=1}),H(Y_{\rm \mid
    B=0}),H(Y_{\rm \mid B=1}),H(X_{\rm \mid A=0}Y_{\rm \mid
    B=0}),H(X_{\rm \mid A=0}Y_{\rm \mid B=1}), H(X_{\rm \mid
    A=1}Y_{\rm \mid B=0}), H(X_{\rm \mid A=1}Y_{\rm \mid B=1}))$
  yields the inequality
  $H(Y_1|X_1)+H(X_1|Y_0)+H(X_0|Y_1)-H(X_0|Y_0) \geq 0$ and its
  permutations~\cite{Braunstein1988, Chaves2012}.\footnote{Whenever
    the input nodes take more than two values, the latter may be
    partitioned into two sets, guaranteeing applicability of these
    inequalities. Furthermore,~\cite{Chaves2013} showed that these
    inequalities are sufficient for detecting any behaviour that is
    not classically reproducible in the Bell scenario where the two
    parties perform measurements with binary outputs.}
\end{example}

The extension of Fine's theorem to more general Bell
scenarios~\cite{Liang2011, Abramsky2011}, i.e., to scenarios involving
a number of spacelike separated parties that each choose input values
and produce some output random variable (and scenarios that can be
reduced to the latter), has been combined with the entropy vector
method in~\cite{Chaves2012, Fritz2013}.

Entropic constraints that are derived in this way provide novel and
non-trivial entropic inequalities for the distributions compatible
with the original classical causal structure. Ref.~\cite{Fritz2013}
introduced this idea and analysed the so-called $n$-cycle scenario,
which is of particular interest in the context of non-contextuality
and includes the Bell scenario (with binary inputs and outputs) as a
special case\footnote{A full probabilistic characterisation of the $n$-cycle
scenario was given in~\cite{Araujo2013}.}.

In Ref.~\cite{Chaves2012}, new entropic inequalities for the
bilocality scenario, which is relevant for entanglement
swapping~\cite{Branciard2010, Branciard2012}, as well as quantum
violations of the classical constraints on the 4- and 5-cycle
scenarios were derived.  For the $n$-cycle scenario, the (polynomial
number of) entropic inequalities are sufficient for the detection of
any non-local distribution~\cite{Chaves2013}\footnote{This is also
  true of the exponential number of inequalities in the probabilistic
  case~\cite{Araujo2013}.}.  In the following we illustrate the method
of~\cite{Chaves2012, Fritz2013} with a continuation of
Example~\ref{example:conditional}.

\begin{example}[Entropic approximation for the post-selected instrumental scenario]
  The entropy vector method from Section~\ref{sec:entropicappr} is
  applied to the $5$-variable causal structure of
  Figure~\ref{fig:instrumental_conditioned}(a). The marginalisation is
  performed to retain all marginals that correspond to distributions
  in the original causal structure (Figure~\ref{fig:instrumental}(a)),
  i.e., any marginals of ${P(Y Z \mid X=0)}$ and ${P(Y Z \mid X=1)}$.
  Hence, the $5$ variable entropic cone is projected to a cone that
  restricts vectors of the form
  ${(H(Y_{\rm \mid X=0}),~H(Y_{\rm \mid X=1}),~H(Z_{\rm \mid
      X=0}),~H(Z_{\rm \mid X=1}),~H(Y_{\rm \mid X=0}Z_{\rm \mid
      X=0}),~H(Y_{\rm \mid X=1}Z_{\rm \mid X=1}))}$.  Note that
  entropies of unobserved marginals such as
  $H(Y_{\rm \mid X=0}Z_{\rm \mid X=1})$ are not included.  With this
  technique, the Shannon constraints for the three components
  ${(H(Y_{\rm \mid X=0}),~H(Z_{\rm \mid X=0}),~H(Y_{\rm \mid
      X=0}Z_{\rm \mid X=0}))}$ are recovered (the same holds for
  $X=1$); no additional constraints arise here.

  It is interesting to compare this to the Bell scenario considered in
  Example~\ref{example:bell}. In both causal structures any
  $4$-variable distributions,
  $P_{Z_{\rm \mid X=0}Z_{\rm \mid X=1}Y_{\rm \mid X=0}Y_{\rm \mid
      X=1}}$
  and
  $P_{X_{\rm \mid A=0}X_{\rm \mid A=1}Y_{\rm \mid B=0}Y_{\rm \mid
      B=1}}$
  respectively, are achievable\footnote{The additional causal links
    in Figure~\ref{fig:instrumental_conditioned}(b) do not affect the
    set of compatible distributions.}. However, the marginal entropy
  vector in the Bell scenario has more components, leading to
  additional constraints on the observed
  variables~\cite{Braunstein1988, Chaves2012}.
\end{example}

In some cases two different causal structures, $C_1$ and $C_2$, can
yield the same set of distributions after marginalising, a fact that
has been further explored in~\cite{Budroni2016b}. When this occurs,
either causal structure can be imposed when identifying the set of
achievable marginal distributions in either scenario. If the
constraints implied by the causal structure $C_1$ are a subset of
those implied by $C_2$, then those of $C_2$ can be used to compute
improved outer approximations on the entropic cone for $C_1$.
Furthermore, valid independence constraints may speed up computations
even if they do not lead to any new relations for the observed
variables\footnote{Note that some care has to be taken when identifying valid
constraints for scenarios with causal structure~\cite{Budroni2016b}.}.
Similar considerations also yield a criterion for indistinguishability
of causal structures in certain marginal scenarios --- if $C_1$ and
$C_2$ yield the same set of distributions after marginalising then
they cannot be distinguished in that marginal scenario.

In examples like the above, where no new constraints follow from
post-selection, it may be possible to introduce additional input
variables in order to certify the presence of quantum nodes in a
network. The new parentless nodes can then be used to apply
Lemma~\ref{lemma:justify_conditioning} and the above entropic
techniques. Mathematically, introducing further nodes to a causal
structure is always possible. However, this is only interesting if
experimentally feasible, e.g.\ if an experimenter has control over
certain observed nodes and is able to devise an experiment where he
can change their inputs. In the instrumental scenario this may be of
interest.

\begin{example}[Variations of the instrumental scenario] 
  In this scenario (Figure~\ref{fig:instrumental}(a)), a measurement
  on system $A_Z$ is performed depending on $X$ (where in the
  classical case $A_Z$ can w.l.o.g. be taken to be a copy of the
  unobserved random variable $A$). Its outcome $Z$ (in the classical
  case a function of $A$) is used to choose another measurement to be
  performed on $A_Y$ to generate $Y$ (classically another a copy of
  $A$).  It may often be straightforward for an experimenter to choose
  between several measurements. In the causal structure this
  corresponds to introducing an additional observed input $S$ to the
  second measurement (with the values of $S$ corresponding to
  different measurements on $A_Y$). Such an adaptation is displayed in
  Figure~\ref{fig:information_causality_DAG}(a).\footnote{Note that for
  ternary $S$ the outer approximation of the post-selected causal
  structure of Figure~\ref{fig:information_causality_DAG}(d) with
  Shannon inequalities does not lead to any interesting constraints
  (as opposed to the structure of
  Figure~\ref{fig:information_causality_DAG}(e), which is analysed
  further in Example~\ref{example:information_causality}).}

  Alternatively, it may be possible that the first measurement (on
  $A_Z$) is chosen depending on a combination of different,
  independent factors, which each correspond to a random variable
  $X_i$. For two variables $X_1$ and $X_2$ the corresponding causal
  structure is displayed in
  Figure~\ref{fig:information_causality_DAG}(b)\footnote{This is an example of
  a causal structure where non-Shannon inequalities among classical
  variables lead to a strictly tighter outer approximation in the
  classical and quantum case than the approximations derived using
  only Shannon and weak-monotonicity constraints (also if there is a
  causal link from $X_1$ to $X_2$)~\cite{non_shan}.}.

  Taken together, these two adaptations yield the causal structure of
  Figure~\ref{fig:information_causality_DAG}(c), relevant in the
  context of the principle of information
  causality~\cite{Pawlowski2009} (see also
  Example~\ref{example:information_causality} below).
\end{example}
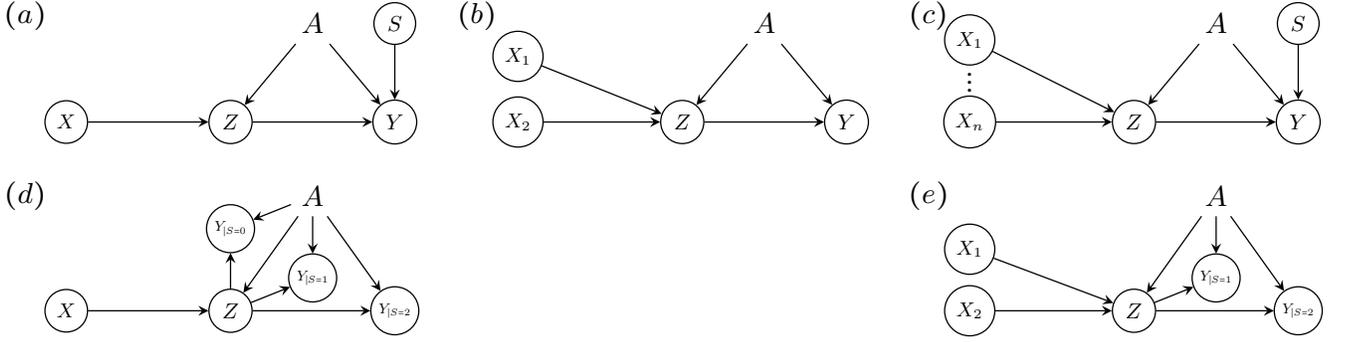
\begin{figure}
\centering 
\resizebox{1\columnwidth}{!}{
\begin{tikzpicture} [scale=0.9]
\node (B0) at (-13.5,1.3) {$(a)$};
\node[draw=black,circle,scale=0.7] (B1) at (-13,0) {$X$};
\node[draw=black,circle,scale=0.75] (B2) at (-11,0) {$Z$};
\node[draw=black,circle,scale=0.75] (B3) at (-9,0) {$Y$};
\node[draw=black,circle,scale=0.75] (B5) at (-9,1.2) {$S$};
\node (B4) at (-10,1.2) {$A$};
\draw [->,>=stealth] (B1)--(B2);
\draw [->,>=stealth] (B2)--(B3);
\draw [->,>=stealth] (B4)--(B2);
\draw [->,>=stealth] (B4)--(B3);
\draw [->,>=stealth] (B5)--(B3);

\node (A0) at (-8,1.3) {$(b)$};
\node[draw=black,circle,scale=0.7] (A1a) at (-7.5,0.8) {$X_1$};
\node[draw=black,circle,scale=0.7] (A1) at (-7.5,0) {$X_2$};
\node[draw=black,circle,scale=0.75] (A2) at (-5.5,0) {$Z$};
\node[draw=black,circle,scale=0.75] (A3) at (-3.5,0) {$Y$};
\node (A4) at (-4.5,1.2) {$A$};
\draw [->,>=stealth] (A1)--(A2);
\draw [->,>=stealth] (A1a)--(A2);
\draw [->,>=stealth] (A2)--(A3);
\draw [->,>=stealth] (A4)--(A2);
\draw [->,>=stealth] (A4)--(A3);

\node (0) at (-2.5,1.3) {$(c)$};
\node[draw=black,circle,scale=0.7] (1a) at (-2,1) {$X_1$};
\node[draw=black,circle,scale=0.7] (1) at (-2,0) {$X_n$};
\node(1b) at (-2,0.55) {$\cdot$};
\node(1c) at (-2,0.45) {$\cdot$};
\node(1d) at (-2,0.35) {$\cdot$};
\node[draw=black,circle,scale=0.75] (2) at (-0,0) {$Z$};
\node[draw=black,circle,scale=0.75] (3) at (2,0) {$Y$};
\node[draw=black,circle,scale=0.75] (5) at (2,1.2) {$S$};
\node (4) at (1,1.2) {$A$};
\draw [->,>=stealth] (1)--(2);
\draw [->,>=stealth] (1a)--(2);
\draw [->,>=stealth] (2)--(3);
\draw [->,>=stealth] (4)--(2);
\draw [->,>=stealth] (4)--(3);
\draw [->,>=stealth] (5)--(3);

\node (M) at (-2.5,-0.9) {$(e)$};
\node[draw=black,circle,scale=0.7] (M0) at (-2,-1.55) {$X_1$};
\node[draw=black,circle,scale=0.7] (M1) at (-2,-2.3) {$X_2$};
\node[draw=black,circle,scale=0.75] (M2) at (0,-2.3) {$Z$};
\node[draw=black,circle,scale=0.5] (M5) at (1,-1.9) {$Y_{\mid S=1}$};
\node[draw=black,circle,scale=0.5] (M3) at (2,-2.3) {$Y_{\mid S=2}$};
\node (M4) at (1,-0.9) {$A$};
\draw [->,>=stealth] (M0)--(M2);
\draw [->,>=stealth] (M1)--(M2);
\draw [->,>=stealth] (M2)--(M3);
\draw [->,>=stealth] (M2)--(M5);
\draw [->,>=stealth] (M4)--(M2);
\draw [->,>=stealth] (M4)--(M3);
\draw [->,>=stealth] (M4)--(M5);

\node (D0) at (-13.5,-0.9) {$(d)$};
\node[draw=black,circle,scale=0.7] (D1) at (-13,-2.3) {$X$};
\node[draw=black,circle,scale=0.75] (D2) at (-11,-2.3) {$Z$};
\node[draw=black,circle,scale=0.5] (D6) at (-11,-1.3) {$Y_{\mid S=0}$};
\node[draw=black,circle,scale=0.5] (D3) at (-9,-2.3) {$Y_{\mid S=2}$};
\node[draw=black,circle,scale=0.5] (D5) at (-10,-1.9) {$Y_{\mid S=1}$};
\node (D4) at (-10,-0.9) {$A$};
\draw [->,>=stealth] (D1)--(D2);
\draw [->,>=stealth] (D2)--(D3);
\draw [->,>=stealth] (D4)--(D2);
\draw [->,>=stealth] (D4)--(D3);
\draw [->,>=stealth] (D4)--(D5);
\draw [->,>=stealth] (D2)--(D5);
\draw [->,>=stealth] (D4)--(D6);
\draw [->,>=stealth] (D2)--(D6);
\end{tikzpicture}
}
\caption{Variations of the instrumental scenario (a), (b) and (c). The
  causal structure (c) is relevant for the derivation of the
  information causality inequality where $S$ takes $n$ possible
  values. (d) and (e) are the causal structures that are effectively
  analysed when post-selecting on a ternary $S$ in (a) and on a binary
  $S$ in (c) respectively.}
\label{fig:information_causality_DAG}
\end{figure}

A second approach that relies on very similar ideas (also justified by
Lemma~\ref{lemma:justify_conditioning}) is taken
in~\cite{Pienaar2016}. For a causal structure $C^{\cC}$ with nodes
$\Omega = X \cup X^{\uparrow} \cup X^{\nuparrow}$, where $X$ is a
parentless node, conditioning the joint distribution over all nodes on
a particular $X=x$ retains the independences of $C^{\cC}$.  In
particular, the conditioning does not affect the distribution of the
$X^{\nuparrow}$, i.e., ${P(X^{\nuparrow}\mid X=x)=P(X^{\nuparrow})}$
for all $x$.  The corresponding entropic constraints can be used to
derive entropic inequalities without the detour over computing large
entropic cones, which may be useful where the latter computations are
infeasible. The constraints that are used in~\cite{Pienaar2016} are,
however, a (diligently but somewhat arbitrarily chosen) subset of the
constraints that would go into the entropic technique detailed earlier
in this section for the full causal structure. Indeed, when the
computations are feasible, applying the full entropy vector method to
the corresponding post-selected causal structure gives a systematic
way to derive constraints, which are in general strictly tighter (cf.\
Example~\ref{example:pienaar_structure_analysis}).

So far, the restricted technique has been used in~\cite{Pienaar2016}
to derive the entropic inequality
\begin{equation}
  I(X_{\rm \mid C=0}:Z)-I(Y_{\rm \mid C=0}:Z)-I(X_{\rm \mid C=1}:Z)+I(Y_{\rm \mid C=1}:Z)\leq H(Z),\label{eq:pienaar_inequality}
\end{equation}
which is valid for all the classical causal structures of
Figure~\ref{fig:Pienaar_examples} (previously considered
in~\cite{Henson2014}). The inequality was used to certify the
existence of classical distributions that respect the conditional
independence constraints among the observed variables but that are not
achievable in the respective causal structures\footnote{These causal
  structures may thus also allow for quantum correlations that are not
classically achievable.}.  In the following we look at these three
causal structures in more detail and illustrate the relation between
the two techniques.
\begin{figure}
\centering
\resizebox{1\columnwidth}{!}{
\begin{tikzpicture}[scale=0.62]
\node (0) at (-2.75,3) {$(a)$};
\node[draw=black,circle,scale=0.75] (X) at (-2,2) {$X$};
\node[draw=black,circle,scale=0.75] (Y) at (2,2) {$Y$};
\node[draw=black,circle,scale=0.75] (Z) at (0,-1.46) {$Z$};
\node (A) at (1,0.28) {$A$};
\node (B) at (-1,0.28) {$B$};
\node[draw=black,circle,scale=0.75] (C) at (0,2) {$C$};

\draw [->,>=stealth] (A)--(Y); 
\draw [->,>=stealth] (A)--(Z); 
\draw [->,>=stealth] (B)--(X); 
\draw [->,>=stealth] (B)--(Z); 
\draw [->,>=stealth] (C)--(X); 
\draw [->,>=stealth] (C)--(Y); 

\node (M0) at (3.25,3) {$(b)$};
\node[draw=black,circle,scale=0.75] (MX) at (4,2) {$X$};
\node[draw=black,circle,scale=0.75] (MY) at (8,2) {$Y$};
\node[draw=black,circle,scale=0.75] (MZ) at (6,-1.46) {$Z$};
\node (MA) at (7,0.28) {$A$};
\node (MB) at (5,0.28) {$B$};
\node[draw=black,circle,scale=0.75] (MC) at (6,2.75) {$C$};

\draw [->,>=stealth] (MA)--(MY); 
\draw [->,>=stealth] (MA)--(MZ); 
\draw [->,>=stealth] (MB)--(MX); 
\draw [->,>=stealth] (MB)--(MZ); 
\draw [->,>=stealth] (MC)--(MX); 
\draw [->,>=stealth] (MC)--(MY); 
\draw [->,>=stealth] (MX)--(MY);

\node (N0) at (9.25,3) {$(c)$};
\node[draw=black,circle,scale=0.75] (NX) at (10,2) {$X$};
\node[draw=black,circle,scale=0.75] (NY) at (14,2) {$Y$};
\node[draw=black,circle,scale=0.75] (NZ) at (12,-1.46) {$Z$};
\node (NA) at (13,0.28) {$A$};
\node (NB) at (11,0.28) {$B$};
\node[draw=black,circle,scale=0.75] (NC) at (12,2.75) {$C$};
\draw [->,>=stealth] (NA)--(NY); 
\draw [->,>=stealth] (NA)--(NZ); 
\draw [->,>=stealth] (NB)--(NX); 
\draw [->,>=stealth] (NB)--(NZ); 
\draw [->,>=stealth] (NC)--(NX); 
\draw [->,>=stealth] (NX)--(NY);

\node (X0) at (15.25,3) {$(d)$};
\node[draw=black,circle,scale=0.75] (XX0) at (16,2) {$X0$};
\node[draw=black,circle,scale=0.75] (XX1) at (17,2) {$X1$};
\node[draw=black,circle,scale=0.75] (XY0) at (19,2) {$Y0$};
\node[draw=black,circle,scale=0.75] (XY1) at (20,2) {$Y1$};
\node[draw=black,circle,scale=0.75] (XZ) at (18,-1.5) {$Z$};
\node (XA) at (19,0.5) {$A$};
\node (XB) at (17,0.5) {$B$};

\draw [->,>=stealth] (XA)--(XY0); 
\draw [->,>=stealth] (XA)--(XY1);
\draw [->,>=stealth] (XA)--(XZ); 
\draw [->,>=stealth] (XB)--(XX0); 
\draw [->,>=stealth] (XB)--(XX1); 
\draw [->,>=stealth] (XB)--(XZ); 

\node (X0) at (21.25,3) {$(e)$};
\node[draw=black,circle,scale=0.75] (YX0) at (22.5,2.3) {$X0$};
\node[draw=black,circle,scale=0.75] (YX1) at (23.5,1.7) {$X1$};
\node[draw=black,circle,scale=0.75] (YY0) at (25.5,1.7) {$Y1$};
\node[draw=black,circle,scale=0.75] (YY1) at (26.5,2.3) {$Y0$};
\node[draw=black,circle,scale=0.75] (YZ) at (24.5,-1.5) {$Z$};
\node (YA) at (25.5,0.5) {$A$};
\node (YB) at (23.5,0.5) {$B$};

\draw [->,>=stealth] (YA)--(YY0); 
\draw [->,>=stealth] (YA)--(YY1);
\draw [->,>=stealth] (YA)--(YZ); 
\draw [->,>=stealth] (YB)--(YX0); 
\draw [->,>=stealth] (YB)--(YX1); 
\draw [->,>=stealth] (YB)--(YZ); 
\draw [->,>=stealth] (YX0)--(YY1);
\draw [->,>=stealth] (YX1)--(YY0);

\end{tikzpicture}
}
\caption{Causal structures from
  Example~\ref{example:pienaar_structure_analysis}. Post-selecting on
  a binary observed variable $C$ leads to the causal structure (d) in
  the case of structure (a), whereas both (b) and (c) lead to
  structure (e). In particular, this shows that the conditional
  techniques may yield the same results for different causal
  structures. }
\label{fig:Pienaar_examples}
\end{figure}
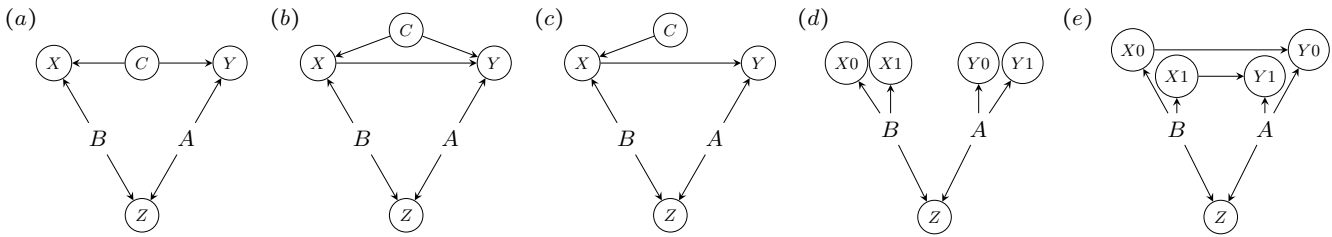
\begin{example}\label{example:pienaar_structure_analysis}
  Applying the post-selection technique for a binary random variable
  $C$ to the causal structure from
  Figure~\ref{fig:Pienaar_examples}(a) yields the effective causal
  structure \ref{fig:Pienaar_examples}(d). The latter can be analysed
  with the above entropy vector method, which leads to a cone that is
  characterised by $14$ extremal rays or equivalently in terms of $22$
  inequalities, both available in the appendix.  The
  inequalities $I(Z:X_{\rm \mid C=1})\geq 0$,
  $I(Z:Y_{\rm \mid C=0})\geq 0$,
  $I(X_{\rm \mid C=1}:Y_{\rm \mid C=1} \mid Z)\geq 0$ and
  $H(Z\mid X_{\rm \mid C=0}) \geq I(X_{\rm \mid C=1} Z:Y_{\rm \mid
    C=1})$, which are part of this description, imply
  \eqref{eq:pienaar_inequality} above. We are not aware of any quantum
  violations of these inequalities.
 
  Structures (b) and (c) both lead to the causal structure (e) upon
  post-selecting on a binary $C$. The latter causal structure turns
  out to be computationally harder to analyse with the entropy vector
  method and we have not been able to perform the corresponding
  marginalisation when taking all Shannon and independence constraints
  into account\footnote{We were working with conventional variable
    elimination software on a desktop computer}. Hence, the method
  outlined in~\cite{Pienaar2016} is a useful alternative here.
\end{example}

\subsection{Post-selection in quantum and general non-signalling causal structures} \label{sec:quantum_conditioning}
\label{sec:generalised_cone}
In causal structures with quantum and more general non-signalling
nodes, Lemma~\ref{lemma:justify_conditioning} is not valid.  For
instance, Bell's theorem can be recast as the statement that there are
distributions compatible with the quantum Bell scenario for which
there is no joint distribution of $X_{\rm \mid A=0}$,
$X_{\rm \mid A=1}$, $Y_{\rm \mid B=0}$ and $Y_{\rm \mid B=1}$ in the
post-selected causal structure (on $A$ and $B$) that has the required
marginals (in the sense of Lemma~\ref{lemma:altfine}).

Nonetheless, the post-selection technique has been generalised to such
scenarios~\cite{Chaves2015, Chaves2016}, i.e., it is still possible to
post-select on parentless observed (and therefore classical) nodes
taking specific values.  In such scenarios the observed variables can
be thought of as obtained from the unobserved resources by means of
measurements or tests. If a descendant of the variable that is
post-selected on has quantum or general non-signalling nodes as
parents, then the different instances of the latter node and of all
its descendants do not coexist (even if they are observed, hence
classical). This is because such observed variables are generated by
measuring a quantum or other non-signalling system. Such a system is
altered (or destroyed) in a measurement, hence does not allow for the
simultaneous generation of different instances of its children due to
the impossibility of cloning.

In the quantum case, this is reflected in the identification of the
coexisting sets in the post-selected causal
structure\footnote{Different instances of a variable after
  post-selection have to be seen as alternatives and not as
  simultaneous descendants of their parent node as the representation
  of the post-selected causal structure might suggest.}, as is
illustrated with the following example.

\begin{example}[Information causality scenario in the quantum case~\cite{Chaves2015}]\label{example:information_causality}
  The communication scenario used to derive the principle of
  information causality~\cite{Pawlowski2009} is based on the variation
  of the instrumental scenario displayed in
  Figure~\ref{fig:information_causality_DAG}(c). It has been analysed
  with the entropy vector method in Ref.~\cite{Chaves2015}, an
  analysis that is presented in the following.

  Conditioning on values of the variable $S$ is possible in the
  classical and quantum cases. However, whereas in the classical case
  the variables $Y_{\rm \mid S=s}$ for different $S$ share a joint
  distribution (cf.\ Lemma~\ref{lemma:justify_conditioning}), they do
  not coexist in the quantum case.  For binary $S$, the coexisting
  sets are $\left\{X_1,~X_2,~A_Z,~A_Y\right\}$,
  $\left\{X_1,~X_2,~Z,~A_Y\right\}$,
  $\left\{X_1,~X_2,~Z,~Y_{\rm \mid S=1}\right\}$ and
  $\left\{X_1,~X_2,~Z,~Y_{\rm \mid S=2}\right\}$. The only
  independence constraints in the quantum case are that $X_1$, $X_2$
  and $A_YA_Z$ are mutually independent. Marginalising until only
  entropies of $\{X_1, Y_{\rm \mid S=1}\}$,
  $\{X_2, Y_{\rm \mid S=2}\}$, $\{Z\}$ and their subsets remain,
  yields only one non-trivial inequality,
  ${\sum_{s=1}^{n} I(X_s : Y_{\rm \mid S=s}) \leq H(Z),}$ with
  $n=2$. \footnote{Note that this is slightly adapted
    from~\cite{Chaves2015} where they found
  $I(X_1 : Y_{\rm \mid S=1})+I(X_2 : Y_{\rm \mid S=2}) \leq
  H(Z)+I(X_1:X_2)$, as $X_1$ and $X_2$ were not assumed
  independent there. Furthermore, this is also the only inequality found in
  the classical case when restricting to this same marginal
  scenario~\cite{Chaves2016}.}  The same inequality was previously
  derived by Paw{\l}owski et al.\ for general $n$~\cite{Pawlowski2009},
  where the choice of marginals was inspired by the communication task
  considered. Subsequently, Ref.~\cite{Chaves2015} considered another
  marginal scenario --- the one with with coexisting sets
  $\{X_1,X_2,Z,Y_{\rm \mid S=1}\}$, $\{X_1,X_2,Z,Y_{\rm \mid S=2}\}$
  and all of their subsets --- which led to additional inequalities.
\end{example}

Similar considerations were applied to causal structures allowing for
general non-signalling resources, $C^{\gG}$ in~\cite{Chaves2016}. Let
$O=X_{\rm O}^{\uparrow} \cup X_{\rm O}^{\nuparrow} \cup X$ be the
disjoint union of its observed nodes, where $X_{\rm O}^{\uparrow}$ are
the observed descendants and $X_{\rm O}^{\nuparrow}$ the observed
non-descendants of $X$. If the variable $X$ takes values
$x \in \left\{1,~2,~\ldots,~n \right\}$, this leads to a joint
distribution of $X_{\rm O}^{\uparrow} \cup X_{\rm O}^{\nuparrow}$ for
each $X=x$, i.e., there is a joint distribution for
${P(X_{\rm O}^{\uparrow} X_{\rm O}^{\nuparrow} \mid X=x)=P(X_{\rm
    O}^{\uparrow} \mid X_{\rm O}^{\nuparrow} X=x)P(X_{\rm
    O}^{\nuparrow})}$
for all $x$, denoted
$P(X_{\rm O { \mid X=x}}^{\uparrow} X_{\rm O}^{\nuparrow})$.  Because
$X$ does not affect the distribution of the independent variables
$X_{\rm O}^{\nuparrow}$, the distributions
$P(X_{\rm O { \mid X=x}}^{\uparrow} X_{\rm O}^{\nuparrow})$ have
coinciding marginals on $X_{\rm O}^{\nuparrow}$, i.e.,
${P(X_{\rm O}^{\nuparrow})=\sum_sP(X_{\rm O { \mid X=x}}^{\uparrow}=s,
  X_{\rm O}^{\nuparrow})}$
for all $x$, where $s$ runs over the alphabet of
$X_{\rm O}^{\uparrow}$. This encodes no-signalling
constraints. \footnote{Note that there may be other constraints
  that arise from no-signalling. For instance
  Example~\ref{example:ns_ic} suggests further constraints for
  each $P(X_{\rm O { \mid X=x}}^{\uparrow} X_{\rm O}^{\nuparrow})$
  are implied by requiring non-signalling resources. The latter
  have to be found and added to the description separately.}

In terms of entropy, there are $n$ entropic cones, one for each
$P(X_{\rm O { \mid X=x}}^{\uparrow} X_{\rm O }^{\nuparrow})$ (which
each encode the independences among the observed variables). According
to the above, they are required to coincide on the entropies for
$X_{\rm O}^{\nuparrow}$ and on all of its subsets. These constraints
define a convex polyhedral cone that is an outer approximation to the
set of all entropy vectors achievable in the causal
structure. Whenever the distributions
$P(X_{\rm O { \mid X=x}}^{\uparrow} X_{\rm O }^{\nuparrow})$ involve
fewer than three variables and assuming that all constraints implied by
the causal structure and no-signalling have been taken into account\footnote{Note that it may not always be
obvious how to identify all relevant constraints (cf.\ the conjectured
constraints in Example~\ref{example:ns_ic}).},
this approximation is tight because
$\overline{\Gamma^{*}_3}=\Gamma_3$.

Several examples of the use of this technique can be found in
Ref.~\cite{Chaves2016}, including the original information causality
scenario (which we discuss in Example~\ref{example:ns_ic}) and
an entropic analogue of monogamy relations for Bell inequality
violations~\cite{Masanes2006,Pawlowski2009b}.

\begin{example}[Information causality scenario in general
  non-signalling theories]\label{example:ns_ic}
  This is related to Example~\ref{example:information_causality} above
  and reproduces an analysis from~\cite{Chaves2016}. In this marginal
  scenario we consider the Shannon cones for the three sets
  $\left\{X_1,Y_{\rm\mid S=1}\right\}$,
  $\left\{X_2,Y_{\rm\mid S=2}\right\}$ and $\left\{Z\right\}$ as well
  as the constraints $I(X_1:Y_{\rm\mid S=1})\leq H(Z)$ and
  $I(X_2:Y_{\rm\mid S=2})\leq H(Z)$ which are conjectured to
  hold~\cite{Chaves2016}. (This conjecture is based on an argument
  in~\cite{Popescu2014} that covers a special case; we are not aware
  of a general proof.)

  These conditions constrain a polyhedral cone of vectors
  $(H(X_1),~H(X_2),~H(Z),~H(Y_{\rm\mid S=1}),$ $H(Y_{\rm\mid S=2}),~H(X_1Y_{\rm\mid S=1}),~H(X_2Y_{\rm\mid S=2}))$, with
  $8$ extremal rays that are all achievable using
  PR-boxes~\cite{Tsirelson1993,Popescu1994FP}.  Importantly, the
  stronger constraint
  $I(X_1:Y_{\rm \mid S=1})+I(X_2:Y_{\rm \mid S=2})\leq H(Z)$, which
  holds in the quantum case, (cf.\
  Example~\ref{example:information_causality}) does not hold here.
\end{example}

\section{Alternative techniques} \label{sec:further_techniques}
Instead of relaxing the problem of characterising the set of
probability distributions compatible with a causal structure by
considering entropy vectors, other computational techniques are
currently being developed. In the following, we give a brief overview
of these methods.

In this context, note also that there are methods that allow
certification that the only restrictions implied by a causal structure
are the conditional independence constraints among the observed
variables~\cite{Henson2014}, as well as procedures to show that the
opposite is the case~\cite{Evans2012, Evans2015}.  Such methods may
(when applicable) indicate whether a causal structure should be
analysed further (corresponding techniques are reviewed
in~\cite{Pienaar2016}).

\subsection{Entropy vectors for other entropy measures}\label{sec:renyi_cone}
Entropy vectors may be computed in terms of other entropy measures,
for instance in terms of the $\alpha$-R\'{e}nyi
entropies~\cite{Renyi1960_MeasOfEntrAndInf}. For a quantum state
$\rho_\mathrm{X}$, the $\alpha$-R\'{e}nyi entropy is
$H_{\alpha}(X):=\frac{1}{1-\alpha} \log \operatorname{tr}
\rho_\mathrm{X }^{\alpha},$ for $\alpha\in(0,\infty)\setminus\{1\}$,
the cases $\alpha=0,1,\infty$ are defined via the relevant limits
(note that $H_1(X)=H(X)$). \footnote{Classical $\alpha$-R\'{e}nyi entropies are
included in this definition when considering diagonal states.}

One may expect that useful constraints on the compatible distributions
can be derived from such entropy vectors.  For $0<\alpha<1$ and
$\alpha>1$ such constraints were analysed in~\cite{Linden2013a}. In
the classical case positivity and monotonicity are the only linear
constraints on the corresponding entropy vectors for any
$\alpha\neq 0,1$.  For multi-party quantum states monotonicity does
not hold for any $\alpha$, like in the case of the von Neumann
entropy. For $0<\alpha<1$, there are no constraints on the allowed
entropy vectors except for positivity, whereas for $\alpha>1$ there
are constraints, but these are non-linear.  The lack of further linear
inequalities that generally hold limits the usefulness of entropy
vectors using $\alpha$-R\'enyi entropies for analysing causal
structures. To our knowledge it is not known how or whether non-linear
inequalities for R\'{e}nyi entropies may be employed for this task.
The case $\alpha=0$, where
${H_0(X)= \log \operatorname{rank}\rho_\mathrm{X},}$, has been
considered separately in~\cite{Cadney2012a}, where it was shown that
further linear inequalities hold. However, only bi-partitions of the
parties were considered and the generalisation to full entropy vectors
is still to be explored.

The above considerations do not mention conditional entropy and hence
could be taken with the definition
${H_{\alpha}(X \mid Y)}:=H_{\alpha}(X Y)-H_{\alpha}(Y)$.  Alternatively,
one may consider a definition of the R\'{e}nyi conditional entropy,
for which
$H_\alpha(X|YZ) \leq H_\alpha(X|Y)$~\cite{Petz,TCR,MDSFT,FL,Beigi}.
With the latter definition, the conditional R\'{e}nyi entropy cannot
be expressed as a difference of unconditional entropies, and so to use
entropy vectors we would need to consider the conditional entropies as
separate components. Along these lines, one may also think about
combining R\'{e}nyi entropies for different values of $\alpha$ and to
use appropriate chain rules~\cite{Dupuis2015}. Because of the large
increase in the number of variables compared to the number of
constraints it is not clear that this will yield useful new
conditions.

\subsection{Polynomial restrictions on compatible distributions}
The probabilistic characterisation of causal structures, depends (in
general) on the dimensionality of the observed
variables. Computational hardness results suggest that a full
characterisation is unlikely to be feasible, except in small
cases~\cite{Pitowsky1991,Avis2004}.  Recent progress has been made
with the development of procedures to construct polynomial Bell
inequalities. A method that resorts to linear programming
techniques~\cite{Chaves2015a} has lead to the derivation of new
inequalities for the bilocality scenario (as well as a related
four-party scenario).  Another, iterative procedure allows for
enlarging networks by adding a party to a network in a particular
way\footnote{Here, adding a party means adding one observed input and
  one observed output node as well as an unobserved parent for the
  output, the latter may causally influence one other output random
  variable in the network.}. This allows for the constructions of
non-linear inequalities for the latter, enlarged network from
inequalities that are valid for the former~\cite{Rosset2015}.

Furthermore, a recent approach relies on considering enlarged
networks, so called inflations, and inferring causal constraints from
those~\cite{Wolfe2016, Navascues2017}. Inflated networks may contain
several copies of a variable that each have the same dependencies on
ancestors (the latter may also exist in several instances) and which
share the same distributions with their originals. Such inflations
allow for the derivation of probabilistic inequalities that restrict
the set of compatible distributions.  These ideas also bear some
resemblance to the procedures in~\cite{Kela2017}, in the sense that
they employ the idea that certain marginal distributions may be
obtained from different networks; they are, however, much more focused
on causal structures featuring interesting independence constraints.
Inflations allowed the authors of~\cite{Wolfe2016} to refute certain
distributions as incompatible with the triangle causal structure from
Figure~\ref{fig:instrumental}(c), in particular the so called
W-distribution which could neither be proven to be incompatible
entropically nor with the covariance matrix approach below.

\subsection{Semidefinite tests relying on covariance matrices}
One may look for mappings of the distribution of a set of observed
variables that encode causal structure beyond considering
entropies. For causal structures with two generations, i.e., one
generation of unobserved variables as ancestors of one generation of
observed nodes, a technique has been found using covariance
matrices~\cite{Kela2017}.  Each observed variable is mapped to a
vector-valued random variable and the covariance matrix of the direct
sum of these variables is considered. Due to the law of total
expectation, this matrix allows for a certain decomposition depending
on the causal structure.  For a particular observed distribution and
its covariance matrix, the existence of such a decomposition may be
tested via semidefinite programming. The relation of this technique to
the entropy vector method is not yet well understood. A partial
analysis considering several examples is given in Section X of
\cite{Kela2017}.

\section{Open Problems} \label{sec:conclusion} The entropy vector
approach has led to many certificates for the incompatibility of
correlations with causal structures. However, we are still
lacking a general understanding of how well entropic relations can
approximate the set of achievable correlations. Firstly, the
non-injective mapping from probabilities to entropies is not
sufficiently understood and secondly, the current methods employ
further approximations, e.g.\ by restricting the number of non-Shannon
inequalities that can be considered at a time. It is as yet unknown
whether the entropy vector method (without post-selection) can ever
distinguish correlations that arise from classical, quantum and more
general non-signalling resources. Such insights may also inform the
question of whether there exist novel inequalities for the von Neumann
entropy of multi-party quantum states.

The post-selection technique allows for the derivation of additional
constraints that may distinguish quantum from classically achievable
correlations in the Bell scenario and possibly in other
examples. However, the method relies on the causal structure featuring
parentless observed nodes, hence it is not always applicable (see
e.g.\ the triangle scenario).  In such situations, one may try to
combine the entropic techniques reviewed here with the inflation
method~\cite{Wolfe2016}, which might allow for further entropic
analysis of several causal structures, e.g., of the triangle scenario.

Criteria to certify whether a set of entropic constraints is able to
detect non-classical correlations are currently not available. For
many of the established entropic constraints on classical causal
structures it is unknown whether or not they are also valid for the
corresponding quantum structure.  In the case of the Bell scenario
this problem has been overcome. It has been shown that the known
entropic constraints are even sufficient for detecting any
non-classical correlations~\cite{Chaves2013}. However, since the proof
is specific to the scenario, finding a systematic tool to analyse the
scope of the entropic techniques remains open.

\acknowledgments
We thank Rafael Chaves and Costantino Budroni for confirming details of~\cite{Chaves2016}.
RC is supported by the EPSRC's Quantum Communications Hub
(grant no.\ EP/M013472/1) and by an EPSRC First Grant (grant no.\
EP/P016588/1).


\appendix

\section{Entropy inequalities for Example~10}

In the following we provide the basic inequalities for the quantum instrumental scenario, $IC^{\rm Q}$, i.e., the the constraints making up the matrix $M_B(IC^{\rm Q})$. 
\begin{eqnarray}
I(A_Y:A_Z)&\geq&0 \\
I(A_Y:X)&\geq&0 \\ 
I(A_Z:X)&\geq&0 \\ 
I(A_Y:A_Z|X)&\geq&0 \\ 
I(A_Y:X|A_Z)&\geq&0 \\ 
I(A_Z:X|A_Y)&\geq&0 \\ 
I(A_Y:Z)&\geq&0 \\ 
I(X:Z)&\geq&0 \\ 
I(X:Z|A_Y)&\geq&0 \\ 
I(A_Y:Z|X)&\geq&0 \\ 
I(A_Y:X|Z)&\geq&0 \\ 
I(X:Y)&\geq&0 \\ 
I(Y:Z)&\geq&0 \\ 
I(Y:Z|X)&\geq&0 \\ 
I(X:Z|Y)&\geq&0 \\ 
I(X:Y|Z)&\geq&0 \\ 
H(A_Z|X)&\geq&0 \\ 
H(A_Y A_Z|X)&\geq&0 \\ 
H(X|A_Y A_Z)&\geq&0 \\
H(A_Y|XZ)&\geq&0 \\ 
H(X|A_YZ)&\geq&0 \\ 
H(Z|A_YX)&\geq&0 \\
H(X|YZ)&\geq&0 \\
H(Y|XZ)&\geq&0 \\ 
H(Z|XY)&\geq&0 \\ 
H(A_Z|A_Y)+H(A_Z|X) &\geq&0 \\ 
H(A_Y|A_Z)+H(A_Y|X) &\geq&0 \\ 
H(A_Z|A_YX)+H(A_Z) &\geq&0 \\ 
H(A_Y|A_ZX)+H(A_Y) &\geq&0 
\end{eqnarray}
Independence constraints and data processing inequalities are provided
in the main text.  If we include these and remove redundant
inequalities we obtain the following set of constraints, which for
convenience we give in matrix form (such that $\Gamma(IC^{{\rm Q}})=\{v\in\mathbb{R}^{15}_{\geq
  0}\mid M\cdot v\geq 0\}$):
  
$$M=\left(
\begin{array}{ccccccccccccccc}
 0 & 0 & 0 & 0 & 0 & 0 & 0 & 1 & 0 & 0 & 0 & -1 & 0 & -1 & 1 \\
 0 & 0 & 0 & 0 & 0 & 0 & 0 & 0 & 1 & 0 & -1 & 0 & -1 & 1 & 0 \\
 0 & 0 & -1 & 0 & 0 & -1 & 0 & 0 & 0 & 0 & 0 & 0 & 1 & 0 & 0 \\
 0 & 0 & 0 & 0 & -1 & 0 & 0 & 0 & 0 & 0 & 1 & 1 & 0 & 0 & -1 \\
 0 & 0 & 0 & -1 & 0 & 0 & 0 & 0 & 0 & 1 & 0 & 1 & 0 & 0 & -1 \\
 0 & 0 & -1 & 0 & 0 & 0 & 0 & 0 & 0 & 1 & 1 & 0 & 0 & 0 & -1 \\
 0 & 0 & 0 & 1 & 1 & 0 & 0 & 0 & 0 & 0 & 0 & -1 & 0 & 0 & 0 \\
 0 & 0 & 1 & 0 & 1 & 0 & 0 & 0 & 0 & 0 & -1 & 0 & 0 & 0 & 0 \\
 0 & 0 & 1 & 1 & 0 & 0 & 0 & 0 & 0 & -1 & 0 & 0 & 0 & 0 & 0 \\
 1 & 0 & 0 & 0 & 1 & 0 & 0 & -1 & 0 & 0 & 0 & 0 & 0 & 0 & 0 \\
 1 & 0 & 1 & 0 & 0 & 0 & -1 & 0 & 0 & 0 & 0 & 0 & 0 & 0 & 0 \\
 0 & -1 & 0 & 0 & 0 & 1 & 0 & 0 & 1 & 0 & 0 & 0 & -1 & 0 & 0 \\
 -1 & 0 & 0 & 0 & 0 & 1 & 1 & 0 & 0 & 0 & 0 & 0 & -1 & 0 & 0 \\
 0 & 1 & 1 & 0 & 0 & 0 & 0 & 0 & -1 & 0 & 0 & 0 & 0 & 0 & 0 \\
 0 & 0 & 0 & 0 & 0 & 0 & 0 & 0 & 0 & -1 & 0 & 0 & 0 & 0 & 1 \\
 0 & 0 & 0 & 0 & 0 & 0 & 0 & 0 & 0 & 0 & -1 & 0 & 0 & 0 & 1 \\
 0 & 0 & 0 & 0 & 0 & 0 & -1 & 0 & 0 & 0 & 0 & 0 & 0 & 1 & 0 \\
 0 & 0 & 0 & 0 & 0 & 0 & 0 & -1 & 0 & 0 & 0 & 0 & 0 & 1 & 0 \\
 0 & 0 & 0 & 0 & 0 & 0 & 0 & 0 & 0 & 0 & -1 & 0 & 0 & 1 & 0 \\
 1 & 0 & 0 & 0 & 0 & 0 & 0 & 0 & -1 & 0 & 0 & 0 & 1 & 0 & 0 \\
 0 & 1 & 0 & 0 & 0 & 0 & -1 & 0 & 0 & 0 & 0 & 0 & 1 & 0 & 0 
\end{array}
\right)$$

\section{Entropy inequalities for Example~16}
The causal structure of Figure~\ref{fig:structure15}, previously considered in~\cite{Henson2014, Pienaar2016} is analysed entropically by means of the entropic post-selection technique.
\begin{figure}
\centering
\resizebox{0.7\columnwidth}{!}{%
\begin{tikzpicture}[scale=0.72]
\node (0) at (-2.75,3) {$(a)$};
\node[draw=black,circle,scale=0.75] (X) at (-2,2) {$X$};
\node[draw=black,circle,scale=0.75] (Y) at (2,2) {$Y$};
\node[draw=black,circle,scale=0.75] (Z) at (0,-1.46) {$Z$};
\node (A) at (1,0.28) {$A$};
\node (B) at (-1,0.28) {$B$};
\node[draw=black,circle,scale=0.75] (C) at (0,2) {$C$};

\draw [->,>=stealth] (A)--(Y); 
\draw [->,>=stealth] (A)--(Z); 
\draw [->,>=stealth] (B)--(X); 
\draw [->,>=stealth] (B)--(Z); 
\draw [->,>=stealth] (C)--(X); 
\draw [->,>=stealth] (C)--(Y); 

\node (X0) at (5.25,3) {$(b)$};
\node[draw=black,circle,scale=0.75] (XX0) at (6,2) {$X0$};
\node[draw=black,circle,scale=0.75] (XX1) at (7,2) {$X1$};
\node[draw=black,circle,scale=0.75] (XY0) at (9,2) {$Y0$};
\node[draw=black,circle,scale=0.75] (XY1) at (10,2) {$Y1$};
\node[draw=black,circle,scale=0.75] (XZ) at (8,-1.5) {$Z$};
\node (XA) at (9,0.5) {$A$};
\node (XB) at (7,0.5) {$B$};

\draw [->,>=stealth] (XA)--(XY0); 
\draw [->,>=stealth] (XA)--(XY1);
\draw [->,>=stealth] (XA)--(XZ); 
\draw [->,>=stealth] (XB)--(XX0); 
\draw [->,>=stealth] (XB)--(XX1); 
\draw [->,>=stealth] (XB)--(XZ); 

\end{tikzpicture}
}%
\caption{Post-selecting (a) on a binary observed variable $C$ leads to the causal structure (b).}
\label{fig:structure15}
\end{figure}
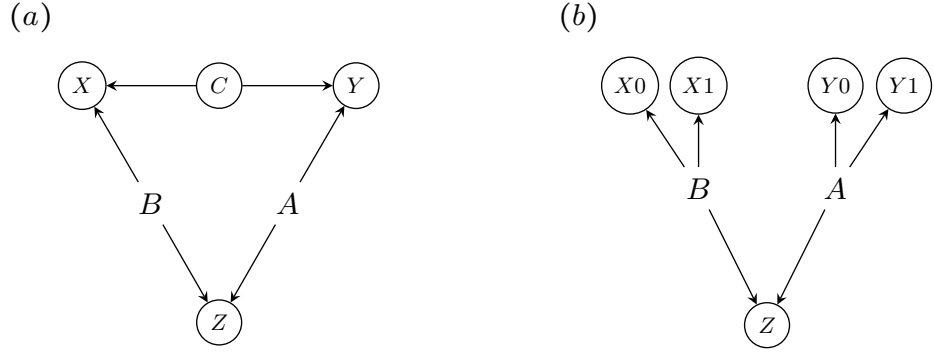
The outer approximation to the entropic cone of the causal structure of Figure~\ref{fig:structure15} is computed and marginalised to vectors \begin{multline*}\left(H(Z),~H(X_{\rm \mid C=0}),~H(X_{\rm \mid C=1}),~H(Y_{\rm \mid C=0}),~H(Y_{\rm \mid C=1}),~H(X_{\rm \mid C=0}Z),~H(X_{\rm \mid C=1}Z),~H(Y_{\rm \mid C=0}Z), \right. \\ \left.~H(Y_{\rm \mid C=1}Z),~H(X_{\rm \mid C=0}Y_{\rm \mid C=0}),~H(X_{\rm \mid C=1}Y_{\rm \mid C=1}),~H(X_{\rm \mid C=0}Y_{\rm \mid C=0}Z),~H(X_{\rm \mid C=1}Y_{\rm \mid C=1}Z) \right).\end{multline*} 
The following $14$ extremal rays are obtained from this computation, where each ray is represented by one particular vector on it. The tip of this pointed polyhedral cone is the zero-vector. 

$$\arraycolsep=1.2pt 
\begin{array}{cccccccccccccccc}
(  1) &&& 0 & 0 & 0 & 0 & 1 & 0 & 0 & 0 & 1 & 0 & 1 & 0 & 1  \\
(  2) &&& 0 & 0 & 0 & 1 & 0 & 0 & 0 & 1 & 0 & 1 & 0 & 1 & 0  \\
(  3) &&& 0 & 0 & 1 & 0 & 0 & 0 & 1 & 0 & 0 & 0 & 1 & 0 & 1  \\
(  4) &&& 0 & 1 & 0 & 0 & 0 & 1 & 0 & 0 & 0 & 1 & 0 & 1 & 0  \\
(  5) &&& 1 & 1 & 1 & 1 & 1 & 2 & 2 & 2 & 2 & 2 & 2 & 2 & 2  \\
(  6) &&& 1 & 0 & 1 & 0 & 1 & 1 & 2 & 1 & 2 & 0 & 2 & 1 & 2  \\
(  7) &&& 1 & 1 & 0 & 1 & 0 & 2 & 1 & 2 & 1 & 2 & 0 & 2 & 1  \\
(  8) &&& 1 & 0 & 0 & 0 & 0 & 1 & 1 & 1 & 1 & 0 & 0 & 1 & 1  \\
(  9) &&& 1 & 0 & 0 & 0 & 1 & 1 & 1 & 1 & 1 & 0 & 1 & 1 & 1  \\
( 10) &&& 1 & 0 & 0 & 1 & 0 & 1 & 1 & 1 & 1 & 1 & 0 & 1 & 1  \\
( 11) &&& 1 & 0 & 1 & 0 & 0 & 1 & 1 & 1 & 1 & 0 & 1 & 1 & 1  \\
( 12) &&& 1 & 1 & 0 & 0 & 0 & 1 & 1 & 1 & 1 & 1 & 0 & 1 & 1  \\
( 13) &&& 1 & 0 & 0 & 1 & 1 & 1 & 1 & 1 & 1 & 1 & 1 & 1 & 1  \\
( 14) &&& 1 & 1 & 1 & 0 & 0 & 1 & 1 & 1 & 1 & 1 & 1 & 1 & 1  
\end{array}
$$

\setcounter{equation}{0}

The corresponding inequality description is given by the $2$
equalities and $18$ inequalities (or equivalently $22$ inequalities if
each equality is written as two inequalities).
\begin{eqnarray}
H(X_{\rm \mid C=0})+ H(Y_{\rm \mid C=0})&=&H(X_{\rm \mid C=0}Y_{\rm \mid C=0})\\
H(X_{\rm \mid C=1})+ H(Y_{\rm \mid C=1})&=&H(X_{\rm \mid C=1}Y_{\rm \mid C=1})\\
H(X_{\rm \mid C=1}Y_{\rm \mid C=1})&\leq&H(X_{\rm \mid C=1}Y_{\rm \mid C=1} Z)\\
H(X_{\rm \mid C=0}Y_{\rm \mid C=0})&\leq&H(X_{\rm \mid C=0}Y_{\rm \mid C=0} Z)\\
H(Y_{\rm \mid C=1}Z)&\leq&H(X_{\rm \mid C=1}Y_{\rm \mid C=1} Z)\\
H(Y_{\rm \mid C=0}Z)&\leq&H(X_{\rm \mid C=0}Y_{\rm \mid C=0} Z)\\
H(X_{\rm \mid C=1}Z)&\leq&H(X_{\rm \mid C=1}Y_{\rm \mid C=1} Z)\\
H(X_{\rm \mid C=0}Z)&\leq&H(X_{\rm \mid C=0}Y_{\rm \mid C=0} Z)\\
H(Y_{\rm \mid C=0}Z)&\leq&H(Z) + H(Y_{\rm \mid C=0})\\
H(Y_{\rm \mid C=1}Z)&\leq&H(Z) + H(Y_{\rm \mid C=1})\\
H(Y_{\rm \mid C=1}) + H(X_{\rm \mid C=1}Z) &\leq&H(Z) + H(X_{\rm \mid C=1}Y_{\rm \mid C=1})\\
H(Y_{\rm \mid C=0}) + H(X_{\rm \mid C=0}Z) &\leq&H(Z) + H(X_{\rm \mid C=0}Y_{\rm \mid C=0})\\
H(Z) + H(X_{\rm \mid C=0}Y_{\rm \mid C=0} Z)&\leq& H(X_{\rm \mid C=0}Z)+ H(Y_{\rm \mid C=0}Z) \\
H(Z) + H(X_{\rm \mid C=1}Y_{\rm \mid C=1} Z)&\leq& H(X_{\rm \mid C=1}Z)+ H(Y_{\rm \mid C=1}Z) \\
H(Y_{\rm \mid C=1}) + H(X_{\rm \mid C=1}Z)+ H(X_{\rm \mid C=0}Y_{\rm \mid C=0}) &\leq&H(Y_{\rm \mid C=0}) + H(X_{\rm \mid C=0}Z)+H(X_{\rm \mid C=1}Y_{\rm \mid C=1}Z)\\
H(Y_{\rm \mid C=1}) + H(Y_{\rm \mid C=0}Z)+ H(X_{\rm \mid C=0}Y_{\rm \mid C=0}) &\leq&H(Y_{\rm \mid C=0}) + H(Y_{\rm \mid C=1}Z)+H(X_{\rm \mid C=0}Y_{\rm \mid C=0}Z)\\
H(X_{\rm \mid C=1} Z)+H(Y_{\rm \mid C=1} Z)+H(X_{\rm \mid C=0} Y_{\rm \mid C=0})&\leq&H(X_{\rm \mid C=0} Z)+H(Y_{\rm \mid C=0} Z)+H(X_{\rm \mid C=1}Y_{\rm \mid C=1}Z)\\
H(X_{\rm \mid C=0} Z)+H(Y_{\rm \mid C=0} Z)+H(X_{\rm \mid C=1} Y_{\rm \mid C=1})&\leq&H(X_{\rm \mid C=1} Z)+H(Y_{\rm \mid C=1} Z) +H(X_{\rm \mid C=0}Y_{\rm \mid C=0}Z)\\
H(Y_{\rm \mid C=0})+H(Y_{\rm \mid C=1} Z)+H(X_{\rm \mid C=1}Y_{\rm \mid C=1})&\leq&H(Y_{\rm \mid C=1})+H(Y_{\rm \mid C=0} Z)+H(X_{\rm \mid C=1}Y_{\rm \mid C=1}Z) \\
H(Y_{\rm \mid C=0})+H(X_{\rm \mid C=0} Z)+H(X_{\rm \mid C=1}Y_{\rm \mid C=1})&\leq&H(Y_{\rm \mid C=1})+H(X_{\rm \mid C=1} Z)+H(X_{\rm \mid C=0}Y_{\rm \mid C=0}Z). 
\end{eqnarray}

\end{document}